\apptocmd{\sloppy}{\hbadness 10000\relax}{}{} 
\pgfplotsset{compat=newest, compat/show suggested version=false} 
\tikzstyle{hbox}=[draw=black, shape=rectangle, fill=yellow, minimum size=.55em, inner sep=0.15em, scale=0.85, font={\scriptsize}]
\tikzstyle{box}=[draw=black, shape=rectangle, fill=white, minimum size=1em, inner sep=0.2em, scale=0.85, font={\scriptsize}]
\tikzstyle{gn}=[draw=black, shape=circle, fill={zx_green}, draw=black, inner sep=0.7mm, minimum width=0pt, minimum height=0pt, tikzit fill={rgb,255: red,181; green,215; blue,181}]
\tikzstyle{rn}=[gn, fill={zx_red}, draw=black, tikzit fill={rgb,255: red,215; green,96; blue,96}]
\tikzstyle{wn}=[gn, fill=white, draw=black]
\tikzstyle{gn_phase}=[shape=rectangle, fill={zx_green}, draw=black, minimum size=1.2em, rounded corners=0.48em, inner sep=0.2em, outer sep=-0.2em, scale=0.8, font={\footnotesize}, tikzit shape=circle, tikzit fill={rgb,255: red,181; green,215; blue,181}]
\tikzstyle{rn_phase}=[{gn_phase}, fill={zx_red}, draw=black, tikzit fill={rgb,255: red,215; green,96; blue,96}]
\tikzstyle{phase}=[draw=black, shape=rectangle, fill=white, minimum size=.95em, inner sep=0.1em, scale=0.85, font={\scriptsize}]
\tikzstyle{phaseCircle}=[{gn_phase}, draw=black, fill=white]
\tikzstyle{black node}=[draw=black, shape=circle, scale=0.3, fill=black, font={\footnotesize}]
\tikzstyle{rtriang}=[shape=isosceles triangle, fill={gray!50}, draw=black, isosceles triangle stretches=true, inner sep=0.8pt, minimum width=.5mm, minimum height=.5mm]
\tikzstyle{ltriang}=[rtriang, shape=isosceles triangle, fill={gray!50}, draw=black, shape border rotate=180]
\tikzstyle{beamSplitter}=[draw=black, shape=rectangle, fill=white, minimum width=3mm, minimum height=.2mm, inner sep=.3mm]
\tikzstyle{sLabel}=[font={\scriptsize}, auto]
\tikzstyle{midArrow}=[postaction=decorate, decoration={markings, mark=at position 0.5 with {\arrow{stealth[length=6.4pt, sep=-2pt]}}}]
\tikzstyle{N}=[-, line width=0.7pt]
\tikzstyle{NStream}=[-, N, midArrow]
\tikzstyle{LOWire}=[-, very thin, tikzit draw={rgb,255: red,134; green,42; blue,43}]
\tikzstyle{NLO}=[-, line width=.6pt, tikzit draw={rgb,255: red,117; green,44; blue,83}]
\tikzstyle{LOStream}=[-, NLO, midArrow, tikzit draw={rgb,255: red,139; green,43; blue,126}]
\tikzstyle{DRWire}=[-, LOWire, double, double distance=.5pt, tikzit draw={rgb,255: red,24; green,33; blue,119}]
\tikzstyle{DRThick}=[-, DRWire, NLO, double distance=.5pt, tikzit draw={rgb,255: red,60; green,108; blue,166}]
\tikzstyle{DRStream}=[-, DRThick, midArrow, tikzit draw={rgb,255: red,0; green,175; blue,175}]
\tikzstyle{dashedE}=[-, dashed]
\tikzstyle{hadamard edge}=[-, dashed, draw=blue]
\tikzstyle{braceedge}=[-, decorate, decoration={brace, amplitude=2mm, raise=-1mm}]
\tikzstyle{arrow}=[<-]
\tikzstyle{MixedDR}=[-, tikzit draw={rgb,255: red,255; green,128; blue,0}, decorate, decoration=square wave dual rail, very thin, draw={rgb,255: red,255; green,128; blue,0}]
\tikzstyle{MixedQubit}=[-, tikzit draw={rgb,255: red,160; green,100; blue,219}, decorate, decoration=square wave qubit, very thin, draw={rgb,255: red,160; green,100; blue,219}]
\tikzstyle{ClWire}=[-, very thin, densely dashed, tikzit draw={rgb,255: red,255; green,191; blue,191}]
\tikzstyle{filllayer}=[-, fill={rgb,255: red,255; green,235; blue,0}, opacity=1]
\tikzstyle{filllayerwhite}=[-, filllayer, opacity=0.3, fill=white]
\tikzstyle{FLX}=[-, filllayer, fill={rgb,255: red,255; green,143; blue,143}, opacity=0.5]
\tikzstyle{FLY}=[-, filllayer, fill={rgb,255: red,255; green,235; blue,133}, opacity=0.5]
\tikzstyle{FLZ}=[-, filllayer, fill={rgb,255: red,181; green,210; blue,255}, opacity=0.5]
\tikzstyle{FLNE}=[-, pattern={Lines[angle=45,distance=10pt,yshift=2pt,xshift=-.5pt]}, pattern color=gray]
\tikzstyle{FLNW}=[-, pattern={Lines[angle=-45,distance=10pt,yshift=5pt,xshift=1pt]}, pattern color=gray]
\tikzstyle{FLG}=[-, filllayer, fill={rgb,255: red,212; green,212; blue,212}]
\tikzstyle{FLP}=[-, filllayer, fill={rgb,255: red,255; green,243; blue,192}]
\definecolor{zx_grey}{RGB}{211,211,211}
\definecolor{zx_red}{RGB}{232,165,165}
\definecolor{zx_green}{RGB}{216,248,216}
\newcommand{\N}{\mathbb{N}}
\renewcommand{\C}{\mathbb{C}}}
\newcommand{\C}{\mathbb{C}}}
\newcommand{\minu}{\texttt{-}}
\newcommand{\plus}{\texttt{+}}
\newcommand{\interp}[1]{\left\llbracket#1\right\rrbracket}
\newcommand{\bR}{\begin{color}{red}}
\newcommand{\bB}{\begin{color}{blue}}
\newcommand{\bM}{\begin{color}{magenta}}
\newcommand{\bC}{\begin{color}{cyan}}
\newcommand{\bW}{\begin{color}{white}}
\newcommand{\bBl}{\begin{color}{black}}
\newcommand{\bG}{\begin{color}{green}}
\newcommand{\bY}{\begin{color}{yellow}}
\newcommand{\e}{\end{color}}
\newcommand{\tikzrefsize}[1]{\tiny{#1}}
\newcommand{\lemref}[1]{\tikzrefsize{\textsc{(Lem.~\ref{#1})}}}
\newcommand{\equref}[1]{\tikzrefsize{\textsc{(Eq.~\ref{#1})}}}
\newtheorem{theorem}{Theorem}[section]
\newtheorem{proposition}[theorem]{Proposition}
\newtheorem{lemma}[theorem]{Lemma}
\newtheorem{corollary}[theorem]{Corollary}
\theoremstyle{definition}
\newtheorem{definition}[theorem]{Definition}
\newtheorem{example}[theorem]{Example}
\theoremstyle{remark}
\newtheorem{remark}[theorem]{Remark}
\newcommand{\sub}{\subseteq}
\renewcommand{\tt}[1]{\mathtt{#1}}
\renewcommand{\bf}[1]{\ensuremath{\mathbf{#1}}}
\renewcommand{\cal}[1]{\mathcal{#1}}
\renewcommand{\phi}{\varphi}
\newcommand{\Odd}{\texttt{Odd}}
\newcommand{\comp}[1]{\overline{#1}}
\newcommand{\cvar}[1]{\underline{#1}}
\newcommand{\ccvar}[1]{#1}
\newcommand{\XY}{\normalfont XY\xspace}
\newcommand{\XZ}{\normalfont XZ\xspace}
\newcommand{\YZ}{\normalfont YZ\xspace}
\newcommand{\X}{\normalfont X\xspace}
\newcommand{\Z}{\normalfont Z\xspace}
\newcommand{\XYm}{\ensuremath\normalfont\textrm{XY}\xspace}
\newcommand{\XZm}{\normalfont\normalfont\textrm{XZ}\xspace}
\newcommand{\YZm}{\normalfont\normalfont\textrm{YZ}\xspace}
\newcommand{\Xm}{\ensuremath\normalfont\textrm{X}\xspace}
\newcommand{\Ym}{\normalfont\normalfont\textrm{Y}\xspace}
\newcommand{\Zm}{\normalfont\normalfont\textrm{Z}\xspace}
\newcommand{\Fusion}{\tiny{(\textsc{Spider})}}
\newcommand{\Bigebra}{\tiny{(\textsc{Bigebra})}}
\newcommand{\ZElim}{\tiny{(\textsc{Z-Elim})}}
\newcommand{\XElim}{\tiny{(\textsc{X-Elim})}}
\newcommand{\Colour}{\tiny{(\textsc{Color})}}
\newcommand{\Copy}{\tiny{(\textsc{Copy})}}
\newcommand{\One}{\tiny{(\textsc{One})}}
\newcommand{\Euler}{\tiny{(\textsc{Euler})}}
\newcommand{\PiCommute}{\tiny{($\pi$)}}
\begin{document}
\title{A dataflow programming framework for linear optical distributed quantum computing}

\author[1]{Giovanni de Felice}
\thanks{\href{mailto:giodefelice@protonmail.com}{giodefelice@protonmail.com}}
\author[1,2]{Boldizsár Poór}
\author[3]{Cole Comfort}
\thanks{this work has been partially funded by the European Union through the MSCA SE project QCOMICAL and within the framework of ``Plan France
2030'', under the research projects EPIQ ANR-22-PETQ-0007 and HQI-R\&D ANR-22-PNCQ-0002.}
\author[4]{Lia Yeh}
\thanks{this work was done while at Quantinuum and University of Oxford.}
\author[1]{Mateusz Kupper}
\author[2]{William Cashman}
\author[5, 6]{Bob Coecke}
\affil[1]{Quantinuum, 17 Beaumont Street, Oxford, OX1 2NA, United Kingdom}
\affil[2]{University of Oxford, Department of Computer Science, Oxford, OX1 3QD, United Kingdom}
\affil[3]{Universit\'e Paris-Saclay, CNRS, ENS Paris-Saclay, Inria, Laboratoire M\'ethodes Formelles, 91190, Gif-sur-Yvette, France}
\affil[4]{University of Cambridge, Department of Computer Science and Technology, Cambridge, CB3 0FD, United Kingdom}
\affil[5]{Wolfson college, Linton Road, Oxford, OX2 6UD, United Kingdom}
\affil[6]{Perimeter Institute for Theoretical Physics, Waterloo, ON N2L 2Y5, Canada}

\maketitle

\begin{abstract}
    Photonic systems offer a promising platform for interconnecting quantum processors and enabling scalable, networked architectures. 
Designing and verifying such architectures requires a unified formalism that integrates linear algebraic reasoning with probabilistic and control-flow structures. 
In this work, we introduce a graphical framework for distributed quantum computing that brings together linear optics, the ZX-calculus, and dataflow programming. 
Our language supports the formal analysis and optimization of distributed protocols involving both qubits and photonic modes, 
with explicit interfaces for classical control and feedforward, all expressed within a synchronous dataflow model with discrete-time dynamics. 
Within this setting, we classify entangling photonic fusion measurements, show how their induced Pauli errors can be corrected via a novel flow structure for fusion networks, 
and establish correctness proofs for new repeat-until-success protocols enabling arbitrary fusions.
Layer by layer, we construct qubit architectures incorporating practical optical components such as beam splitters, switches, and photon sources, 
with graphical proofs that they are deterministic and support universal quantum computation.
Together, these results establish a foundation for verifiable compilation and automated optimization in networked quantum computing.

\end{abstract}
\tableofcontents

\section{Introduction}\label{sec:introduction}

Photonic systems are a leading platform for interconnecting modular quantum hardware into distributed networks. 
Recent proposals for scalable quantum computing exploit interfaces between matter qubits and discrete-variable photonic modes, 
including distributed hybrid architectures based on ions~\cite{jiang_distributed_2007, monroe_large_2014}, 
neutral atoms~\cite{covey_quantum_2023,sunami_scalable_2025}, or spins~\cite{degliniastySpinOpticalQuantumComputing2024, coganDeterministicGenerationIndistinguishable2023}, 
or drive the entire computation by measurements on photonic qubits, as in fusion-based quantum computing~\cite{bartolucciFusionbasedQuantumComputation2023}. 
These models point toward the possibility of quantum networks capable of executing large-scale distributed quantum algorithms.

Current quantum programming languages lack many necessary capabilities in order to natively capture non-qubit systems such as photons. 
The majority are designed solely for circuit-based models, where computation derives from a fixed set of unitary gates. 
Distributed quantum computation, however, demands a new paradigm that hybridizes circuit-based and 
measurement-based approaches while accommodating heterogeneous quantum systems. 
Software for such networks must therefore represent diverse physical systems and their interactions while supporting the computational abstractions needed for verification, optimization, and compilation.
Meeting these requirements calls for a formalism that links physical notions of scalability with computer science concepts such as semantics, abstraction, control flow, concurrency, and rewriting.

The ZX calculus~\cite{coeckeInteractingQuantumObservables2008} has already proved effective across
circuit-based~\cite{coeckePicturingQuantumProcesses2017, kissingerReducingNumberNonClifford2020},
measurement-based~\cite{duncanRewritingMeasurementBasedQuantum2010, backensThereBackAgain2021, mcelvanneyFlowpreservingZXcalculusRewrite2023},
and fault-tolerant~\cite{beaudrapZXCalculusLanguage2020, kissingerPhasefreeZXDiagrams2022, huangGraphicalCSSCode2023, townsend-teagueFloquetifyingColourCode2023,rodatzFloquetifyingStabiliser2024,rodatzFaultTolerance2025} 
models of quantum computation.
In the context of photonic quantum computing, the ZX calculus is starting to be applied in a top-down direction, e.g. to design novel error correcting codes~\cite{bombinUnifyingFlavorsFault2023, pankovichFlexibleEntangledState2023}.
However, the bottom-up direction, from the hardware described in the language of linear optics~\cite{aaronsonComputationalComplexityLinear2011} to the logic of ZX calculi, has not been established:
current graphical languages for optics~\cite{clementLOvCalculusGraphicalLanguage2022,defeliceQuantumLinearOptics2023,defeliceLightMatterInteractionZXW2023,heurtelCompleteGraphicalLanguage2024}
are restricted to \emph{passive} optical setups and lack formal notions of measurement and feedforward which are crucial to performing entangling gates in photonic architectures.

A second critical gap is the absence of formalisms for the \emph{dynamic} nature of optical protocols. 
Current approaches rely on static circuit diagrams, which obscure a central advantage of photonics: the ability to exploit both spatial and temporal degrees of freedom~\cite{motesScalableBoson2014, madsenQuantumComputational2022}.
Optical circuits are inherently time-dependent, acting on streams of photons with components such as routers, delays, switches, and time-controlled emitters.
A suitable formalism should therefore have the expressivity to specify what each component does at each time step.
Extending recent graphical formalisations of discrete-time dynamics~\cite{caretteGraphicalLanguageDelayed2021, dilavoreMonoidalStreamsDataflow2022} 
to real-world hardware components would enable recursive reasoning and rewriting across a wide variety of experimental setups.

In this work, we develop the mathematical foundations to enable:
(i) verification and comparison of proposals for networked quantum computing,
(ii) translation of techniques from circuit-based models into the photonic, measurement-driven setting, and
(iii) compilation of distributed algorithms across heterogeneous hardware platforms.

A number of compilation frameworks and algorithms have recently been developed for photonic one-way computation~\cite{zilkCompilerUniversalPhotonic2022,zhangOneQCompilationFramework2023,leeGraphtheoreticalOptimizationFusionbased2023} 
and for distributed architectures more broadly~\cite{ferrari_compiler_2021,andres-martinez_distributing_2024}. 
These approaches are typically algorithmic or architecture-specific, focusing on resource optimization and circuit translation. 
Our approach is complementary: by grounding compilation in formal languages and rewriting, 
we provide a unifying framework in which such strategies can be expressed, analyzed, and verified. 
In doing so, we connect hardware-level models such as linear optics with high-level programming concepts such as control flow and recursion, 
enabling systematic reasoning and the development of new design principles for networked quantum architectures.



\subsection{Graphical framework}

Our main contribution is a graphical language $\mathtt{Stream}(\mathtt{Channel}(\mathbf{ZXLO}))$ 
for reasoning about quantum networks with discrete-time dynamics.
Its construction proceeds in three stages:
\begin{enumerate}
    \item \textbf{Base layer.} The language $\mathbf{ZXLO}$ (introduced in \cref{sec:base}) combines the ZX calculus with linear optics, incorporating a node for dual-rail encoding.
    \item \textbf{Channel layer.} The $\mathtt{Channel}$ construction (\cref{sec:channel}) generalizes classical–quantum maps~\cite{coeckePicturingQuantumProcesses2017} to discrete-variable linear optics. 
    It introduces a syntax based on Kraus maps, formalizing previously ad hoc variables in diagrams~\cite{duncanGraphicalApproachMeasurementbased2013, backensThereBackAgain2021} and extending them with classical annotations to model feedforward and coarse-graining of measurement outcomes.
    \item \textbf{Stream layer.} The $\mathtt{Stream}$ construction (\cref{sec:stream}) builds on intensional monoidal streams~\cite{dilavoreMonoidalStreamsDataflow2022} and the notion of observational equivalence from~\cite{caretteGraphicalLanguageDelayed2021}, 
    enriching our syntax with time-indexed diagrams and feedback loops. It further extends previous work by introducing novel rewriting rules and a recursive $\mathtt{unroll}$ operation, which enables inductive reasoning about protocol executions.
\end{enumerate}

This language yields formal definitions of determinism and universality for hybrid qubit-photonic channels,
and supports the representation of practical optical modules such as routers, delays, and emitters. 
Its structure reflects three core features of networked architectures: 
(i) a qubit-photon interface that allows encoding units of quantum information in photonic modes;
(ii) measurement and classical feedforward in the style of measurement-based quantum computing (MBQC);
(iii) protocols with discrete-time dynamics involving both quantum and classical memory.
The graphical framework developed in this paper forms the basis of the Python package Optyx \cite{kupper_optyx_2025}, implemented within the DisCoPy ecosystem \cite{felice_discopy_2021, toumi_discopy_2022}.

\subsection{Applications}

\paragraph{Characterization of fusion measurements}

As a first application of our framework, in \cref{sec:characterization}, we derive formal representations of fusion measurements in the ZX calculus,
and we characterize all locally equivalent measurements whose Pauli byproducts can be corrected.
We classify as \enquote{green failure} all such measurements which preserve entanglement of any graph state under the fusion failure outcome.
We then narrow down which of these induce correctable Pauli errors in both success and failure outcomes; see \cref{prop:characterization}.
This class includes phase gadgets, an important family of non-Clifford entangling gates~\cite{kissingerReducingNumberNonClifford2020}.
By additionally restricting to stabilizer measurements, the two most commonly used fusion measurements
--- the Type II fusion of~\cite{browneResourceEfficientLinearOptical2005,bartolucciFusionbasedQuantumComputation2023} and the CZ fusion of~\cite{limRepeatUntilSuccessLinearOptics2005,degliniastySpinOpticalQuantumComputing2024} ---
naturally arise from our characterization as $\Xm$ and $\Ym$ fusions, respectively.

\paragraph{Proofs of determinism}

In MBQC, computation proceeds by measuring a graph state resource, with undesired outcomes corresponding to Pauli errors. 
Determinism is guaranteed when these errors can be corrected by classical control according to \emph{flow structure}~\cite{browneGeneralizedFlowDeterminism2007} on the underlying graph. 
Current formal models, however, do not handle two-qubit measurements such as fusions.

In \cref{sec:flow}, we introduce a notion of \emph{partially static flow} for fusion networks which enables correction of Pauli errors 
while avoiding active classical control on fusion nodes.
We prove that these fusion networks can be implemented by a deterministic pattern where fusion operations precede single-qubit measurements (\cref{thm:flow-pattern}), and that
any decomposition of an open graph as a fusion network of $\Xm$ and $\Ym$ fusions admits partially static flow provided the original graph has Pauli flow (\cref{thm:flow-simplified-graph}).
This yields the first formal analysis of photonic fusion errors, establishing flow conditions for determinism in fusion-based MBQC.

\paragraph{Proofs of universality}

Universality —-- the ability of a quantum protocol to simulate any other —-- is particularly challenging in linear optics, where entanglement can only be generated probabilistically. 
Our framework expresses entire photonic architectures as single diagrams, enabling their properties and expressivity to be investigated by inductive, graphical reasoning.
We give a powerful example of this form of reasoning in \cref{thm:honeycomb}, showing that the honeycomb lattice with measurements restricted to the $\YZm$-plane is universal for quantum computing.

In \cref{subsec:RUS}, we prove the correctness of new repeat-until-success (RUS) protocols that boost the success probability of fusion measurements with green failure, given access to entangled resource states. 
As special cases, this recovers the RUS optical CZ gates of~\cite{limRepeatUntilSuccessLinearOptics2005} and the boosted Bell measurement of~\cite{lee_nearly_2015}. The proof uses the unrolling technique of \cref{sec:stream} and, to our knowledge, constitutes the first entirely graphical proof by induction.
We end \cref{sec:universality} by comparing two approaches to universal photonic quantum computing: lattice-based and reprogrammable emitter-based architectures. 
We provide graphical proofs of determinism and universality for minimal examples of each, allowing us to appreciate the scaling of these approaches.
Our constructive proof for emitter-based architectures demonstrates how any given MBQC pattern can be compiled into a sequence of instructions for a concrete optical setup, without resorting to universal graph states.
This indicates an alternative path to near-term photonic quantum computing where programmable setups are enhanced and optimized by compilation techniques.

\part{Graphical framework for quantum protocols}\label{part1}

\section{Base language}\label{sec:base}

We introduce the graphical notations used in this paper, allowing us to relate
linear optical circuits and ZX-diagrams via the dual-rail encoding.
This fixes our base language of linear maps $\mathbf{ZXLO}$ generated by ZX diagrams, linear optical circuits and a triangle node introduced at the end of the section.
The graphical framework developed in subsequent sections is however independent of this choice of linear maps, 
and thus the reader may modify the base language as long as the new generators have an interpretation as bounded linear maps.

\subsection{ZX calculus}\label{sec:ZX}

The ZX calculus is a graphical language used to represent and reason about qubit quantum computation.
Its elementary building blocks are the green \emph{Z-spider} and the red \emph{X-spider} (therefore ZX).
\begin{align}
  \label{eq:z-spider-interp}
  \tikzfig{ZX/generators/z-spider}
  \quad &\overset{\interp{\cdot}}{\longmapsto} \quad
  \ket{0}^{\otimes n}\! \bra{0}^{\otimes m} + e^{i \alpha} \ket{1}^{\otimes n}\! \bra{1}^{\otimes m}\\
  \label{eq:x-spider-interp}
  \tikzfig{ZX/generators/x-spider}
  \quad &\overset{\interp{\cdot}}{\longmapsto} \quad
  \ket{+}^{\otimes n}\! \bra{+}^{\otimes m} + e^{i \alpha} \ket{-}^{\otimes n}\! \bra{-}^{\otimes m}
\end{align}
These spiders can have any number of input and output legs, corresponding to qubit ports, and they have a phase parameter $\alpha$.
Notice that since the $e^{i \alpha}$ function in the interpretation is $2 \pi$ periodic, we can take the parameter of spiders modulo $2 \pi$.
The last generator of the ZX calculus is the yellow \emph{Hadamard box} $\tikzfig{ZX/generators/hadamard}$ that corresponds to the Hadamard gate, $\ket{0}\bra{+}\ +\ \ket{1}\bra{-}$.
We define the star symbol as the diagram corresponding to $\frac{1}{\sqrt 2}$, that is, $\star \coloneqq \tikzfig{ZX/elements/sqrt2-inv}$.
Using these building blocks, we are able to intuitively represent common quantum gates as well as any unitary.
\begin{figure}[ht]
  \noindent
  \begin{minipage}{.28\textwidth}
    \begin{align*}
      \tikzfig{ZX/elements/x-state-zero}
      \quad &\overset{\interp{\cdot}}{\longmapsto} \quad
      \ket{0} \\
      \tikzfig{ZX/elements/x-state-pi}
      \quad &\overset{\interp{\cdot}}{\longmapsto} \quad
      \ket{1} \\
      \tikzfig{ZX/elements/z-state-zero}
      \quad &\overset{\interp{\cdot}}{\longmapsto} \quad
      \ket{+} \\
      \tikzfig{ZX/elements/z-state-pi}
      \quad &\overset{\interp{\cdot}}{\longmapsto} \quad
      \ket{-}
    \end{align*}
  \end{minipage}
  \begin{minipage}{.28\textwidth}
    \begin{align*}
      \tikzfig{ZX/elements/not}
      \quad &\overset{\interp{\cdot}}{\longmapsto} \quad
      \tikzfig{QuantumCircuit/not} \\
      \tikzfig{ZX/elements/z-rotate-alpha}
      \quad &\overset{\interp{\cdot}}{\longmapsto} \quad
      \tikzfig{QuantumCircuit/z-rotate-alpha} \\
      \tikzfig{ZX/elements/t}
      \quad &\overset{\interp{\cdot}}{\longmapsto} \quad
      \tikzfig{QuantumCircuit/t} \\
      \tikzfig{ZX/elements/s}
      \quad &\overset{\interp{\cdot}}{\longmapsto} \quad
      \tikzfig{QuantumCircuit/s}
    \end{align*}
  \end{minipage}
  \begin{minipage}{.44\textwidth}
    \begin{align*}
      \tikzfig{ZX/elements/cnot}
      \quad &\overset{\interp{\cdot}}{\longmapsto} \quad
      \tikzfig{QuantumCircuit/cnot} \\
      \tikzfig{ZX/elements/cz}
      \quad &\overset{\interp{\cdot}}{\longmapsto} \quad
      \tikzfig{QuantumCircuit/cz} \\
      \tikzfig{ZX/elements/PhaseGadgetUnitary}
      \quad &\overset{\interp{\cdot}}{\longmapsto} \quad
      \ket{a\ b} \mapsto e^{i \alpha (a \oplus b)}\ket{a\ b}
    \end{align*}
  \end{minipage}
\end{figure}

Further to its universal representational power for unitary maps, the ZX calculus also comes equipped with a set of graphical rewrite rules, presented in \cref{fig:ZX-axioms}.
These rules provide a powerful mechanism for transforming diagrams, which can be used to establish equivalences between quantum circuits and to formally verify quantum protocols.
Crucially, this rule set is known to be complete for qubit stabilizer quantum mechanics, meaning any equality that holds between qubit maps can be derived diagrammatically~\cite{ngUniversalCompletionZXcalculus2017,jeandelDiagrammaticReasoningClifford2018}.

\begin{remark}
  \label{rem:scalar-zx}
  There are many different axiomatizations of the ZX calculus.
  The version presented in \cref{fig:ZX-axioms} maintains exact equality for the stabilizer fragment.
  Versions of the ZX calculus can be found in~\cite{jeandelCompleteAxiomatisationZXCalculus2018} for the Clifford+T fragment, and in~\cite{vilmartNearMinimalAxiomatisationZXCalculus2019} for the full language.
\end{remark}

A ZX-diagram with no inputs or outputs corresponds to some scalar, which we sometimes choose to omit from calculations.
We use $=$ when a rewrite preserves equality on the nose, and $\approx$ denotes equality \emph{up to some non-zero scalar}.


\subsection{MBQC graphs}

The ZX calculus is closely related to the measurement-based model of quantum computing.
In MBQC, computation is performed in two stages:
(i) a \emph{graph state} is prepared and
(ii) it is processed by a sequence of \emph{single-qubit measurements}.
A graph state associated with the graph $G = (V, E)$ is an entangled quantum state constructed by preparing a qubit for each vertex in the $\ket{+}$ state and applying $CZ$ gates for each edge.
We may depict graph states equivalently as ZX diagrams or qubit circuits.
\begin{example}
  \label{example:graph-sate}
  \[
    \tikzfig{MBQC/simple-graph}
    \quad\leadsto
    \scalebox{.82}{\tikzfig{MBQC/graph-state}}
    \overset{\interp{\cdot}}{\longmapsto} \quad
    CZ_{24} CZ_{23} CZ_{34} CZ_{12} \ket{+}^{\otimes 4}
  \]
\end{example}
\noindent Qubits can be inputs or outputs, which we depict by connecting wires to the left or right boundaries of the diagram.
\begin{definition}[Open graph]
  An open graph is a tuple $(G, I, O)$, where $G = (V, E)$ is an undirected graph, and $I, O \subseteq V$ are (possibly overlapping) subsets representing inputs and outputs.
  We use the notations $\comp{O} \coloneqq V \backslash O$ for the non-output and $\comp{I} \coloneqq V \backslash I$ for the non-input vertices.
\end{definition}
\noindent During computation, every non-output vertex of the graph is measured in a certain basis specified by a measurement plane ($\lambda$) and angle ($\alpha$).
\begin{definition}[Labelled open graph]
  A labelled open graph is a tuple $\mathcal{M} = (G, I, O, \lambda, \alpha)$,
  where $(G, I, O)$ is an open graph,
  $\lambda: \comp{O} \to \set{\XYm, \XZm, \YZm}$ is an assignment of measurement planes,
  and $\alpha: \comp{O} \to [0, 2\pi)$ assigns measurement angles to each non-output qubit.
\end{definition}
\noindent We use the following notation to denote pure single qubit effects (the corresponding states are defined analogously).
\[
\bra{\pm_{\lambda,\alpha}} = \begin{dcases}
\frac{1}{\sqrt{2}}(\bra{0} \pm e^{i \alpha}\bra{1}) & \text{if } \lambda = \XYm \hspace{10mm} \tikzfig{z-phase-state} \\
\frac{1}{\sqrt{2}}(\bra{+} \pm e^{i \alpha}\bra{-}) & \text{if } \lambda = \YZm \hspace{10mm} \tikzfig{x-phase-state} \\
\frac{1}{\sqrt{2}}(\bra{i} \pm e^{i \alpha}\bra{-i}) & \text{if } \lambda = \XZm  \hspace{10mm} \tikzfig{y-phase-state}
\end{dcases}
\]
\noindent Any labelled open graph defines a target linear map which corresponds to the quantum computation that we want to execute.
To ensure such a map is well defined, we provide a measurement pattern which contains a concrete sequence of instructions to generate the graph.

\begin{definition}\label{def:ogs-to-linear-map}
 Suppose $\mathcal{M} = (G, I, O, \lambda, \alpha)$ is a labelled open graph.
 The \emph{target linear map of $\mathcal{M}$} is given by
 \[
  T(\mathcal{M}) \coloneqq \left( \prod_{i\in\comp{O}} \bra{+_{\lambda(i),\alpha(i)}}_i \right) E_G N_{\comp{I}},
 \]
  where $E_G \coloneqq \prod_{i\sim j} CZ_{ij}$ and $N_{\comp{I}} \coloneqq \prod_{i\in\comp{I}} \ket{\plus}_i$.
\end{definition}

As shown in \cite{backensThereBackAgain2021}, a labelled open graph is completely characterized by a ZX diagram in MBQC form.

\begin{definition}\cite{backensThereBackAgain2021}
    A graph state diagram is a ZX diagram where all vertices are green, all the connections between vertices are Hadamard edges and a single output wire is incident on each vertex in the diagram.
    A ZX diagram is in MBQC form if it consists of a graph state diagram in which each vertex may additionally be connected to:
    \begin{itemize}
        \item an input (in addition to its output), and
        \item a measurement effect in one of the measurement planes $\{\XYm, \XZm, \YZm\}$ (instead of the output).
    \end{itemize}
    Given a ZX diagram $D$ in MBQC form, its \emph{underlying graph} $G(D)$ is the labelled open graph $(G, I, O, \lambda, \alpha)$
    where $G$ is the graph whose vertices are the green spiders of $D$ and whose edges are the Hadamard edges of $D$, the inputs (outputs) 
    $I, O \sub G$ are the spiders connected to the inputs (outputs) of $D$, and the pair $\lambda, \alpha$ are the measurement effects on each spider.
\end{definition}

We introduce a scalable notation for MBQC form ZX diagrams, which are completely characterized by an open graph $(G, I, O)$ and a choice of measurement 
planes and angles  $\lambda, \alpha$ on non-output qubits:
\[
  \tikzfig{MBQC/MBQC-form}
\]

\begin{proposition}\cite{backensThereBackAgain2021}
    For any labelled open graph $\mathcal{M} = (G, I, O, \lambda, \alpha)$ there is a ZX diagram $D$ in MBQC form with $G(D) = \mathcal{M}$ 
    and whose interpretation is the target linear map of $\mathcal{M}$, i.e. $T(\mathcal{M}) = \interp{D}$.
\end{proposition}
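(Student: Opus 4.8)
The plan is to \emph{construct} the diagram $D$ explicitly from the data of $\mathcal{M}$ and then verify the two claimed properties: the first by inspection, the second by a short ZX computation. First I would define $D$ as follows. For each vertex $v \in V$ place a green spider $S_v$; for each edge $\{u,v\} \in E$ connect $S_u$ and $S_v$ by a Hadamard edge; give each spider a single output wire; attach an additional input wire to $S_v$ whenever $v \in I$; and for every non-output vertex $v \in \comp{O}$ replace its output wire by the ZX form of the measurement effect $\bra{+_{\lambda(v),\alpha(v)}}$ depicted above, together with the $\star$ factors needed to match the $\tfrac{1}{\sqrt2}$ normalization in its defining formula. By construction $D$ is a graph-state diagram decorated only with inputs and with measurement effects on non-outputs, hence in MBQC form; and reading off its underlying graph returns exactly $G$, $I$, $O$, $\lambda$ and $\alpha$, so $G(D) = \mathcal{M}$ is immediate.

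For $\interp{D} = T(\mathcal{M})$ I would rewrite $D$ into the composite of \cref{def:ogs-to-linear-map}. The key identity is the standard ZX rewrite expressing $CZ_{ij}$ as two green spiders joined by a Hadamard edge (with a fixed $\star$ correction making it an equality on the nose); applying it at every edge of $G$ converts the Hadamard-edge skeleton of $D$ into $E_G = \prod_{i\sim j} CZ_{ij}$. A single-legged zero-phase green spider, normalized by $\star$, interprets to $\ket{+}$, which supplies $N_{\comp{I}} = \prod_{i\in\comp{I}} \ket{+}_i$ on the non-input vertices, while the input legs of $D$ remain open input wires. Finally, the ZX forms of $\bra{+_{\lambda,\alpha}}$ in the three planes were chosen precisely to interpret to the effects of \cref{def:ogs-to-linear-map}, so matching the three factors gives $\interp{D} = \left(\prod_{i\in\comp{O}} \bra{+_{\lambda(i),\alpha(i)}}_i\right) E_G\, N_{\comp{I}} = T(\mathcal{M})$. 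A convenient way to make this factorization manifest is to first split each decorated spider of $D$, via spider fusion, into a central spider carrying all of its Hadamard-edge legs plus one spare leg, onto which the input wire, the $\ket{+}$ state, the measurement effect, or the output wire is then glued.

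I expect the only real obstacle to be scalar bookkeeping, since the statement asserts exact equality rather than equality up to a nonzero scalar: this forces care in placing $\star$ in both the $CZ$-decomposition and in the definition of $D$, and it is why the scalable MBQC-form notation introduced above bakes in the requisite scalars. The one genuinely delicate case is a vertex in $I \cap O$, which carries both an input and an output wire; such a spider must reduce to an identity channel from input to output, which one checks by fusing it down to a two-legged zero-phase green spider. Since the underlying equality is exactly the content of the cited result of~\cite{backensThereBackAgain2021}, the proof ultimately amounts to reproducing their construction in the present notation and confirming that our conventions make it exact.
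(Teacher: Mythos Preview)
Your proposal is correct and is exactly the standard construction. Note, however, that the paper does not supply its own proof of this proposition: it is stated with a citation to~\cite{backensThereBackAgain2021} and immediately followed by the scalable MBQC-form notation and an example, leaving the verification implicit. Your write-up therefore goes beyond what the paper does, spelling out the construction and the scalar bookkeeping that the paper's notation absorbs; this is precisely the argument one finds in the cited reference.
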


We can thus represent $T(\mathcal{M})$ in the ZX calculus by attaching the appropriate effects to the dangling qubits in the graph state.

\begin{example}
  \[
    \tikzfig{figures/mbqc-example-pattern-graph}
    \qquad\overset{T}{\longmapsto}\qquad
    \tikzfig{figures/mbqc-example-pattern}
  \]
  where the input set is $I = \set{a}$, the set of outputs is $O = \set{g, h}$, and the measurement planes are $\lambda(v) = \XYm$ for all $v \in \set{a, b, e, f}$, $\lambda(d) = \YZm$, and $\lambda(c) = \XZm$.
\end{example}

The above description only discussed MBQC with post-selected measurement outcomes, i.e.\@ assuming determinism of measurements.
However, quantum measurements are fundamentally probabilistic processes: they may or may not induce Pauli errors upon observation.
In \cref{sec:channel}, we formalise measurements as completely positive maps or quantum channels.

\subsection{Linear optical circuits}\label{sec:LO}

Linear optical circuits are the basic ingredients of photonic computing.
They are generated by two physical gates: beam splitters and phase shifts.
When equipped with $n$-photon state preparations and number-resolving photon detectors, they give rise to quantum statistics~\cite{aaronsonComputationalComplexityLinear2011}.
We depict the above-mentioned components, respectively, as follows:
\[
  \tikzfig{LOQC/BeamSplitter}
  \qquad \qquad
  \tikzfig{LOQC/PhaseShifter}
  \qquad \qquad
  \tikzfig{LOQC/nPhotonState}
  \qquad \qquad
  \tikzfig{LOQC/nPhotonMeasurement}
\]
We call the graphical language generated by the above components $\bf{LO}$.
Graphical axiomatisations of these components can be found in~\cite{clementLOvCalculusGraphicalLanguage2022, defeliceQuantumLinearOptics2023, heurtelCompleteGraphicalLanguage2024}.
We use a special notation $\tikzfig{figures/router/EmptyState} \coloneqq \tikzfig{figures/router/LOEmptyState}$ for the \enquote{empty} state.
Diagrams in $\bf{LO}$ have an interpretation as linear maps acting on tensor products of $\mathcal{F}(\mathbb{C})$
--- the bosonic Fock space with a single degree of freedom, also called \emph{bosonic mode}.
Formally, the bosonic Fock space over a vector space $\mathcal{H}$ is given by $\mathcal{F}(\mathcal{H}) = \bigoplus_n \mathcal{H}^{\tilde{\otimes} n}$
where $\tilde{\otimes}$ denotes the quotient of the tensor product by $x \otimes y \sim y \otimes x$ and $\bigoplus$ denotes the infinite direct sum.
$\mathcal{F}(\mathcal{H})$ is usually completed to a Hilbert space but we here focus on states containing finitely many photons and thus use the above definition.
For a finite number of modes $m$, the Fock space is isomorphic to the tensor product of $m$ copies of the single bosonic mode $\mathcal{F}(\mathbb{C}^m) \simeq \mathcal{F}(\mathbb{C})^{\otimes m}$~\cite{defeliceQuantumLinearOptics2023}.
Given an $\bf{LO}$ circuit on $m$ modes, the amplitudes of different input-output pairs can be computed by taking permanents of an underlying $m \times m$ unitary matrix.
We fix the interpretation of the phase shift of angle $\theta$ to be the matrix
$\begin{pmatrix}
   e^{i \theta}
\end{pmatrix}$,
and of the beam splitter to be the Hadamard matrix,
$\frac{1}{\sqrt{2}}\begin{pmatrix}
                     1 & 1  \\
                     1 & -1
\end{pmatrix}$.
The underlying matrix of arbitrary circuits is obtained by a simple block-diagonal matrix multiplication.

\begin{remark}
  Throughout the paper, we use two different types of classical variables.
  Underlined variables are \emph{outcomes} of a quantum measurement: they can take different values probabilistically.
  Other variables are \emph{controlled}: they have a fixed value, set before the computation is executed.
  For example, in the picture above, the number of prepared photons $\ccvar{n}$ is controlled while the number of detected
  photons $\cvar{n}$ is a probabilistic outcome. We formalise this notation in \cref{subsec:kraus-map}.
\end{remark}

Circuits in $\bf{LO}$ can be further decomposed using the generators of the QPath calculus~\cite{defeliceQuantumLinearOptics2023},
given by the following maps:
\begin{align*}
    \scalebox{0.7}{\tikzfig{qpath/w-qpath}} &\qquad \overset{\interp{\cdot}}{\longmapsto} \qquad \ket{n} \quad \mapsto \quad \sum_{k=0}^n \binom{n}{k}^{\frac{1}{2}} \ket{k}\ket{n - k} \\
    \scalebox{0.7}{\tikzfig{qpath/phase}} &\qquad \overset{\interp{\cdot}}{\longmapsto} \qquad \ket{n} \quad \mapsto \quad c^n \ket{n}
\end{align*}
for $c \in \mathbb{C}$.
This enables a rewriting system given in \cref{sec:qpath}, and a more fine-grained diagram for the beam splitter,
\begin{equation*}
    \tikzfig{RUS/beamsplitter-qpath}
\end{equation*}
which we use in \cref{sec:mixedfusion}.
Note that the QPath maps are bounded on the Fock space as defined above, yet unbounded on its Hilbert space completion.
We must therefore be careful to use them only if there are finitely many photons. This assumption is valid throughout the paper and further discussed in \cref{subsec:cpmaps}.

\subsection{Dual-rail encoding}\label{sec:bgd-dual-rail}

In photonic quantum computing, qubits are usually encoded by a photon in a pair of bosonic modes, a method known as \emph{dual-rail encoding}~\cite{knillSchemeEfficientQuantum2001}.
These could be two possible positions of the photon (spatial modes), or any other binary degree of freedom of the photon such as polarisation.
We use a \enquote{double wire} \tikzfig{dualrail/DRWire} to denote dual-rail modes.
These wires are interpreted as $\mathcal{F}(\mathbb{C}^2)$, the bosonic Fock space over a qubit,
meaning that there can be any number of qubits in the same dual-rail mode.
Linear optical operations on these modes are defined from $\bf{LO}$ by using the following maps:
\[
  \tikzfig{dualrail/dual_rail_split}
  \qquad \qquad \qquad \qquad
  \tikzfig{dualrail/dual_rail_merge}
\]
These two maps are inverses of each other, corresponding to the natural isomorphism $\mathcal{F}(\mathbb{C}^2) \simeq \mathcal{F}(\mathbb{C}) \otimes \mathcal{F}(\mathbb{C})$.
We wish to represent processes acting on dual-rail qubits using the ZX calculus~\cite{coeckeInteractingQuantumObservables2008}.
However, ZX diagrams act on qubit spaces of the form $(\mathbb{C}^2)^{\otimes m}$ and there is no standard way of extending this action to $\mathcal{F}(\mathbb{C}^2)^{\otimes m}$.
There is, however, an isometry $\mathbb{C}^2 \to \mathcal{F}(\mathbb{C}^2)$, encoding a qubit state into its dual-rail representation.
We call it \enquote{triangle} and represent it as follows:
\begin{equation}
  \scalebox{1.5}{\tikzfig{ZX/Triangle}}
  \label{eq:triangle}
\end{equation}
Note that the adjoint of the triangle is a projector onto the qubit subspace, and we never use it in this paper.
We can now translate between dual-rail circuits and ZX diagrams using this graphical component.
For example, the qubit computational basis states are given by the dual-rail states.
\begin{restatable}{rewrite}{encodeKet}
  \label{eq:dual-rail-encoding}
  \[
  \tikzfig{dualrail/encode_ket_0}
  \qquad \qquad \qquad
  \tikzfig{dualrail/encode_ket_1}
  \]
\end{restatable}
Going further, the following equations imply that any single-qubit unitary can be realised on dual-rail qubits using only linear optical instruments.
\begin{restatable}{rewrite}{encodeHad}
  \[
  \tikzfig{dualrail/EncodeHadamard}
  \qquad \qquad
  \tikzfig{dualrail/EncodeZSpider11}
  \]
\end{restatable}
Similarly, we may perform any single-qubit measurement on dual-rail qubits using photon detectors.
\begin{restatable}{rewrite}{encodeMeas}
  \[
  \tikzfig{dualrail/EncodeZAplhaMeasurement}
  \qquad \qquad
  \tikzfig{dualrail/EncodeXAplhaMeasurement}
  \]
\end{restatable}
By pushing triangles from left to right in a dual-rail diagram, we compute the action of this diagram assuming
that a single photon is input in each dual-rail mode.
\begin{remark}
  Our results in this paper focus on dual-rail encoded qubits. 
  However, the only properties we use of the dual-rail encoding are the equations given in this subsection.
  For a straightforward generalisation to qubit encodings in $n$ (rather than $2$) optical modes, one may replace the beam splitter and tensored phase shift with $n \times n$ unitaries satisfying the same equations.
  For a more involved generalisation to \emph{qudit} encodings, similar equations could be used to relate linear optics and the qudit ZX calculus \cite{poorZXcalculusCompleteFinitedimensional2024}.
\end{remark}

\section{Channel construction}\label{sec:channel}

The graphical language $\mathbf{ZXLO}$ described so far has a standard interpretation in \emph{pure} quantum mechanics, 
consisting of an assignment of a linear map $\interp{D}$ to any diagram $D \in \mathbf{ZXLO}$.
In order to interpret these diagrams as \emph{quantum channels}, 
we need to define notions of environment and classical interface on the diagram.
In categorical quantum mechanics, the standard approach to modeling quantum channels uses Frobenius algebras~\cite{abramskyAlgebrasNonunitalFrobenius2012, coeckeCategoriesQuantumClassical2016} 
and employs a diagrammatic doubling procedure based on Stinespring dilation to capture mixed-state quantum mechanics. 
However, these approaches require the burden of either having twice as large diagrams, or keeping track of both quantum and classical wire types. 
Here, working with Kraus decompositions fixed by measurements, we utilize variables to represent classical data for cleaner syntax.
This approach can be found in the literature~\cite{duncanGraphicalApproachMeasurementbased2013},
but we here extend the notation to also model feedforward of classical information as variable annotations on the diagram. 
This gives us a premonoidal syntax~\cite{power_premonoidal_1997} for describing quantum channels where the order of boxes in the diagram defines the allowed
classical annotations.
We give a formal interpretation for this notation in terms of completely positive maps between vector spaces, focusing on their action on discrete-variable photonic states.
We then show that the determinism results of~\cite{browneGeneralizedFlowDeterminism2007,danosMeasurementCalculus2007} can be naturally expressed in the language
of $\mathtt{Channel}(\mathbf{ZX})$ diagrams, giving us tools for proving determinism of quantum channels which we will extend in subsequent sections.

\subsection{Kraus notation for abstract channels}\label{subsec:kraus-map}

Let $\bf{C}$ be any base graphical language of linear maps. We extend $\bf{C}$ with classical annotations to construct a language of classical-quantum maps $\mathtt{Channel}(\mathbf{C})$
Fix a global set of variables $x, y \in \chi$ and consider the abstract notion of Kraus map over a base graphical language. 

\begin{definition}[Kraus map]\label{def:kraus-map}
  A Kraus map $f: Q \to Q'$ over $\mathbf{C}$ is given by the following data:
  \begin{enumerate}
    \item a type $E$ in $\mathbf{C}$, called the environment,
    \item a pair of input and output typed variables $x : X$ and $y : Y$ where $X, Y$ are objects of $\mathbf{C}$, called the classical interface,
    \item a morphism of type $f: H \otimes X \to K \otimes E \otimes Y$ in $\mathbf{C}$. 
  \end{enumerate}
  and denoted as follows:
  $$\tikzfig{kraus-map}$$
  A Kraus map is called \emph{pure} if $X = Y = E = I$ are the unit of the tensor.
\end{definition}

We consider arbitrary compositions of Kraus maps over $\mathbf{C}$, in sequence and in parallel. 
This is equivalent to considering arbitrary diagrams built from $\mathbf{C}$ and the following 
additional generators on every wire:
\begin{equation}\label{eq:channel-generators}
  \tikzfig{discard} \quad \text{ (discard) }, \qquad \tikzfig{prepare} \quad \text{ (prepare) }, \qquad \tikzfig{measure} \quad \text{ (measure) }
\end{equation}
We however distinguish between the two types of composition below:
\[
  \tikzfig{interchanger-fail}
\]
Intuitively, measurements have classical outcomes modeled as side-effects.
In the first diagram the classical output
variable $y$ can influence the input variable $x$, while in the second diagram it cannot.
Thus, any diagram comes with a total order on the measurement and preparation commands,
which induces a strict ordering of the classical variables appearing in the diagram.
For a diagram $D$ we denote by $X_D \subset \chi$ the set of preparation variables in $D$, 
by $Y_D \subset \chi$ the set of measurement variables in $D$, and by $<_D$ the total order on $X_D \cup Y_D$.

\begin{definition}[Causal annotations]\label{def:annotations}
  Fix a set of function symbols $f, g$ and constant symbols $a, b$.
  Given a diagram $D$ built from $\mathbf{C}$ and the generators (\ref{eq:channel-generators}),
  a classical annotation $\mathcal{E}$ for $D$ is a set of typed \emph{coarse-graining variables} $Z_D \subset \chi \backslash (X_D \cup Y_D)$, 
  and a set of functional relations including:
  \begin{enumerate}
    \item \emph{coarse-graining annotations} for every $z \in Z_D$ and some $\nu_i \in Y_D \cup Z_D \backslash \set{z}$, of the form
    \begin{equation}\label{eq:coarse-graining}
      \cvar{z} = g(\nu_1, \dots, \nu_n) \, ,
    \end{equation}
    \item \emph{feedforward annotations} for some $x \in X_D$ and $\nu_i \in Y_D \cup Z_D$, of the form
    \begin{equation}\label{eq:feedforward-annotation}
      x = f(\nu_1, \dots \nu_n) \, ,
    \end{equation}
    \item \emph{post-selection annotations} for some constant symbol $a$ and $y \in Y_D$, of the form
    \begin{equation}\label{eq:post-selection-annotation}
      \cvar{y} = a \, .
    \end{equation}
  \end{enumerate}
  This defines a relation $\leq_{D, \mathcal{E}}$ on the set of variables $X_D \cup Y_D \cup Z_D$ where $\nu \leq_{D, \mathcal{E}} \mu$ 
  whenever either $\nu = \mu$, or $\nu <_D \mu$, or there is an annotation in $\mathcal{E}$ such that $\nu$ appears on the right-hand side and $\mu$ appears on the left-hand side.
  We say that a classical annotation $\mathcal{E}$ is \emph{causally compatible} with $D$ if $\leq_{D, \mathcal{E}}$ is a partial order, i.e. reflexive, antisymmetric and transitive.
\end{definition}

Annotations are interpreted as functions acting on classical data, allowing to specify the classical input variables of boxes with respect to previous output variables in the diagram.

\begin{definition}[Channel]
  A channel diagram $D: H \to K$ in $\mathtt{Channel}(\mathbf{C})$ consists of the following data:
  \begin{enumerate}
    \item a diagram $D$ built from $\mathbf{C}$ and the following additional generators 
    \[\left\{\tikzfig{discard} \,, \, \tikzfig{prepare} \,, \, \tikzfig{measure}\right\}_{x, y \in \chi}\]
    \item a classical annotation $\mathcal{E}_D$ causally compatible with $D$.
  \end{enumerate}
  The set of variables appearing in a channel diagram $D$ is $X_D \cup Y_D \cup Z_D$. 
  The classical input variables $I_D$ are the variables in $X_D$ that do not appear in any feedforward annotation.
  The output variables $O_D$ are the variables in $Y_D \cup Z_D$ that do not appear on the right-hand side of any coarse-graining or feedforward annotation, or on the left-hand side of a post-selection annotation.
\end{definition}

For example, the channel diagram below (left) has causally compatible coarse-graining and feedforward annotations with $X_D = \set{x, x'}$, $Y_D= \set{y, y'}$, $Z_D =\set{z}$ and $I_D = \set{x}$, $O_D = \set{z}$. 
Instead, the annotation on the right is not causally compatible with the diagram as $x$ appears before $y$, and thus gives an invalid channel diagram.
\begin{equation}\label{eq:channel-example}
  \tikzfig{channel-example}
\end{equation}
Parallel and sequential compositions of diagrams in $\mathtt{Channel}(\mathbf{C})$ are given by composing the underlying diagrams 
and appending the sets of equations, after appropriate relabeling with fresh variables in $\chi$. This ensures that 
the causal compatibility condition of \cref{def:annotations} is respected.

\begin{remark}
  Formally, $\mathtt{Channel}(\mathbf{C})$ is a premonoidal category~\cite{power_premonoidal_1997} with monoidal center $\mathbf{C} + \{\tikzfig{discard}\}$, that is,
  the interchange law holds for any pair of morphisms in $\mathtt{Channel}(\mathbf{C})$ that does not involve a controlled preparation or a measurement.
  The coarse-graining and feedforward annotations may be viewed as operations acting on an implicit ``runtime object'' for $\mathtt{Channel}(\mathbf{C})$~\cite{roman_string_2025}.
\end{remark}

\subsection{Interpretation as completely positive maps}\label{subsec:cpmaps}

Diagrams in $\bf{ZXLO}$ correspond to linear maps acting on tensor products of qubits and bosonic modes.
We now give an interpretation of $\mathtt{Channel}(\bf{ZXLO})$ diagrams as completely positive maps acting on mixed states with finitely many particles.
For any object $H$ of $\bf{ZXLO}$ consisting of $m$ bosonic modes and $k$ qubits, define the $N$-particle projector $P_{\leq N}: H \to H$ as the operator that acts as the identity on qubits and projects out all states with more than $N$ photons over the $m$ bosonic modes.
A \emph{finite-particle state} on $H$, also known as a regular state \cite{chaiken_finite-particle_1967}, 
can be defined as a positive operator $\rho: H \to H$ satisfying $\rho = P_{\leq N} \rho P_{\leq N}$ for some natural number $N$. 
We denote by $B_\ast(H)$ the space of all finite-particle states.
In finite dimensions this coincides with the full space of bounded operators.
For bosonic modes, it contains all discrete-variable states \cite{chaiken_finite-particle_1967, reck_experimental_1994,knillSchemeEfficientQuantum2001} and truncated optical states \cite{pegg_optical_1998,miranowicz_quantum-optical_2001, ozdemir_quantum-scissors_2001}, widely used in the literature.
Finite-particle states are finite-rank and trace-class so that $B_\ast(H)$ forms an inner-product space with the usual Hilbert-Schmidt inner product. 
We do not impose any normalisation condition on states, as this will allow us to extract physical quantities from our interpretation.

Fix a basis $\mathcal{B}_H$ of the vector space associated to any object in $\mathbf{ZXLO}$.
For the qubit space $\mathcal{B} = 2$ is the set with two elements,
while for single-mode bosonic space we have $\mathcal{B} = \mathbb{N}$.
For a channel diagram $D$, we denote by $\mathcal{B}_{X_D}, \mathcal{B}_{Y_D}, \mathcal{B}_{Z_D}$ the basis elements for the preparation, measurement and coarse-graining variables, respectively.
Similarly, the unbound input and output classical variables range over the basis elements $\mathcal{B}_{I_D}$ and $\mathcal{B}_{O_D}$.
We moreover fix an interpretation of each function (or constant) symbol as a function on the basis elements associated to its input and output variable types.
We group the environments $E_i$ of individual Kraus maps within a diagram $D$ into a single environment $E_D = \bigotimes_i E_i$.

\begin{definition}[Branches]
  The branches of a diagram $D: H \to K$ in $\mathtt{Channel}(\mathbf{ZXLO})$ with preparation commands in $X_D$ and measurement commands in $Y_D$,
  are the collection of diagrams $H \to K \otimes E_D$ in $\mathbf{ZXLO}$ obtained by setting different values for the variables 
  $x \in \mathcal{B}_{X_D}$ and $y \in \mathcal{B}_{Y_D}$.
  We sometimes view a branch as a diagram $D_{x, y}: H \to K$ in $\mathtt{Channel}(\mathbf{ZXLO})$ by attaching the discard map to the environment $E_D$.
\end{definition}

We can now give a direct interpretation of a channel $D$ as a completely positive map $B_\ast(H) \otimes X' \to B_\ast(K) \otimes Y'$ 
where $B_\ast(H)$ is the space of finite-particle states $\rho: H \to H$, 
and $X', Y'$ are the spaces of the input and output classical variables in $D$.

\begin{definition}[CQ interpretation]
  Given a diagram $D$ in $\mathtt{Channel}(\mathbf{ZXLO})$, the classical-quantum map (CQ) interpretation is given by the following operator on finite-particle states:
  \[
    \interp{D}_{CQ}(\rho) = \sum_{x, y, z \in \mathcal{B}_D \text{ satisfying } \mathcal{E}_D} \ket{y'} tr_E(\interp{D_{x, y}}^\dagger \rho \interp{D_{x, y}}) \bra{x'}
  \]
  \noindent where $x, y, z \in \mathcal{B}_D = \mathcal{B}_{X_D} \times \mathcal{B}_{Y_D} \times \mathcal{B}_{Z_D}$, 
  $E = E_D$ is the total environment of the diagram, 
  and $x', y' \in \mathcal{B}_{I_D} \times \mathcal{B}_{O_D}$ are the unbound input and output variables of the channel diagram.
\end{definition}

To show that the above interpretation is sound and does not result in infinite quantities we prove the following proposition.

\begin{proposition}[Soundness]
  For any diagram $D \in \bf{ZXLO}$ and positive finite-particle state $\rho$, the CQ interpretation $\interp{D}_{CQ}(\rho)$ is a positive finite-particle state.
\end{proposition}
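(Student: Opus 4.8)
\textbf{Proof plan.} The statement has two parts — \emph{positivity} of $\interp{D}_{CQ}(\rho)$, and \emph{finiteness}, i.e.\ that this expression converges and lands in $B_\ast(K)$ (finite rank, trace class, bounded photon number) rather than being divergent or ill-defined — and I would treat them separately, reading $D$ as a diagram in $\mathtt{Channel}(\mathbf{ZXLO})$.

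For positivity, I would work directly from the formula. For fixed values $x, y, z$ of the bound variables satisfying $\mathcal{E}_D$, the branch interpretation $\interp{D_{x,y}}$ is an ordinary linear map, so $\interp{D_{x, y}}^\dagger \rho \interp{D_{x, y}}$ is a conjugation of the positive operator $\rho$ and hence positive; the partial trace $tr_E$ over the aggregated environment $E_D$ preserves positivity. Summing these positive operators against the mutually consistent classical pointer terms $\ket{y'}(\cdot)\bra{x'}$ assembles the Choi-type representation of a completely positive map equipped with a classical side-interface, which is again a positive operator. Equivalently, one may observe that $\interp{\cdot}_{CQ}$ is built from the manifestly CP pieces $\rho \mapsto A\rho A^\dagger$, the trace, and classical copy/compare maps, and that it respects the two compositions of $\mathtt{Channel}(\mathbf{ZXLO})$, so positivity follows by structural induction on $D$ — either route is essentially formal.

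For finiteness, the key ingredient is a \emph{photon-number bound}: every generator of $\mathbf{ZXLO}$, viewed as a linear map, sends the span of basis states with at most $N$ photons on its input bosonic modes into the span of states with at most $N + c$ photons on its output modes, for a constant $c$ depending only on the generator. Beam splitters, phase shifters, and the QPath $W$- and phase-maps conserve photon number ($c = 0$); an $n$-photon preparation has $c = n$; the triangle has $c = 1$; photon detectors and the ZX and Hadamard generators never increase it. Hence any diagram $D$, and each of its branches $D_{x,y}$, increases photon number by at most a finite constant $P_D$; restricted to any finite-particle sector, $\interp{D_{x,y}}$ is then a finite-dimensional linear map — this is exactly the sense in which these maps are ``bounded on the Fock space'' — so $tr_E(\interp{D_{x, y}}^\dagger \rho \interp{D_{x, y}})$ is finite rank, trace class, and supported on at most $N + P_D$ photons whenever $\rho = P_{\leq N}\rho P_{\leq N}$. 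Finally, for a fixed finite-particle $\rho$ only finitely many branches contribute a nonzero term, since the measurement effects in $\interp{D_{x,y}}$ annihilate $\rho$ once the demanded outcome exceeds its photon content and the remaining coarse-graining/preparation data is then pinned by $\mathcal{E}_D$; so the sum is a \emph{finite} sum of positive finite-particle operators, and taking $N' = N + P_D$ yields $\interp{D}_{CQ}(\rho) = P_{\leq N'}\,\interp{D}_{CQ}(\rho)\,P_{\leq N'}$.

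The step I expect to be the main obstacle is making the photon-number bound fully rigorous for the QPath maps, which are bounded on the finite-particle Fock space but unbounded on its Hilbert-space completion: one must check that the argument never leaves the finite-particle sector — in particular that partial traces over environments carrying (boundedly many) photons stay well-defined, and that the bounds $c$ compose additively under both sequential and parallel composition. Once the bookkeeping that ``each layer creates only a bounded number of photons, and nothing creates infinitely many'' is in place, the rest is routine.
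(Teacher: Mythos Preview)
Your proposal is correct and follows essentially the same route as the paper: establish a photon-number bound for each generator (the paper phrases this as the commutation relation $\interp{D}\,P_{\leq N} = P_{\leq N+a}\,\interp{D}$ where $a$ counts the photon creations in $D$), deduce that each branch $\interp{D_{x,y}}^\dagger \rho \interp{D_{x,y}}$ is positive and finite-particle, note that the partial trace preserves both properties, and then argue via causal compatibility that only finitely many measurement outcomes---and hence finitely many terms in the sum---are nonzero. Your treatment is if anything slightly more careful about the triangle generator (which does create one photon and so strictly does not commute with $P_{\leq 0}$) and about the unboundedness of the QPath maps; the paper absorbs both points into its projector-commutation statement without further comment.
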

\begin{proof}
  Let $\rho: H \to H$ be a positive finite-particle state. 
  Every generator $G$ introduced in the previous section, except for the $n$-photon state preparation, commutes with the $N$-particle projector, $P_{\leq N}G = G P_{\leq N}$.
  The $n$-photon state preparation creates $n$ additional photons and thus satisfies $(\ket{n} \otimes \mathtt{id}_H) P_{\leq N}  = P_{\leq N + n} (\ket{n} \otimes \mathtt{id}_H)$ for any reference space $H$.
  Therefore for any diagram $D$ in $\bf{ZXLO}$ we have $\interp{D} P_{\leq N} = P_{\leq N + a} \interp{D}$ where $a$ is the total number of photon creations in $D$.
  Therefore for any $x, y$,
  \[\interp{D_{x, y}}^\dagger \rho \interp{D_{x, y}} = \interp{D_{x, y}}^\dagger P_{\leq N} \rho P_{\leq N} \interp{D_{x, y}} = P_{\leq N + a} \interp{D_{x, y}}^\dagger \rho \interp{D_{x, y}} P_{\leq N + a}\]
  gives a finite-particle state which is moreover positive by positivity of $\rho$. 
  Any finite-particle state is trace-class, and the partial trace preserves positivity, so that $tr_E(\interp{D_{x, y}}^\dagger \rho \interp{D_{x, y}})$ is also positive and bounded.
  It also has finitely many particles since the trace induces a sum over finitely many finite-particle states.
  Moreover, finite-particle states can only result in finitely many measurement outcomes, and the causal compatibility condition ensures that all the feedforward 
  and coarse graining variables are deterministic functions of previous measurement variables and thus can only take finitely many values. 
  For example, the invalid diagram on the right of \cref{eq:channel-example} could result in infinitely many consistent assignments of values to $x$ and $y$, but this circularity is avoided in valid channel diagrams.
  Therefore the sum $\sum_{x, y, z \in \mathcal{B}_D \text{ satisfying } \mathcal{E}_D}tr_E(\interp{D_{x, y}}^\dagger \rho \interp{D_{x, y}})$ has only finitely many terms, 
  and thus gives a positive finite-particle state as required.
\end{proof}

Since all states considered are finite-rank and trace-class, the \emph{trace operator} 
$\rho \mapsto \sum_{x \in \mathcal{B}_H} \bra{x} \rho \ket{x}$ is well-defined and corresponds 
to the CQ interpretation of the discarding map $\tikzfig{discard}$.
We often restrict our attention to \emph{trace-preserving} channels, satisfying:
\begin{equation}\label{eq:causality}
  \tikzfig{causal-map}
\end{equation}
When $D$ is a pure map the equation above corresponds to $\interp{D}^\dagger \interp{D} = I$, i.e. $\interp{D}$ is an isometry.

We interpret formal sums of diagrams in $\mathtt{Channel}(\mathbf{ZXLO})$ as the sum of their CQ interpretation. 
With the scalars present in $\mathbf{ZXLO}$ we will only be able to form positive real linear combinations of diagrams, 
since $\interp{s}_{CQ} = \vert \interp{s} \vert^2$ for any scalar diagram $s$ in $\mathbf{ZXLO}$.
Note that this sum is akin to \emph{mixing}, rather than superposition, which instead corresponds to summing the Kraus maps in $\mathbf{ZXLO}$ before constructing the channel. 
In other words:
\[
  \interp{D}_{CQ} + \interp{D'}_{CQ} \text{ (mixing) } \, \neq \, \interp{D + D'}_{CQ} \text{ (superposition) }
\]
Causal maps are closed under taking \emph{probability distributions}: for a discrete probability distribution $p_i \in [0, 1]$ with $i \in X$, $X$ finite, 
and causal maps $\set{f_i}_{i \in X}$, the map defined by $\sum_{i \in X} p_i f_i$ is also causal.

We can represent a quantum channel with classical output as a diagram $D$ labelled by an output variable $\cvar{k}$.
Then, the probability of an outcome $e$, given an input state $\rho$, is obtained by setting $\cvar{k} = e$ in $D$
and tracing out the remaining outputs:
\[ 
P_D(\cvar{k} = e \, \vert \, \rho) = \quad \tikzfig{probability}
\]
We can use these post-selection annotations to define a notion of \emph{implementation} between channels.

\begin{definition}[Probabilistic implementation]\label{def:implementation}
  We say that a channel $D$ implements a channel $C$ with probability $p$ if there exists a function $\cvar{s} = f(\cvar{y_1}, \dots, \cvar{y_m})$
  of the classical outcome variables in $D$ such that $\interp{D_{\cvar{s}=1}}_{CQ} = p \interp{C}_{CQ}$.
  We say that $D$ implements $C$ deterministically if $\interp{D}_{CQ} = \interp{C}_{CQ}$.
\end{definition}

Measurements in quantum mechanics induce probabilistic branching and mixing. 
To result in a \emph{deterministic} computation, all the individual branches of the channel must contribute to the same process.

\begin{definition}[Determinism]\label{def:determinism}
  A channel $D$ is deterministic if all the branches are proportional to each other in the CQ interpretation. 
  It is strongly deterministic if all branches are equal up to a global phase, i.e. if all the branches are equal to each other in the CQ interpretation.
\end{definition}

Thus, a channel $D$ is deterministic if for any pattern of measurement outcomes, it results in the same pure computation.
It is strongly deterministic if these outcome patterns have equal probability.

\subsection{Qubit and optical channels}\label{subsec:channel-examples}

We now apply the $\mathtt{Channel}$ construction to our base language $\bf{ZXLO}$ of ZX diagrams and linear optical circuits, 
and give examples of how it can be used to compute properties of quantum channels.

\paragraph{Discarding and noise}

We start by considering the class of processes generated by only pure maps and discarding.
The discarding maps for each space together with the relations between them are as follows:
\[
  \tikzfig{discards}
\]
Using these maps we can represent common error channels from their Kraus decomposition.
For example, the photon loss channel with transmittivity $\eta$ is modeled as a beam splitter with an output discarded~\cite{oszmaniec_classical_2018}:
$$\tikzfig{photon-loss}$$
Similarly, the single-qubit bitflip error channel with probability $p$ can be written as:
$$\tikzfig{bitflip-error}$$
using the single-qubit state $\interp{\ket{\sqrt{p}}} = \sqrt{p} \ket{1} + \sqrt{1-p} \ket{0}$.
Our focus in this paper is however on \emph{noiseless channels} where the environment is fully observed by the experimenter.
These are simply defined as channels $D$ with a trivial environment $E_D = I$, i.e.\@ such that the discarding maps do not appear in the diagram.

\paragraph{Measurements and coarse-graining}

Even in a noiseless environment, destructive measurements can lead to branching and non-determinism in the CQ interpretation.
The $X$ and $Z$ single-qubit measurements can be written as:
\begin{gather*}
    \tikzfig{kraus-Zmeasurement} \quad \overset{\interp{\cdot}_{CQ}}{\longmapsto} \quad \rho \mapsto \ket{0} \bra{0}\rho\ket{0} + \ket{1} \bra{1}\rho\ket{1}\\
    \tikzfig{kraus-Xmeasurement} \quad \overset{\interp{\cdot}_{CQ}}{\longmapsto} \quad \rho \mapsto \ket{0} \bra{\plus}\rho\ket{\plus} + \ket{1} \bra{\minu}\rho\ket{\minu}\\
\end{gather*}
The photon-number resolving measurement on an optical mode has the following interpretation:
\[
  \tikzfig{LOQC/nPhotonMeasurement} \qquad \overset{\interp{\cdot}_{CQ}}{\longmapsto} \qquad \rho \mapsto \sum_{n \in \mathbb{N}} \ket{n} \bra{n} \rho \ket{n} \, .
\]
We can treat the diagrams built from these measurement commands as parametrised diagrams in $\mathbf{ZXLO}$.
For example, we can define spiders with a classical output variable, corresponding to entangling a qubit to an ancilla and measuring it.
\[
  \tikzfig{non-destructive-measurementZ}\, .
\]
While these spiders satisfy the usual spider fusion law, note that coarse-graining equations between output variables have an action on the scalars in the diagram.
Diagrams with coarse graining annotations satisfy the following rule.
\begin{restatable}{rewrite}{corseGraining}
  \label{eq:coarse-graining-rewrite}
  \[
  \tikzfig{lemma-measure-proof-3}
  \]
\end{restatable}
This rewrite removes a $\star$, a pair of variables, and an annotation from the diagram to correctly account for the coarse-grained probabilities: there are two possible outcomes for $\cvar{c}$, each happening with probability $\interp{\star}_{CQ} = \frac{1}{2}$.
Spiders labeled by a classical outcome allow to construct general projective-valued measures (POVMs), such as ancilla quantum measurements.
The outcomes correspond to Pauli byproducts and can be propagated through the diagram via ZX rewrites.

\paragraph{Feedforward and correction}

Feedforward is crucial to perform universal quantum computation with linear optics.
In this work, we only use bit-controlled operations generated by the optical controlled phase flip gate, given by:
$$\tikzfig{bit-controlled-phase}$$
This process acts as the identity if the control parameter $\ccvar{x} = 0$ and as the phase flip if $\ccvar{x} = 1$.
Formally, the Kraus map for this controlled process should be considered an additional generator of our base language, 
but the simplified syntax above directly captures the properties of its interpretation.
Switches can be built using bit-controlled phases:
\begin{equation}\label{eq:switch}
  \tikzfig{controlled-switch}
\end{equation}
and, vice versa, a composition of two switches can be used to build a controlled phase.
Probabilistic controlled phase gates can be obtained with passive linear optics~\cite{prevedel_high-speed_2007,lemr_experimental_2011}
but higher fidelities can be achieved with active optical switches~\cite{zaninFibercompatiblePhotonicFeedforward2021}.

On qubits, we use the classically controlled X and Z gates:
$$\tikzfig{controlled-XZ}$$
We moreover encode feedforward operations by reusing variables in the diagram.
Diagrams with feedforward annotations can be simplified using the following rule.
\begin{restatable}{rewrite}{feedForward}
  \label{eq:feedforward-rewrite}
  \[
  \tikzfig{controlled-correction}
  \]
\end{restatable}
As an example, we can use this to prove the teleportation protocol with a perfect Bell measurement:
$$\tikzfig{teleportation}$$
where the last step uses the feedforward rewrite twice.
This rewrite allows to remove measurement and correction commands from the diagram, while increasing the overall success probability (removing stars).
We can see the rewrite sequence above as showing \emph{determinism} for the teleportation protocol.
This correction argument can in fact be generalized to arbitrary ZX diagrams with flow, as we summarise in the remainder of this section.

\subsection{Pauli flow and determinism in qubitq patterns}

The notion of determinism of \cref{def:determinism} has been studied in detail for \emph{qubit channels}~\cite{browneGeneralizedFlowDeterminism2007,danosMeasurementCalculus2007}.
We now recast these results in our framework. \emph{Measurement patterns}~\cite{danosMeasurementCalculus2007} are a declarative language for MBQC, 
that describe how qubits are prepared, entangled, corrected, and measured.
They are usually defined as follows:

\begin{definition}[Measurement pattern~\cite{danosMeasurementCalculus2007}]\label{def:meas_pattern}
    A \emph{measurement pattern} consists of an $n$-qubit register $V$ with distinguished sets $I, O \subseteq V$ of input and output qubits and a sequence of commands consisting of the following operations:
    \begin{itemize}
      \item Preparations $N_i$, which initialise a qubit $i \in \comp{I}$ in the state $\ket{+}$.
      \item Entangling operators $E_{ij}$, which apply a $CZ$-gate to two distinct qubits $i$ and $j$.
      \item Destructive measurements $M_i^{\lambda,\alpha, \cvar{s}}$, which project a qubit $i\in \comp{O}$ onto the orthonormal basis $\{\ket{+_{\lambda,\alpha}},\ket{-_{\lambda,\alpha}}\}$, where $\lambda \in \{ \XYm, \XZm, \YZm \}$ is the measurement plane, $\alpha$ is the non-corrected measurement angle .
      The projector $\ket{+_{\lambda,\alpha}}\bra{+_{\lambda,\alpha}}$ corresponds to outcome $\cvar s = 0$ and $\ket{-_{\lambda,\alpha}}\bra{-_{\lambda,\alpha}}$ corresponds to outcome $\cvar s = 1$.
      \item Clifford operations $C_i$, which act on qubit $i$ by applying any Clifford unitary generated by the $S$ and $H$ gates.
      \item Corrections $[X_i]^t$, which depend on a measurement outcome (or a linear combination of measurement outcomes) $t\in\{0,1\}$ and act as the Pauli-$X$ operator on qubit $i$ if $t$ is $1$ and as the identity otherwise,
      \item Corrections $[Z_j]^s$, which depend on a measurement outcome (or a linear combination of measurement outcomes) $s\in\{0,1\}$ and act as the Pauli-$Z$ operator on qubit $j$ if $s$ is $1$ and as the identity otherwise.
    \end{itemize}
    A measurement pattern is \emph{runnable} if no command acts on a qubit already measured or not yet prepared (except preparation commands) and no correction depends on a qubit not yet measured.
\end{definition}

Runnable measurement patterns can be viewed as noiseless circuits in $\mathtt{Channel}(\mathbf{ZX})$.

\begin{proposition}
    Any runnable measurement pattern uniquely corresponds to a diagram in $\mathtt{Channel}(\mathbf{ZX})$ generated by the following operations:
    \[
    \tikzfig{pattern-def}    
    \]
    where $\lambda \in \set{I, S, H}$ is a Clifford map, and $\alpha, \omega \in [0, 2\pi)$ is an arbitrary angle.
\end{proposition}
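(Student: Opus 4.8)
The plan is to exhibit an explicit translation sending each command of a runnable measurement pattern to a generator (or short word of generators) of $\mathtt{Channel}(\mathbf{ZX})$, to check that this translation is invertible, and to identify the runnability condition with causal compatibility of the resulting classical annotation in the sense of \cref{def:annotations}. The quantum dictionary is routine: a preparation $N_i$ with $i\in\comp I$ becomes the $\ket{+}$ preparation on the wire indexed by $i$, while the input qubits $I$ are instead the wires entering on the left boundary and receive no preparation; an entangler $E_{ij}$ becomes a Hadamard edge joining the spiders carrying wires $i$ and $j$; a Clifford command $C_i$ (a single-qubit Clifford generated by $S$ and $H$) becomes the corresponding finite word of $\lambda\in\set{I,S,H}$ boxes on wire $i$; a measurement $M_i^{\lambda,\alpha,\cvar{s}}$ becomes the measurement generator on wire $i$ whose two Kraus branches are the effects $\bra{+_{\lambda,\alpha}}$ (branch $\cvar{s}=0$) and $\bra{-_{\lambda,\alpha}}$ (branch $\cvar{s}=1$), recording $\cvar{s}\in Y_D$; and the corrections $[X_i]^t$, $[Z_j]^s$ become classically controlled $X$ and $Z$ generators on the indicated wires, each with a fresh control variable $x\in X_D$ and a feedforward annotation $x = t$ (resp.\ $x = s$) encoding the prescribed $\F_2$-linear combination of earlier outcomes. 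Composing these along the command sequence using the sequential composition of $\mathtt{Channel}$ --- which by construction remembers the total command order --- produces a channel diagram $D$ whose annotation $\mathcal{E}_D$ is exactly the collection of feedforward annotations coming from the correction commands.

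Next I would show that \emph{runnability is precisely causal compatibility}. The total order $<_D$ on $X_D\cup Y_D$ induced by $D$ is, by construction, the order in which control-bit preparations and measurements occur in the command list. A correction $[X_i]^t$ produces a feedforward annotation $x = t(\cvar{y_1},\dots,\cvar{y_k})$, and by \cref{def:annotations} the relation $\leq_{D,\mathcal{E}_D}$ fails to be a partial order exactly when some $\cvar{y_j}$ on a right-hand side is not already below $x$ in $<_D$ --- that is, exactly when a correction depends on a measurement that has not yet been performed, which is the runnability clause forbidding corrections to depend on unmeasured qubits. The remaining runnability clauses --- no command acting on an already-measured or not-yet-prepared qubit --- are nothing but the conditions that the word of quantum generators typechecks as a morphism of $\mathbf{ZX}$: every wire is born at a preparation or an input, is acted on by some sequence of $E$, Clifford, and correction generators, and dies at a measurement or an output (an input-output qubit being a wire that traverses the diagram with no preparation and no measurement). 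Hence the image of the forward translation is exactly the class of $\mathtt{Channel}(\mathbf{ZX})$ diagrams built from the listed generators together with a causally compatible annotation.

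For the converse and for uniqueness, given such a diagram $D$ I would read off a command sequence by enumerating its generator boxes in any order refining the premonoidal dependency order (equivalently, any linear extension of $<_D$ compatible with the per-wire order of boxes), emitting preparations or inputs, entanglers, Clifford boxes, controlled Paulis, and measurements in turn, turning each feedforward annotation back into the $\F_2$-linear correction function and coalescing maximal runs of adjacent Clifford boxes on a common wire into a single $C_i$ command. By the previous paragraph the pattern obtained is runnable, and the two translations are mutually inverse up to (i) the choice of an $S,H$-word representing a given Clifford and (ii) the interchange of generators sharing neither a wire nor a classical variable --- and these are exactly the identifications already imposed in $\mathtt{Channel}(\mathbf{ZX})$ as a premonoidal category. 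This yields the asserted one-to-one correspondence.

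The only genuine obstacle is on the classical side: one must verify carefully that the ``linear combination of measurement outcomes'' appearing in the correction commands of \cref{def:meas_pattern} is faithfully captured by a feedforward annotation, and --- more importantly --- that the acyclicity demanded by runnability coincides, clause by clause, with causal compatibility of $\leq_{D,\mathcal{E}_D}$ rather than being merely implied by it. Secondarily, the word ``uniquely'' must be read modulo the evident equivalences noted above (Clifford normal-form choices and interchange of independent generators), so that the statement is literally correct; pinning down this quotient precisely is the place where a hasty argument would slip.
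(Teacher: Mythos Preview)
Your approach is correct and matches the paper's, which simply observes that each defining operation of a measurement pattern can be built from the listed generators, so that a runnable command list yields a unique premonoidal diagram. The paper's proof is a two-sentence sketch and does not spell out the converse direction, the identification of runnability with causal compatibility, or the quotient needed to make ``uniquely'' literal --- all of which you treat more carefully than the original.
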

\begin{proof}
  The defining operations of measurement patterns can all be built from the above generators, thus any runnable list of commands from \cref{def:meas_pattern}, 
  defines a unique premonoidal diagram.
\end{proof}

We can rewrite any measurement pattern to a $\mathtt{Channel}(\mathbf{ZX})$ diagram in a special form where all corrections appear before single-qubit measurements. 

\begin{definition}[Channel MBQC form]
  A diagram $D$ in $\mathtt{Channel}(\mathbf{ZX})$ is in MBQC form if there is a labelled open graph $(G, I, O, \lambda, \alpha)$ such that $D$ can be written as:
  \[ 
    \tikzfig{MBQC/MBQC-form-channel}
  \]
  where $x_i, z_i$ are control variables, $\cvar{k_i}$ are output variables, $\mathcal{E}$ is a set of feedforward annotations (exclusively)
  and $G$ denotes a graph state ZX diagram, called the \emph{underlying topology} of $D$.
\end{definition}

\begin{proposition}
    Any MBQC-form diagram in $\mathtt{Channel}(\mathbf{ZX})$ defines a measurement pattern with the same CQ interpretation.
\end{proposition}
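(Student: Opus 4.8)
The plan is to read a measurement pattern directly off the MBQC-form diagram, inverting the generator-level correspondence of the previous proposition. From a diagram $D$ in MBQC form I extract the labelled open graph $(G,I,O,\lambda,\alpha)$ (the underlying topology $G$, the input/output sets, and the plane/angle of each non-output measurement effect), the bit-valued control variables $x_i,z_i$ on each non-output wire, the measurement-outcome variables $\cvar{k_i}$, and the feedforward annotation $\mathcal{E}_D=\mathcal{E}$. The candidate command list is then: preparations $N_i$ for every $i\in\comp{I}$; entangling operators $E_{ij}$ for every edge $ij$ of $G$; and, for each non-output vertex $i$, the corrections $[Z_i]^{z_i}$, $[X_i]^{x_i}$ immediately followed by the measurement $M_i^{\lambda(i),\alpha(i),\cvar{k_i}}$, where the dependence of $x_i$ (resp.\ $z_i$) on earlier outcomes is exactly the function $\mathcal{E}$ assigns to it.

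The one genuine choice is the order in which the measurement blocks are listed. Since $D$ is a valid channel diagram, $\mathcal{E}$ is causally compatible with $D$, so $\leq_{D,\mathcal{E}}$ is a partial order; I fix any linear extension $\sqsubseteq$ of it and order the non-output vertices $i$ by the position of $\cvar{k_i}$ under $\sqsubseteq$. All $N_i$ and $E_{ij}$ are placed first, which is harmless because these are pure maps in the monoidal center of $\mathtt{Channel}(\mathbf{ZX})$ and so may be reordered and regrouped into the graph-state prefix $E_G N_{\comp{I}}$ of the MBQC-form picture.

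It then remains to check runnability and equality of the CQ interpretations. For runnability: no correction $[X_i]^{x_i}$ can depend on the outcome $\cvar{k_j}$ of a not-yet-measured qubit, since such a dependence forces $\cvar{k_j}\leq_{D,\mathcal{E}}x_i$ while on wire $i$ of $D$ we have $x_i<_D\cvar{k_i}$, so $\cvar{k_j}\sqsubseteq x_i\sqsubseteq\cvar{k_i}$ and block $j$ precedes block $i$ in our order; the other runnability conditions hold because each qubit is prepared exactly once in the prefix and measured exactly once in its own block, with all entangling operators in between. For equality of CQ interpretations: the pattern-composite and $D$ are built from the same finite multiset of generators with the same wiring and carry the same annotation $\mathcal{E}$, which is causally compatible with the new ordering too (the $\mathcal{E}$-edges already point the same way as $\sqsubseteq$, so no cycle appears); since $\interp{\cdot}_{CQ}$ is a sum over annotation-consistent branch values of partial traces of the (order-independent) pure interpretation — data determined by the underlying string diagram and by $\mathcal{E}$ alone, not by the premonoidal presentation — the two interpretations agree. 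Combining this with the fact that each listed command is precisely the pattern command matching its generator under the previous proposition's dictionary exhibits a measurement pattern with CQ interpretation $\interp{D}_{CQ}$.

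The main obstacle is the bookkeeping around the feedforward annotations: one must verify that an arbitrary causally compatible $\mathcal{E}$ really does give rise to non-circular, legitimately-ordered correction dependencies for a suitable linear extension — this is exactly where causal compatibility is used — and that the $\{0,1\}$-valued functions of outcomes supplied by $\mathcal{E}$ are admissible as the dependencies $t$ of the correction commands $[X_i]^t,[Z_i]^t$. Once the order is fixed, the rest of the verification is routine.
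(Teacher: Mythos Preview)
Your proposal is correct and follows the same approach as the paper: decompose the graph state into $CZ$ gates and preparations, and un-fuse the control variables $x_i,z_i$ on each measurement wire into separate $X$ and $Z$ correction commands. The paper's proof is two lines and leaves the ordering and runnability checks implicit; you have simply spelled these out by choosing a linear extension of $\leq_{D,\mathcal{E}}$ and verifying that causal compatibility guarantees the correction dependencies are non-circular.
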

\begin{proof}
  To go from an MBQC form diagram to a measurement pattern, we simply (1) decompose the graph state diagram with CZ gates and 
  (2) extract the control variables as X and Z correction gates using spider un-fusion. 
\end{proof}

In the context of measurement patterns, we require an even stronger form of determinism.

\begin{definition}[Uniform and stepwise determinism]
  A measurement pattern is (strongly) deterministic if it is (strongly) deterministic as a diagram in $\mathtt{Channel}(\bf{ZX})$.
  A measurement pattern is uniformly deterministic if it is deterministic for all choices of measurement angles $\alpha_i$.
  It is stepwise deterministic if all the patterns $\set{P_i}_{i = 1}^m$ obtained by truncating $P$ after the $i$th measurement command, 
  and adding back every correction command that depends on the first $i$ measurement variables, are deterministic.
\end{definition}

Flow structure gives sufficient (and sometimes necessary) conditions for a labelled open graph to be implementable by a deterministic measurement pattern.
It incorporates a time-ordering of the measurements and a function that indicates where to correct undesired measurement outcomes.
Gflow (or generalized flow) is a specific type of flow structure that ensures that the target linear map is an isometry for all choices of measurement angles.
\begin{definition}
  For a graph $G = (V, E)$ and a subset of its vertices $K \subseteq G$,
  let $\Odd(K) \coloneqq \set{u \in V : \abs{N(u) \cap K} \equiv 1  \!\mod 2}$ be the \emph{odd neighbourhood} of $K$ in $G$,
  where $N(u)$ is the set of neighbours of $u$.
\end{definition}
\begin{definition}[Gflow~\cite{browneGeneralizedFlowDeterminism2007}]
  \label{def:gflow}
  An open graph $(G, I, O)$ labelled with measurement planes $\lambda: \comp{O} \to \set{\XYm, \XZm, \YZm}$ has generalized flow (or gflow)
  if there exists a map $g: \comp{O} \to \cal{P}(\comp{I})$, where $\cal{P}$ is the power set function, and a strict partial order $<$ over $V$ such that for all $v \in \comp{O}$:
  \begin{enumerate}
    \item for all $w \in g(v)$ if $v \neq w$ then $v < w$
    \item for all $w \in \Odd(g(v))$ if $v \neq w$ then $v < w$
    \item $\lambda(v) = \XYm \, \implies \, v \notin g(v) \land v\in \Odd(g(v))$
    \item $\lambda(v) = \XZm \, \implies \, v \in g(v) \land v\in \Odd(g(v))$
    \item $\lambda(v) = \YZm \, \implies \, v \in g(v) \land v\notin \Odd(g(v))$
  \end{enumerate}
  The set $g(v)$ is called the \emph{correction set of $v$}.
\end{definition}

Extending the notion of gflow, \emph{Pauli flow} allows for the flow structure to take into account which vertices are measured in a Pauli basis..
In this setting, the function $\lambda$ defining measurement planes is of type $\lambda: \comp{O} \to \set{\XYm, \XZm, \YZm, \Xm, \Ym, \Zm}$,
while the function $\alpha$ is only defined for nodes $v \in G$ when $\lambda(v) \in \set{\XYm, \XZm, \YZm}$.
In other words, the pattern specifies vertices that are measured in the $X$, $Y$, or $Z$ basis.
For these specific measurements, the correction set is less restricted, and we obtain the conditions below.
\begin{definition}[Pauli flow~\cite{browneGeneralizedFlowDeterminism2007, simmonsRelatingMeasurementPatterns2021}]
  \label{def:pauli-flow}
  An open graph $(G, I, O)$ labelled with measurement planes $\lambda: \comp{O} \to \set{\XYm, \XZm, \YZm, \Xm, \Ym, \Zm}$ has Pauli flow
  if there exists a map $p: \comp{O} \to \cal{P}(\comp{I})$ and a strict partial order $<$ over $V$ such that:
  \begin{enumerate}
    \item for all $w \in p(v)$ if $\lambda(w) \notin \set{\Xm, \Ym} \land v \neq w$ then $v < w$
    \item for all $w \in \Odd(p(v))$ if $\lambda(w) \notin \set{\Ym, \Zm} \land v \neq w$ then $v < w$
    \item for all $w \leq v$ if $\lambda(w) = \Ym \land v \neq w$ then $(w \in p(v) \Longleftrightarrow w \in \Odd(p(v)))$
    \item $\lambda(v) = \XYm \, \implies \, v \notin p(v) \land v\in \Odd(p(v))$
    \item $\lambda(v) = \XZm \, \implies \, v \in p(v) \land v\in \Odd(p(v))$
    \item $\lambda(v) = \YZm \, \implies \, v \in p(v) \land v\notin \Odd(p(v))$
    \item $\lambda(v) = \Xm \, \implies \, v \in \Odd(p(v))$
    \item $\lambda(v) = \Zm \, \implies \, v \in p(v)$
    \item $\lambda(v) = \Ym \, \implies \, (v \notin p(v) \land v \in \Odd(p(v))) \lor (v \in p(v) \land v\notin \Odd(p(v)))$
  \end{enumerate}
\end{definition}

To understand the definition above, first note that for measurements in the planes $\set{\XYm, \XZm, \YZm}$, the conditions 
are the same as for gflow.
The above conditions $7-9$ are obtained by taking the pairwise disjunctions \enquote{$\lor$}
of conditions $4-6$, using the fact that each Pauli measurement belongs to a pair of planes.   
To obtain condition $1$, note that a Pauli $X$ error on a qubit measured in the $X$ basis only induces a global phase on the state.
Therefore, we must not correct $X$ errors on $X$ measurements.
Condition $2$ is the equivalent condition for $Z$ errors
and condition $3$ ensures that $Y$ measurements need only carry $Y = XZ$ corrections. 
A consequence is that $Y$ measurements in a graph with Pauli flow need not carry corrections, justifying conditions $1-2$.

We can now state the main result of~\cite{browneGeneralizedFlowDeterminism2007} which ensures that labelled open graphs with flow are implementable by deterministic patterns.
\begin{theorem}\cite{browneGeneralizedFlowDeterminism2007}\label{thm:flow-determinism}
  If a labelled open graph $\mathcal{M}$ has generalized flow, then the pattern defined by:
  \[\prod^<_i (X^{s_i}_{g(i) \cap \{j \vert i < j\}} Z^{s_i}_{\Odd(g(i))\cap \{j \vert i < j\}} M_i^{\lambda_i, \alpha_i, \cvar{s}_i}) E_G N_{\comp{I}}\]
  where $\prod^{<}$ denotes concatenation in the order $<$,
 is runnable, uniformly, strongly and stepwise deterministic and realises the target linear map $T(\mathcal{M})$, which is guaranteed to be an isometry.
\end{theorem}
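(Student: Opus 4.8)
The plan is to derive all of the claims from a single inductive ``correction'' argument along the strict partial order $<$. This recovers the theorem of~\cite{browneGeneralizedFlowDeterminism2007}, but I would phrase it diagrammatically in $\mathtt{Channel}(\mathbf{ZX})$, using one standard fact: for $v\in\comp{I}$ the operator $K_v\coloneqq X_v\prod_{w\sim v}Z_w$ satisfies $K_v\,E_G N_{\comp{I}}=E_G N_{\comp{I}}$ (push $X_v$ through the incident $CZ$'s to create $\prod_{w\sim v}Z_w$, which then cancels, and $X_v\ket{+}_v=\ket{+}_v$). Since $g(v)\subseteq\comp{I}$, it follows that $C_v\coloneqq X_{g(v)}Z_{\Odd(g(v))}$, which equals $\prod_{w\in g(v)}K_w$ up to sign, satisfies $C_v\,E_G N_{\comp{I}}=E_G N_{\comp{I}}$ up to sign.

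\textbf{Runnability and target map.} Extend $<$ to a total order and take $\prod^{<}$ with respect to it. By conditions~1--2 of \cref{def:gflow}, the correction applied after measuring $i$ is supported on vertices strictly $<$-above $i$ --- hence on qubits measured later, or on outputs --- and depends only on previously obtained outcomes, so the pattern is runnable (the $CZ$'s and the preparations commute to the front as in the measurement calculus). Post-selecting every $\cvar{s}_i=0$ removes all corrections and leaves $\bigl(\prod_{i\in\comp{O}}\bra{+_{\lambda(i),\alpha(i)}}_i\bigr)E_G N_{\comp{I}}=T(\mathcal{M})$ by \cref{def:ogs-to-linear-map}.

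\textbf{Correction argument.} We show every branch equals the $0$-branch up to a global phase. Fixing an outcome string $\vec{s}$, the only change from the $0$-branch is that each effect $\bra{+_{\lambda(i),\alpha(i)}}$ becomes $\bra{+_{\lambda(i),\alpha(i)}}P_i^{s_i}$ with $P_i$ the Pauli perpendicular to the plane of $i$ (namely $Z$, $X$, $Y$ for \XYm, \YZm, \XZm, read off the definitions of $\bra{\pm_{\lambda,\alpha}}$). Two facts from \cref{def:gflow} connect these byproducts to the pattern: conditions~1--2 say $C_v$ restricted to $V\setminus\{v\}$ is supported on $\{j\mid v<j\}$ and is \emph{exactly} the truncated correction the pattern applies after measuring $v$; conditions~3--5 say $C_v$ restricted to $\{v\}$ is $Z_v$, $X_v$, or $X_vZ_v=-iY_v$ respectively --- that is, $P_v$ up to a root of unity $\zeta_v$. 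Processing the measured vertices in increasing $<$-order, at the current $<$-smallest vertex $v$ slide $P_v^{s_v}$ rightward past the (disjoint) effects on smaller vertices until it meets $E_G N_{\comp{I}}$; using $P_v=\zeta_v\,(C_v|_{\{v\}})$, that $C_v$ stabilizes $E_G N_{\comp{I}}$, and disjointness of $C_v|_{\{v\}}$ from $C_v|_{V\setminus\{v\}}$, one gets $P_v^{s_v}E_G N_{\comp{I}}=\zeta_v^{s_v}(C_v|_{V\setminus\{v\}})^{s_v}E_G N_{\comp{I}}$ up to sign --- i.e.\ (up to a branch-dependent phase) the very truncated correction the pattern applies after $M_v$. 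Sliding it back past $\bra{+_{\lambda(v),\alpha(v)}}$ (disjoint wires), the two copies meet and square to the identity, so the branch picks up a phase, the measurement of $v$ is performed, and $v$'s correction is consumed; since every later $C_w$ is supported on $\{w\}\cup\{j\mid w<j\}$, which excludes $v$, it still commutes through the new effect $\bra{+_{\lambda(v),\alpha(v)}}$, and the induction proceeds. At the end the branch equals $T(\mathcal{M})$ up to a global phase, giving strong determinism. It is uniform since no $\alpha_i$ entered the argument; it is stepwise deterministic since truncating the pattern after the $i$-th measurement (re-adding the first $i$ corrections) leaves the hypotheses of the induction intact; equal branch probabilities follow from the branches coinciding up to phase; and $T(\mathcal{M})$ is an isometry because summing the branches yields a trace-preserving channel $\rho\mapsto\sum_{\vec{s}}\interp{D_{\vec s}}^\dagger\rho\,\interp{D_{\vec s}}$ (no discard occurs).

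\textbf{Main obstacle.} The content is short, but the phase and sign bookkeeping is where the care goes: the $-i$ in $X_vZ_v=-iY_v$, the sign in $\prod_{w\in g(v)}K_w=\pm X_{g(v)}Z_{\Odd(g(v))}$, the sign from $C_v\,E_G N_{\comp{I}}=\pm E_G N_{\comp{I}}$, and the sign from $(C_v|_{V\setminus\{v\}})^2=\pm I$ when the $X$- and $Z$-supports of a correction overlap. The way through is to phrase the induction hypothesis as ``each branch equals $T(\mathcal{M})$ \emph{up to a global phase}'' (equivalently, in the CQ interpretation), under which all of these are inert; checking that this hypothesis is preserved by each elimination step --- in particular that conditions~1--2 really do keep every commutation either free or global --- is the technical heart.
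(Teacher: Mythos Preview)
The paper does not give its own proof of this theorem: it is stated with a citation to \cite{browneGeneralizedFlowDeterminism2007} and then used as a black box (the subsequent Corollary is a restatement in $\mathtt{Channel}(\mathbf{ZX})$, again without proof). So there is nothing to compare against except the original reference.

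Your sketch is essentially the standard stabilizer argument from \cite{browneGeneralizedFlowDeterminism2007}: use that $K_w=X_w\prod_{u\sim w}Z_u$ stabilises $E_G N_{\comp{I}}$ for $w\in\comp{I}$, hence so does $\prod_{w\in g(v)}K_w=\pm C_v$; then use conditions 3--5 to identify $C_v|_{\{v\}}$ with the byproduct Pauli $P_v$ (up to phase) and conditions 1--2 to identify $C_v|_{V\setminus\{v\}}$ with the truncated correction the pattern actually applies. This is correct, and you have correctly flagged the phase/sign accounting as the place requiring care. Two small clean-ups: the channel should read $\rho\mapsto\sum_{\vec{s}}\interp{D_{\vec s}}\,\rho\,\interp{D_{\vec s}}^\dagger$; and from trace preservation plus equal-up-to-phase branches you deduce $2^{|\comp{O}|}\,T^\dagger T=I$, so ``$T(\mathcal{M})$ is an isometry'' holds up to the global $2^{-|\comp{O}|/2}$ normalisation already absorbed in the unit effects $\bra{+_{\lambda,\alpha}}$ --- worth stating explicitly rather than leaving implicit.
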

The same result holds for Pauli flow where the measurement planes $\lambda_i$ can take the values $\set{\Xm, \Ym, \Zm}$.
However, a converse version of this theorem only holds for gflow, see~\cite{browneGeneralizedFlowDeterminism2007}.
The theorem indicates that $X$ corrections will be performed in $g(v) \backslash \set{v}$ and $Z$ corrections in $\Odd(g(v)) \backslash \set{v}$, for any qubit $v \in \comp{O}$.
Moreover, any qubit circuit can be turned into a labelled open graph satisfying the gflow conditions,
which ensures that MBQC can perform universal quantum computation~\cite{backensThereBackAgain2021}.

The above result can be expressed succinctly as an equation in $\mathbf{Channel}(\mathbf{ZX})$.
\begin{corollary}
  If a labelled open graph $(G, I, O, \lambda)$ has flow then, for any choice of measurement angles $\alpha$,
  the following equality holds in the CQ interpretation:
  \[
    \tikzfig{MBQC/determinism}
  \]
  where $\mathcal{E}$ is the set of feedforward equations $x_i =  \oplus_{j < i , \, i \in g(j)} \cvar{k_j}$ and $z_i =  \oplus_{j < i, \,i \in \Odd(g(j))} \cvar{k_j}$.
\end{corollary}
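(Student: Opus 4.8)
The plan is to recognise the stated identity as \cref{thm:flow-determinism} transported into the syntax of $\mathtt{Channel}(\mathbf{ZX})$, so that the real work is one of translation rather than of new quantum-information content. First I would read the left-hand diagram as a runnable measurement pattern using the correspondence set up earlier in this section between MBQC-form channel diagrams and patterns: decompose the graph-state diagram into $CZ$ gates, and un-fuse, by spider un-fusion, the bit-controlled $X$ and $Z$ gates prescribed by the feedforward annotation $\mathcal{E}$ into explicit correction commands. The one genuine bookkeeping point is that the equations $x_i = \bigoplus_{j < i,\, i \in g(j)} \cvar{k_j}$ and $z_i = \bigoplus_{j < i,\, i \in \Odd(g(j))} \cvar{k_j}$ are the \emph{transpose} of the corrections in \cref{thm:flow-determinism}: qubit $i$ receives an $X$-correction weighted by the outcome $\cvar{k_j}$ precisely when $j < i$ and $i \in g(j)$, i.e. precisely when $i$ lies in $g(j) \cap \{j' : j < j'\}$, and symmetrically for $Z$-corrections via $\Odd$. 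One also checks that the strict partial order $<$ witnessing the flow refines the premonoidal order $<_D$ of the measurement commands, so that $\mathcal{E}$ is causally compatible with the diagram and the left-hand side is a well-formed channel; for Pauli-measured vertices this additionally requires noting that the corrections omitted by conditions $1$--$3$ of \cref{def:pauli-flow} are exactly those absent from $\mathcal{E}$.

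Having matched the two syntaxes, I would invoke \cref{thm:flow-determinism} (in its Pauli-flow form, since $\lambda$ may take the values $\Xm, \Ym, \Zm$), which gives that the pattern is uniformly, strongly and stepwise deterministic for every choice of angles $\alpha$ and realises the target linear map $T(\mathcal{M})$, an isometry. It then remains to spell out what strong determinism says about the CQ interpretation. By \cref{def:determinism} all $2^{\abs{\comp{O}}}$ branches of the corrected pattern agree up to a global phase, so the sum over measurement outcomes in the CQ interpretation is a sum of identical terms; the scalar $\interp{\star}_{CQ} = \tfrac12$ carried by each of the $\abs{\comp{O}}$ stars appearing in the MBQC-form diagram then rebalances this sum against the $2^{\abs{\comp{O}}}$ equiprobable branches, leaving on the quantum wires exactly the pure map $T(\mathcal{M})$ and on the classical wires $\cvar{k_i}$ the induced uniform distribution --- which is the displayed equality.

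I expect the main obstacle to be entirely in the first step: faithfully translating the column-indexed description of corrections native to \cref{thm:flow-determinism} into the row-indexed feedforward annotations used in $\mathtt{Channel}(\mathbf{ZX})$, and verifying causal compatibility of the resulting annotation. Once that dictionary is in place, the rest is a direct appeal to the imported theorem together with the scalar bookkeeping above. An alternative route that never leaves the graphical calculus is to apply the feedforward rewrite \cref{eq:feedforward-rewrite} repeatedly, following the order $<$: each application eliminates one measurement--correction pair (and one $\star$), and after $\abs{\comp{O}}$ steps only the bare MBQC-form diagram for $T(\mathcal{M})$ remains --- but this still rests on flow to guarantee that the rewrite is applicable at every step, which is precisely the content of \cref{thm:flow-determinism}.
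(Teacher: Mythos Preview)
Your proposal is correct and matches the paper's treatment: the corollary is stated immediately after \cref{thm:flow-determinism} with no separate proof, as it is precisely that theorem transported into the $\mathtt{Channel}(\mathbf{ZX})$ syntax, and your write-up simply makes the translation (transposing the correction indices, checking causal compatibility, and handling the $\star$ scalars against the $2^{\abs{\comp{O}}}$ equiprobable branches) explicit. The alternative graphical route via repeated application of the feedforward rewrite is also sound and in the spirit of the paper's calculational style, though not spelled out there.
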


\section{Stream construction}\label{sec:stream}

Currently available optical setups are built using photon sources, optical routers and delay lines.
In order to represent these additional components, we need to add a time dimension to our diagrams, 
and allow classical and quantum feedback loops connecting different time-steps.

The stream processes introduced in this section can be seen as a generalisation of the notion of \emph{quantum comb}~\cite{chiribella_quantum_2008,chiribella_theoretical_2009}
to a recursive setting with infinitely many inputs and outputs. 
They enable formal descriptions of lattices, quantum optical setups and circuits studied in the context of quantum convolutional codes \cite{ollivier_description_2003,wilde_quantum-shift-register_2009}.
Our definition is based on the concept of intensional monoidal stream of~\cite{dilavoreMonoidalStreamsDataflow2022,lavore_coinductive_2025}.
The notion of observational equality for quantum protocols has already been studied in~\cite{caretteGraphicalLanguageDelayed2021},
but we here extend it with rewriting rules allowing us to reason by \emph{induction}. 
This is done while avoiding the regularity condition of~\cite[Section E]{caretteGraphicalLanguageDelayed2021} 
where processes are assumed to become constant after a finite number of time-steps.
Bridging theory and practice, we give multiple examples throughout the section of how this language can be used to reason about real-world experimental setups. 

\subsection{Recursive definition of stream processes}

We define a stream process recursively by what it does at time step zero, together with a stream describing what it does at future time steps.
We use letters $X, Y$ to denote infinite sequences $(X_0, X_1, X_2, \dots)$ of objects in a base graphical language $\bf{C}$,
and define $X \otimes Y$ as the sequence $(X_0 \otimes Y_0, X_1 \otimes Y_1, X_2 \otimes Y_2, \dots)$.
We use $X^\plus$ to denote the sequence obtained from $X$ by removing the head and by $\partial X$ the sequence $(I, X_0, X_1, \dots)$
obtained by adding the monoidal unit $I$ to $X$ as the head.
For $M_0$ an object of $\bf{C}$, we denote by $M_0 \cdot X$ the sequence $(M_0 \otimes X_0, X_1, X_2, \dots)$.

\begin{definition}[Stream]\label{def-stream}
  An intensional stream $\bf{f}: X \to Y$ in $\mathtt{Stream}(\bf{C})$ with \enquote{initial memory} $M_0$ is
  a process $f_0 : M_0 \otimes X_0 \to M_1 \otimes Y_0$ in $\bf{C}$ (called \enquote{now}) and a stream $\bf{f}^\plus : X^\plus \to Y^\plus$
  with initial memory $M_1$ (called \enquote{later}).
  \[
    \tikzfig{figures/stream-def}
  \]
  The wire labelled $M_0$ carries the initial state of the memory, $X_0$ and $Y_0$ are the input and output at time-step $0$,
  and $M_1$ is the memory created at time-step $0$ which serves as the initial memory for the stream $\bf{f}^\plus$.
\end{definition}

The recursive definition above defines the set of intensional streams with an initial memory as the final fixpoint~\cite{kozenPracticalCoinduction2017} of the following equation:
\[\mathtt{Stream}(\mathbf{C})(M_0 \cdot X, Y) = \sum_{M_1 \in \bf{C}} \mathbf{C}(M_0 \otimes X_0, M_1 \otimes Y_0) \times \mathtt{Stream}(\mathbf{C})(M_1 \cdot X^+, Y^+)\]
where $\sum$ denotes the disjoint union and $\times$ the cartesian product of sets.
As shown in~\cite{dilavoreMonoidalStreamsDataflow2022}, intensional streams are fully specified by their action at every time step and we have:
\[\mathbf{Stream}(\mathbf{C})(X, Y) \simeq \sum_{M \in \mathbf{C}^\mathbb{N}} \prod_{i \in \mathbb{N}} \mathbf{C}(M_{i} \otimes X_i, M_{i + 1} \otimes Y_i)\]
We say that two streams are \emph{intensionally equal} --- denoted by the equal sign $=$ --- if they have the same action at every time-step and are thus equal in the above set.

The simplest class of streams are \emph{constant streams} with no memory:
given any morphism $f: x \to y$ in $\bf{C}$ we obtain a stream $\bf{f}: X \to Y$ between constant objects $X=(x, x, \dots)$ and $Y= (y, y, \dots)$, 
with empty memory $M = I$, with $f_0 = f$ and $\bf{f}^\plus = \bf{f}$.
We denote the constant stream induced by a diagram $f$ simply by thickening its wires.
For example, the following constant stream defines the swap between any two constant objects:
\[
  \tikzfig{figures/stream-constant}
\]
The equation above is read as a recursive definition: the swap stream is the \enquote{swap} now and itself later.
More generally we may consider the class of \emph{memoryless streams} where the memory type is the unit of the tensor $(I, I, \dots)$,
corresponding to sequences $\set{f_t: x_t \to y_t}_{t \in \mathbb{N}}$ in the base category.
For example, we can now define the swap operation between any two objects $X= (X_0, X_1, \dots)$ and $Y = (Y_0, Y_1, \dots)$
as the sequence $\set{\mathtt{swap}_t: X_t \otimes Y_t \to Y_t \otimes X_t}_{t \in \mathbb{N}}$.

Streams can be composed in sequence or in parallel, forming a symmetric premonoidal category.
\begin{proposition}\label{prop:stream-monoidal}
  Intensional streams over a base symmetric (pre)monoidal category $\mathbf{C}$ form a symmetric (pre)monoidal category $\mathtt{Stream}(\mathbf{C})$ where, for 
  streams $\bf{f}: X \to Y$, $\bf{g}: Y \to Z$ and $\bf{h}: A \to B$ with initial memory types $M^f_0$, $M^g_0$ and $M^h_0$:
  \begin{itemize}
    \item the \emph{sequential composition} $\bf{g} \circ \bf{f}: X \to Z$ is the stream with initial memory $M^f_0 \otimes M^g_0$, 
    acting \emph{now} as:
    \[\tikzfig{stream-sequential} \]
    and \emph{later} as {\normalfont $(\bf{g} \circ \bf{f})^\plus \coloneq \bf{g}^\plus \circ \bf{f}^\plus$} with initial memory $M^f_1 \otimes M^g_1$.
    \item the \emph{parallel composition} $\bf{f} \otimes \bf{h}: X \otimes A  \to Y \otimes B$ is the stream with initial memory $M^f_0 \otimes M^h_0$, 
    acting \emph{now} as:
    \[\tikzfig{stream-parallel} \]
    and \emph{later} as {\normalfont $(\bf{f} \otimes \bf{h})^\plus \coloneq \bf{f}^\plus \otimes \bf{h}^\plus$} with initial memory $M^f_1 \otimes M^h_1$.
  \end{itemize}
\end{proposition}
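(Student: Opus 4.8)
The plan is to verify that the data given in the statement — the initial memory types, the ``now'' components, and the recursively defined ``later'' streams — indeed assemble into a well-defined symmetric premonoidal category. Since intensional streams are defined as the final fixpoint of the given functor, the natural proof strategy is \emph{coinduction}: to show that two streams are intensionally equal, it suffices to exhibit a bisimulation relating them, or equivalently to check that both sides satisfy the same recursive equation. First I would verify that sequential composition is well-typed: if $\bf{f}: X \to Y$ has memory sequence $(M^f_i)$ and $\bf{g}: Y \to Z$ has memory sequence $(M^g_i)$, then the now-component $\tikzfig{stream-sequential}$ has the correct type $M^f_0 \otimes M^g_0 \otimes X_0 \to M^f_1 \otimes M^g_1 \otimes Z_0$, using that the shared wire $Y_0$ is produced by $f_0$ and consumed by $g_0$. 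The later-component is $\bf{g}^\plus \circ \bf{f}^\plus$ by definition, so the construction is manifestly productive and the recursion is well-founded in the coinductive sense. The same check applies to parallel composition, where now the input object is $X \otimes A$, the memory is $M^f_0 \otimes M^h_0$, and the two strands act on disjoint wires.

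Next I would check the categorical axioms, each by coinduction. For associativity of $\circ$, I would show that $(\bf{h} \circ \bf{g}) \circ \bf{f}$ and $\bf{h} \circ (\bf{g} \circ \bf{f})$ have equal now-components — this reduces to associativity of composition and the interchange/bifunctoriality laws in $\mathbf{C}$, together with reassociating the memory wires $M^f_0 \otimes (M^g_0 \otimes M^h_0) \cong (M^f_0 \otimes M^g_0) \otimes M^h_0$ — and equal later-components by the inductive hypothesis applied to $\bf{f}^\plus, \bf{g}^\plus, \bf{h}^\plus$. The identity stream on $X$ is the constant/memoryless stream with $f_0 = \mathtt{id}_{X_0}$ and trivial memory; the unit laws $\bf{f} \circ \mathtt{id} = \bf{f} = \mathtt{id} \circ \bf{f}$ follow because composing with identity in $\mathbf{C}$ and tensoring memory with $I$ are trivial, and the later-parts match coinductively. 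For the (pre)monoidal structure, I would similarly verify the pentagon, triangle, and hexagon coherence conditions and the naturality of the associator, left/right unitors, and symmetry — each is a constant (memoryless) stream built from the corresponding structural isomorphism of $\mathbf{C}$, so the now-level equations are exactly those of $\mathbf{C}$ and the later-level equations close under coinduction. The key subtlety is that only the \emph{premonoidal} interchange law holds in general: $(\bf{f} \otimes \mathtt{id}) \circ (\mathtt{id} \otimes \bf{h})$ and $(\mathtt{id} \otimes \bf{h}) \circ (\bf{f} \otimes \mathtt{id})$ need not agree because the memory wires of $\bf{f}$ and $\bf{h}$ are threaded in a fixed order through the feedback, so I would state the interchange law only for the monoidal centre (as already flagged in the remark for $\mathtt{Channel}$), and note that when $\mathbf{C}$ itself is genuinely monoidal the memory-threading is coherent and $\mathtt{Stream}(\mathbf{C})$ is monoidal too.

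The main obstacle I anticipate is purely bookkeeping: at each time step the sequential and parallel compositions accumulate memory as a \emph{tensor product} of the constituent memories, so verifying an equation between two composite streams requires not just the corresponding equation in $\mathbf{C}$ at the now-level but also a consistent choice of associator/symmetry isomorphisms on the memory wires at \emph{every} future step, propagated by the coinductive hypothesis. Making this rigorous means being careful that the bisimulation one sets up relates streams whose memory sequences are connected by the appropriate structural isomorphism, rather than being literally equal; equivalently, one works up to the evident ``change of memory'' equivalence on streams (two streams with isomorphic memory sequences and now-components that agree after conjugating by those isomorphisms are intensionally equal). Once this equivalence is set up cleanly, each axiom reduces mechanically to its $\mathbf{C}$-counterpart plus coherence in $\mathbf{C}$, and the symmetric premonoidal structure follows. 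I would remark that this is essentially the construction of~\cite{dilavoreMonoidalStreamsDataflow2022,caretteGraphicalLanguageDelayed2021} specialised to intensional streams, so the detailed coherence verification can be cited rather than reproduced.
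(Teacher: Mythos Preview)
Your proposal is correct and takes essentially the same approach as the paper: both identify the identity and symmetry as memoryless streams and defer the verification of associativity, unitality, and (pre)monoidal coherence to the existing literature. The paper is simply more terse—citing \cite{lavore_coinductive_2025} for the monoidal case and \cite{bonchi_effectful_2025} for the premonoidal extension—whereas you spell out the coinductive strategy and the memory-reassociation bookkeeping that those references carry out; your closing remark that the detailed coherence verification can be cited rather than reproduced is exactly what the paper does.
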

\begin{proof}
  The identity stream $X \to X$ is simply the memoryless stream $\set{\mathtt{id}_{X_t}: X_t \to X_t}_{t \in \mathbb{N}}$, and the symmetry is given by the swap stream defined above.
  It was shown in \cite[Section 7.1]{lavore_coinductive_2025} that the above composition is associative and unital with respect to the identity stream, 
  and that the above parallel composition is functorial in the monoidal case, i.e. it satisfies the interchange law $(f \circ g) \otimes (h \circ k) = (f \otimes h) \circ (g \otimes k)$. 
  These proofs are stated for a quotient of intensional streams, but the quotient condition is never used in the proofs.
  As shown in \cite[Theorem IV.9]{bonchi_effectful_2025}, the same results hold for premonoidal categories where the interchanger law is replaced by functoriality of \enquote{whiskering}, 
  i.e. $(\mathtt{id} \otimes f) \circ (\mathtt{id} \otimes g) = \mathtt{id} \otimes (f \circ g)$ and similarly for tensor product with $\mathtt{id}$ on the right.
\end{proof}

In order to link different time steps and model feedback of information, we need streams with a memory.
We can obtain these by taking the \emph{feedback} $\tt{fbk}_S(\bf{f}): X \to Y$ of a memoryless stream 
$\bf{f}: \partial S \otimes X \to S \otimes Y$.
This corresponds to adding $S$ to the memory of the stream by feeding back its output values to the inputs, as shown below.
\[
  \scalebox{0.85}{\tikzfig{figures/stream-feedback}}
\]
We can use this to model \emph{delay}.
For example, the delay of length $1$ is defined as $\mathtt{delay}_X = \mathtt{fbk}_{X}(\mathtt{swap}) : X \to \partial X$, 
the feedback of $\mathtt{swap}: \partial X \otimes X \to X \otimes \partial X$:
\[
  \scalebox{0.85}{\tikzfig{figures/stream-delay}}
\]

Streams represent infinite processes, but it is often useful to consider their execution for a finite number of time steps.
This is done by \emph{unrolling} the stream, giving us a family of functions $\mathtt{unroll}_n: \mathtt{Stream}(\bf{C}) \to \bf{C}$ parametrised by a natural number.

\begin{definition}[Unrolling]\label{def:unrolling}
  Given a stream $\bf{f} : X \to Y$ with memory $M$, the unrolling for $n$ time-steps of $\bf{f}$ is a process in $\bf{C}$ of the form:
  \[
    \mathtt{unroll}_n(\bf{f}) \colon M_0 \otimes X_0 \otimes \dots X_{n} \to Y_0 \otimes \dots Y_{n} \otimes M_{n + 1}
  \]
  defined by induction as follows:
  \begin{align*}
    \mathtt{unroll}_1(\mathbf{f}) = \; & \mathtt{swap}_{M_1, Y_0} \circ f_0 \\
    \mathtt{unroll}_n(\bf{f}) = \; & (\mathtt{id}_{Y_0} \otimes \mathtt{unroll}_{n - 1}(\bf{f}^\plus) ) \circ (\mathtt{unroll}_1(\mathbf{f}) \otimes \mathtt{id}_{Z})
  \end{align*} 
  where $Z = X_1 \otimes X_2 \otimes \dots \otimes X_{n - 1}$.
\end{definition}

For example, below are the three first unrollings of a generic stream.

\begin{align*}
  \mathtt{unroll}_1(\bf{f}) \quad&= \quad \tikzfig{stream-unroll}\\
  \mathtt{unroll}_2(\bf{f}) \quad&= \quad \tikzfig{stream-unroll1}\\
  \mathtt{unroll}_3(\bf{f}) \quad&= \quad \tikzfig{stream-unroll2}
\end{align*}

We represent the unrolling graphically with a box surrounding the stream.
The following rules are the graphical equivalent of \cref{def:unrolling}, and allow us to reason inductively about finite protocol executions.

\begin{restatable}{rewrite}{unrollingStream}\label{rewrite:unroll-stream}
  \[
    \scalebox{0.9}{\tikzfig{unrolling}}
  \]
\end{restatable}

\begin{restatable}{rewrite}{unrollingFuture}\label{rewrite:unroll-future}
  \[
  \scalebox{0.9}{\tikzfig{unrolling-future}}
  \]
\end{restatable}

Note that these rules avoid the bureaucracy of ordering input and output wires which could be made more formal using the concept of \enquote{open diagrams} \cite{romanOpenDiagramsCoend2020}.

In streams we have access to a \enquote{followed by} operation, common in dataflow programming languages \cite{halbwachs_programming_1992}.
Given a process $r: M_0' \to M_0$ in $\bf{C}$ and a stream $\bf{f}: X \to Y$ with initial memory $M_0$, the 
stream $\mathtt{fby}(r, \bf{f})$ is given by $f_0 \circ (r \otimes \mathtt{id}_{X_0})$ \enquote{now} and $\bf{f}^+$ \enquote{later}.
It satisfies the following rule.

\begin{restatable}{rewrite}{followedBy}\label{rewrite:followed-by}
  For any $r: M_0' \to M_0$ and $\bf{f} : X \to Y$ with initial memory $M_0$, 
  \[
  \mathtt{unroll}_n(\mathtt{fby}(r, \bf{f})) = \mathtt{unroll}_n(\mathtt{fby}_{M_1}(\mathtt{id}_{M_1}, \bf{f})) \circ (r \otimes \mathtt{id}_{X_0})
  \]
\end{restatable}

This allows us to slide processes acting on the input memory outside of the unroll box, as in the following example.

\begin{example}
  Let $P: M \to M \in \mathbf{C}$ be a process such that $P^2 = \mathtt{id}_M$,
  then the streams $\bf{A} = \mathtt{fbk}_M(\bf{P})$ and 
  $\bf{B} = \mathtt{fby}(P, \mathtt{fbk}_M(\bf{I}))$ are intensionally different but $\mathtt{unroll}_n(\bf{A}) = \mathtt{unroll}_n(\bf{B})$ for all $n$.
  Indeed at time-step $0$ both streams act as $P$, then by induction we have:
  \begin{align*}
    \mathtt{unroll}_{n + 1}(\bf{P}) &= \mathtt{unroll}_n(\bf{P}) \circ P = \mathtt{unroll}_n(\bf{Q}) \circ P\\
    &= \mathtt{unroll}_n(\mathtt{fby}(P, \mathtt{fbk}_M(\bf{id}))) \circ P = \mathtt{unroll}_n(\mathtt{fbk}_M(\bf{id})) \circ P^2\\
    &= \mathtt{unroll}_n(\bf{I}) \circ P = \mathtt{unroll}_{n+1}(\bf{Q})
  \end{align*}
  using \cref{rewrite:unroll-stream} and \cref{rewrite:followed-by}.
\end{example}

Finally, we show the following using induction with \cref{rewrite:unroll-future}. 

\begin{lemma}\label{lemma:sliding}
  For any pair of sequences $r_t: M_t \to M_{t+1}$ and $f_t: M_{t+1} \otimes X_t \to M_{t+1} \otimes Y_t$, the following holds in $\bf{C}$:
  \[\tikzfig{lemma-sliding}\]
\end{lemma}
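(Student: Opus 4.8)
The plan is to prove the claimed identity by induction on the number of time-steps $n$ for which both sides are unrolled, using \cref{rewrite:unroll-future} as the main inductive tool. The statement as drawn is an equality of two streams, but since intensional streams are determined by all their unrollings (as recalled after \cref{def-stream}), it suffices to show that the two sides agree as morphisms in $\bf{C}$ after applying $\mathtt{unroll}_n$ for every $n \in \N$. So the first step is to fix the two streams appearing in the picture: on the left, the stream obtained by feeding each $f_t$ back through its memory wire with the $r_t$ inserted on the memory line between consecutive time-steps; on the right, the stream where the $r_t$ have been slid onto the other side of the $f_t$ (or absorbed into the memory at a shifted index), matching the pattern of \cref{rewrite:followed-by} and the $\mathtt{fby}$ operation introduced just above.

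Next I would set up the induction. The base case $n=1$ is a direct computation from \cref{def:unrolling}: $\mathtt{unroll}_1$ of either stream is $\mathtt{swap} \circ (\text{now-component})$, and the now-components differ only by precomposition/postcomposition with $r_0$ (resp. $r_1$), so the two sides coincide after a single application of the interchange/whiskering laws available in the symmetric premonoidal structure of \cref{prop:stream-monoidal}. For the inductive step, I would unroll both sides one more level using \cref{rewrite:unroll-future}: this peels off the time-step-$0$ box and exposes the \enquote{later} stream $\bf{f}^\plus$ with its shifted memory, at which point the sequences $(r_t)_{t \geq 1}$ and $(f_t)_{t \geq 1}$ satisfy exactly the hypotheses of the lemma at one lower index, so the induction hypothesis applies to the inner unrolled stream. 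The remaining work is to check that the process $r_0$ (and the swap bureaucracy produced by unrolling) can be slid past the boundary of the unroll box — precisely the content of \cref{rewrite:followed-by} and \cref{lemma:sliding}'s own shape — so that reassembling the two sides gives the same morphism in $\bf{C}$.

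The main obstacle I anticipate is not the inductive skeleton but the \emph{diagrammatic bookkeeping}: tracking which memory wire carries which index $M_t$, and ensuring that the swaps introduced by $\mathtt{unroll}$ (the $\mathtt{swap}_{M_1, Y_0}$ factors in \cref{def:unrolling}) are routed consistently on both sides so that the $r_t$ genuinely slide along the memory line rather than getting tangled with the $X_t$ or $Y_t$ wires. This is exactly the kind of \enquote{bureaucracy of ordering input and output wires} flagged in the remark after \cref{rewrite:unroll-future}, and handling it rigorously would either require the open-diagram formalism of \cite{romanOpenDiagramsCoend2020} or a careful appeal to naturality of the symmetry in the premonoidal setting. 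Given that the surrounding text consistently elides this level of detail, I expect the intended proof to simply say \enquote{by induction using \cref{rewrite:unroll-future}} and let the picture carry the argument, with the base case and one inductive step shown graphically.
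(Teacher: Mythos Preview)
Your proposal is essentially the paper's proof: induction on $n$, with \cref{rewrite:unroll-future} used to peel off a time-step, the induction hypothesis applied to the shifted stream, and \cref{rewrite:followed-by} used to slide the $r_t$ across the unroll boundary. The paper takes $n=0$ as the base case and, in the inductive step, invokes \cref{rewrite:followed-by} twice together with a second application of \cref{rewrite:unroll-future} to reassemble the right-hand side, but the skeleton is identical to yours. One small framing correction: the lemma is already stated as an equality in $\bf{C}$ between $n$-step unrollings (the picture contains the unroll boxes), so you do not need to first reduce a stream equality to unrolled morphisms --- the induction is directly on the $n$ appearing in the statement.
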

\begin{proof}
  The statement for $n=0$ is easy to show, then we proceed by induction as follows:
  \[\scalebox{0.7}{\tikzfig{lemma-sliding-proof}}\]
  where the first step is \cref{rewrite:unroll-future}, the second step is induction and the last step uses both \cref{rewrite:followed-by} (twice) and \cref{rewrite:unroll-future}.
\end{proof}

\subsection{Interpretation as quantum protocols}

Constructing streams over our base language of quantum channels, we have access to the discarding map on every wire, as well as preparation and measurement commands taking values $x_t, \cvar{y_t}$ at each time step $t$.
The feedforward of classical information is again left implicit: the output variables at time step $t$ are input (and output) variables for time-step $t+1$.
This allows us to build annotations relating variables across different time-steps, while interpreting the unrolling direction as time.

\begin{definition}[Quantum protocol]
  A quantum protocol $\bf{f}: X \to Y \in \mathtt{Stream}(\mathtt{Channel}(\bf{C}))$ is stream of Kraus maps with a global set $\mathcal{E}$ of equations 
  of the form of \cref{def:annotations} on the set of input and output variables $x_t, \cvar{y_t}$, causally compatible with the order induced by $\cvar{y_t} < \cvar{y_{t + 1}}$.
\end{definition}

The discarding maps together with the unrolling operation allow us to define the notion of \emph{observational} equality from \cite{caretteGraphicalLanguageDelayed2021} for general quantum protocols.

\begin{definition}[Observational equality]
  We say that two quantum protocols $\bf{f}$ and $\bf{g}$ with the same initial memory $M_0$
  are observationally equivalent if for any $n \in \mathbb{N}$:
  \[
    \tikzfig{observational-equality}
  \]
  where $M_n$ and $M'_n$ denote the memory types of $\bf{f}$ and $\bf{g}$, respectively, at time step $n$.
\end{definition}

We use the symbol $\sim$ between streams for observational equivalence, and reserve the symbol $=$ for intensional equality.  
If two intensional streams are equal, then they are observationally equivalent; 
however, the converse isn't true in general.
The following \emph{sliding rule}, for example, equates streams with different memories. 
A proof is obtained by \cref{lemma:sliding} and using the fact that isometries commute with discards.

\begin{restatable}{rewrite}{slidingStream}\label{rewrite-sliding}
  For any sequence of isometries $r_t: M_t \to M_{t+1}$ and channels $f_t: M_{t+1} \otimes X_t \to M_{t+1} \otimes Y_t$, the following streams are observationally equal:
  \[\tikzfig{sliding}\]
\end{restatable}

\begin{remark}
  Similar sliding equations restricted to \emph{causal} maps $r_t$ appear in \cite{caretteGraphicalLanguageDelayed2021,bonchi_effectful_2025}. An important consequence of the $r_t$ being isometries is that the only scalar that can slide between different time steps is $1$, which ensures that every finite execution of the protocol has a well defined probability.
\end{remark}

We can now define causality and determinism for quantum protocols through their observational interpretation.

\begin{definition}[Causality]
  We say that a quantum protocol $\bf{f}$ is \emph{causal} if $\mathtt{unroll}_n(\bf{f})$ is causal for any $n \in \mathbb{N}$.
\end{definition}

\begin{definition}[Determinism]\label{def:determinism-stream} 
  We say that a quantum protocol $\bf{f}$ is \emph{deterministic} if every unrolling of $\bf{f}$ is deterministic in $\mathtt{Channel}(\bf{C})$.
  We say that it is uniformly deterministic if it is deterministic for every choice of sequence of angle parameters $\alpha_t$.
\end{definition}

The mixed sum $+$ of channels induces an infinite branching behaviour at the level of streams. 
We can however prove properties of this behaviour by induction over the unrolling, using the following rule.
\begin{restatable}{rewrite}{steamAddition}
  If $\bf{f} = \mathtt{fby}_M(A + B, \mathbf{g})$ then $\mathtt{unroll}_{n}(\bf{f}) = A \, \mathtt{unroll}_n(\bf{g}) + B \, \mathtt{unroll}_n(\bf{g})$
\end{restatable}
When branching occurs, we are interested in the probability of implementing a target channel $D$ in some time $n$, this motivates the following definition.

\begin{definition}[Probabilistic simulation]\label{def:stream-simulation}
  We say that a quantum protocol $\bf{f}$ implements a channel $D \in \mathtt{Channel}(\bf{C})$ in time $n$ with probability $p$, if 
  $\mathtt{unroll}_n(\bf{f})$ implements $D$ with probability $p$ (see \cref{def:implementation}).
  We say that $\bf{f}$ simulates a quantum protocol $\bf{g}$ with time-dependent probability $p(n)$ 
  if $\mathtt{unroll}_n(\bf{f})$ implements $\mathtt{unroll}_n(\bf{g})$ with probability $p(n)$ for all $n \in \mathbb{N}$.
\end{definition}

Universality then refers to the ability of a stream to implement any circuit in some set within a fixed probability.
We give two alternative definitions of universality. 
The strong notion requires that any circuit in the set can be implemented with probability arbitrarily close to $1$, 
while the weak notion allows for a constant probability of successful implementation, which may be more suited 
for intermediate-scale architectures.

\begin{definition}[Strong universality]\label{def:universality}
  We say that a parametrised quantum protocol $\bf{f}(\alpha)$ is strongly universal for a family of qubit circuits $\set{C_i}$ if 
  for any circuit $C_i$ and tolerance $\epsilon \in (0, 1)$ there exists an integer $n$ and parameter values $\alpha_i$ such that
  $\bf{f}(\alpha_i)$ implements $C_i$ in time $n$ with probability $p > 1 - \epsilon$.
\end{definition}

\begin{definition}[Weak universality]\label{def:weak-universality}
  We say that a parametrised quantum protocol $\bf{f}(\alpha)$ is weakly universal for a family of qubit circuits $\set{C_i}$ if 
  there exists an $\epsilon$ such that for any circuit $C_i$ there exists an integer $n$ and parameter values $\alpha_i$ such that
  $\bf{f}(\alpha_i)$ implements $C_i$ in time $n$ with probability $p > 1 - \epsilon$.
\end{definition}

We give examples of these two types of universality in \cref{thm:honeycomb} and in \cref{sec:universality}.

\subsection{Reasoning with streams of ZX diagrams}

Since any ZX diagram can be put in MBQC form, a general process in $\mathtt{Stream}(\bf{ZX})$ has the following form:
\begin{equation}\label{eq:zx-stream}
  \tikzfig{MBQC/MBQC-form-stream}
\end{equation}
for some sequence of open graphs $(G_t, I_t + M_t, O_t + M_{t + 1})$ with measurement planes $\lambda_t$ and angles $\alpha_t$.
Streams of ZX diagrams can thus be seen as \emph{foliations} of an infinite (open) graph state with specified measurement planes and angles
on internal nodes.

A key difference between $\mathtt{Stream}(\bf{LO})$ and $\mathtt{Stream}(\bf{ZX})$ is that the latter is \emph{compact closed}, that is,
it has Bell states and Bell effects on every object $X$.
These are induced from $\bf{ZX}$ by building the memoryless streams
$\mathtt{cup}_X = \set{\mathtt{cup}_{X_t}: X_t \otimes X_t \to I}_{t \in \mathbb{N}}$ and  $\mathtt{cap}_X = \set{\mathtt{cap}_{X_t}: I \to X_t \otimes X_t}_{t \in \mathbb{N}}$.
The snake equation lifts to $\mathtt{Stream}(\bf{ZX})$ as it holds at every time-step.
It allows us to relate feedback loops with delays. 

\begin{proposition}\cite{di_lavore_canonical_2021}\label{prop:zx-stream}
  Any stream in $\mathtt{Stream}(\bf{ZX})$ can be written as a composition of memoryless streams and the delay.
\end{proposition}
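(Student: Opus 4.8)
\emph{Proof proposal.} The plan is to present an arbitrary stream as a single feedback of a memoryless stream, and then to trade that feedback for the delay using the compact closed structure of $\mathtt{Stream}(\mathbf{ZX})$. Concretely, I would start from the description $\mathtt{Stream}(\mathbf{ZX})(X,Y) \simeq \sum_{M} \prod_{i\in\mathbb{N}} \mathbf{ZX}(M_i \otimes X_i, M_{i+1}\otimes Y_i)$ recalled above: a stream $\mathbf{f}\colon X \to Y$ (initial memory $M_0 = I$) amounts to a memory sequence $M$ together with base morphisms $f_t \colon M_t \otimes X_t \to M_{t+1}\otimes Y_t$. Setting $S \coloneqq M^\plus = (M_1, M_2, \dots)$, so that $\partial S = M$, the family $\{f_t\}$ is precisely a memoryless stream $\mathbf{g}\colon \partial S \otimes X \to S \otimes Y$, and unwinding \cref{def-stream} together with the definition of $\mathtt{fbk}$ gives $\mathbf{f} = \mathtt{fbk}_S(\mathbf{g})$ intensionally. (A nontrivial initial memory is handled by precomposing with a $\mathtt{fby}$, which is again a memoryless composition.) It therefore suffices to rewrite an arbitrary $\mathtt{fbk}_S(\mathbf{g})$ with $\mathbf{g}$ memoryless.

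Next I would use that $\mathbf{ZX}$ is compact closed and that this lifts to $\mathtt{Stream}(\mathbf{ZX})$ via the memoryless cup and cap streams satisfying the snake equation at every time step. The feedback $\mathtt{fbk}_S(\mathbf{g})$ feeds the $S$-labelled output at time $t$ into the memory, where it is read at time $t+1$; this is exactly: apply the length-one delay to the $S$ output (producing a $\partial S$-typed wire) and then close the resulting loop against the $\partial S$ input of $\mathbf{g}$. As $\mathtt{Stream}(\mathbf{ZX})$ is compact closed, the latter instantaneous loop is a genuine morphism, the compact closed trace built from $\mathtt{cup}$ and $\mathtt{cap}$, so I claim
\[
  \mathtt{fbk}_S(\mathbf{g}) \;=\; \mathrm{Tr}^{\,\partial S}\!\bigl( (\mathtt{delay}_S \otimes \mathtt{id}_Y)\circ\mathbf{g} \bigr),
\]
with $\mathtt{delay}_S \colon S \to \partial S$ the length-one delay defined above. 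The right-hand side is a composition of memoryless streams ($\mathbf{g}$, cups, caps, swaps, identities) with a single occurrence of the delay, so establishing this identity proves the proposition, since composites of memoryless streams are memoryless. I would verify it by comparing the action at each time step: unwinding the recursions of $\mathtt{fbk}$ and $\mathtt{delay}$ and the time-$t$ components of $\mathtt{cup}$ and $\mathtt{cap}$, both sides reduce to the same morphism of $\mathbf{ZX}$, with the $t=0$ step degenerate because the head of $\partial S$ is $I$, forcing the cup and cap to act as identities there. The memory sequences on the two sides also match (both equal $\partial S$), so this yields intensional equality; any residual mismatch of memory types is absorbed, up to observational equality, by the sliding rule (\cref{rewrite-sliding}) and the lifted snake equation.

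The main obstacle is this last per-time-step verification: one has to keep the shifting operators $\partial$ and $(-)^\plus$, and the special role of time step $0$, straight enough that the types line up and the compact closed trace reproduces $\mathtt{fbk}_S$ on the nose rather than a feedback off by one step. The remaining ingredients — lifting compact closure from $\mathbf{ZX}$, assembling the components $\{f_t\}$ into $\mathbf{g}$, and reading off that the final expression is generated by memoryless streams and the delay — are routine bookkeeping.
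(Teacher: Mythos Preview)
Your proposal is correct and follows essentially the same route as the paper: express an arbitrary stream as the feedback of a memoryless stream, then use the compact closed structure of $\mathtt{Stream}(\mathbf{ZX})$ (the snake equation lifted timewise) together with the interchange law to rewrite that feedback as a composite of cups, caps, the delay, and the memoryless body. The paper condenses the whole argument into a single diagram invoking the snake equation and interchange; your version spells out the per-time-step verification and the type bookkeeping for $\partial S$ versus $S$, but the underlying identity $\mathtt{fbk}_S(\mathbf{g}) = \mathrm{Tr}^{\partial S}\bigl((\mathtt{delay}_S\otimes\mathtt{id}_Y)\circ\mathbf{g}\bigr)$ is exactly what the paper's figure encodes.
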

\begin{proof}
  This follows from the snake equation and the interchange law which holds by \cref{prop:stream-monoidal} since $\bf{ZX}$ is a symmetric monoidal category:
  \[\tikzfig{delay-feedback-proof}\]
\end{proof}
As a consequence, the slogan of the ZX calculus remains true after applying $\mathtt{Stream}$:
\begin{center}
  \enquote{Only connectivity matters.}
\end{center}
To see this, consider the sequence in \cref{eq:zx-stream} with a constant \enquote{generating graph} $G_t = G$. 
The resulting infinite graph has a periodic lattice structure, as illustrated in \cref{fig:lattices}.
In the figure, we use the following notation for delays on a single qubit initialised with $\ket{+}$ states:
\[
  \tikzfig{delay-ZX}
\]
and one can verify that only the connectivity of these delays to spiders in the diagram determines the generated infinite structure.

\begin{figure}[h]
  \scalebox{0.9}{
  \begin{minipage}{.5\textwidth}
    \centering
    \tikzfig{lattice-triangular}
    \caption*{Foliation of a triangular lattice on a cylinder, obtained by $\mathtt{unroll}_{11}$ with $d=3$.}
  \end{minipage}
  }\hfill
  \scalebox{0.9}{
  \begin{minipage}{.5\textwidth}
    \centering
    \tikzfig{lattice-rectangular}
    \caption*{Foliation of a rectangular lattice on a cylinder, obtained by $\mathtt{unroll}_{8}$ with $d=3$.}
  \end{minipage}
  }\vfill
  \scalebox{0.9}{
  \begin{minipage}{.5\textwidth}
    \centering
    \tikzfig{lattice-honeycomb}
    \caption*{Foliation of a honeycomb lattice on a cylinder, obtained by $\mathtt{unroll}_{11}$ with $d=3$.}
  \end{minipage}
  }\hfill
  \scalebox{0.9}{
  \begin{minipage}{.5\textwidth}
    \centering
    \tikzfig{lattice-raussendorf}
    \caption*{Foliation of a Raussendorf lattice on a duocylinder, obtained by $\mathtt{unroll}_{3}$ with $d=2$.}
  \end{minipage}
  }
  \caption{Lattice foliations as streams of ZX diagrams. 
  The labels indicate the time step at which the recurring part of the graph is generated. 
  The unrolled graph states are depicted on the plane but they are actually embedded in cylinders, 
  since the presence of a delay of $1$ implies that time step $t$ is always connected to time step $t+1$.}
  \label{fig:lattices}
\end{figure}

Streams of ZX diagrams enable finite representations of infinite graph states and allow us to prove equivalences between these infinite objects by graphical rewrite rules.
To illustrate the power of our approach, we now prove that the honeycomb lattice with only $\YZm$ measurements is universal for qubit circuits.
Indeed, the triangular lattice is known to be universal for qubit circuits using only $\XZm$ measurements \cite{mhalla_graph_2012}, 
and we show that the honeycomb lattice with periodic $\Ym$ measurements simulates the triangular lattice.  
While this was already observed in \cite{nest_universal_2006}, our one-line formal proof requires only local rewrites on a finite diagram (instead of an argument on infinite graphs with \enquote{$\dots$}). 
It moreover allows us to compute the planar measurements required to simulate the $\XZm$ measurements on the triangular lattice, and we find that $\YZm$ measurements suffice.

\begin{theorem}\label{thm:honeycomb}
  The hexagonal lattice with only $\YZm$ measurements is strongly universal for qubit circuits.
  More precisely, the honeycomb lattice with $\YZm$ measurements can simulate a triangular lattice with $\XZm$ measurements.
\end{theorem}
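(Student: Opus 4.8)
The plan is to work entirely within $\mathtt{Stream}(\mathbf{ZX})$, representing both lattices as streams of MBQC-form ZX diagrams with a constant generating graph, and then exhibiting a local rewrite that turns the honeycomb generator (with $\YZm$ measurements on the ``extra'' degree-$2$ vertices) into the triangular generator (with $\XZm$ measurements). Since \cref{prop:zx-stream} tells us every stream of ZX diagrams is a composition of memoryless streams and delays, and since only connectivity matters, it suffices to prove the equivalence at the level of one period: I would fix the single repeating tile $G$ of the honeycomb foliation shown in \cref{fig:lattices}, mark the vertices that carry periodic $\Ym$ (equivalently $\YZm$ with a suitable angle) measurements, and show by a finite sequence of ZX rewrites that contracting/eliminating those vertices yields exactly the triangular tile with the claimed $\XZm$-plane measurements on the surviving vertices. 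Strong universality then follows immediately: the triangular lattice with $\XZm$ measurements is universal for qubit circuits by~\cite{mhalla_graph_2012}, and since the honeycomb stream \emph{simulates} it deterministically (probability $1$ at every unrolling), it inherits strong universality in the sense of \cref{def:universality}.

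The key steps, in order: (1) Draw the honeycomb generating tile as a graph-state ZX diagram with Hadamard edges, identifying the ``bond'' vertices of degree $2$ (those sitting on the edges of the triangular lattice being simulated) and the ``site'' vertices that will survive. (2) Attach a $\YZm$-plane measurement effect $\bra{\pm_{\YZm,\alpha}}$ to each bond vertex; using the dictionary in the excerpt, $\bra{+_{\YZm,\alpha}} \propto \bra{+} \pm e^{i\alpha}\bra{-}$, which as a ZX effect is a red spider with phase $\alpha$ and one leg. (3) Apply spider fusion to merge the (Hadamard-conjugated) red measurement effect into the adjacent green site spiders, then use the Hadamard/colour-change and $\pi$-copy rules to resolve the two Hadamard edges incident to a degree-$2$ vertex into a single Hadamard edge between the two neighbouring site vertices — this is the graphical incarnation of ``local complementation / pivot along a measured vertex'' and is where the new edge of the triangular lattice appears. (4) Track the induced change of measurement plane on the site vertices: a $\Ym$-type measurement used for elimination conjugates neighbouring $\XYm$/$\XZm$ data, and I expect a short computation with the $S$/Hadamard rules to show the surviving site measurements land in the $\XZm$ plane with angles matching the triangular-lattice pattern (this is the point the theorem statement emphasizes: $\YZm$ suffices precisely because the bond-vertex elimination is a $Y$-type move). (5) Conclude that $\mathtt{unroll}_n$ of the honeycomb stream equals $\mathtt{unroll}_{n'}$ of the triangular stream for appropriately related $n, n'$ (the honeycomb tile spans two ``rows'' per triangular row, hence the $\mathtt{unroll}_{11}$ vs.\ the coarser count in the figure captions), so the protocols are observationally equal and in particular the honeycomb one is deterministic and implements every circuit the triangular one does.

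The main obstacle is step (4): bookkeeping the measurement-plane and angle transformations through the vertex-elimination rewrites while keeping the diagram genuinely \emph{finite}. Eliminating a $\Ym$ (or $\YZm$) vertex is a pivot-like operation that toggles edges within the neighbourhood of the measured vertex and can shift the measurement planes of its neighbours between the three planes; I need the honeycomb tile's combinatorics to be exactly such that every bond vertex has only its two site neighbours, so each elimination introduces precisely one triangular edge and perturbs only the two adjacent site measurements in a controlled way. Verifying that these perturbations compose, over one period and across the periodic gluing, to reproduce the triangular $\XZm$ pattern — rather than leaking into $\XYm$ or introducing spurious Cliffords — is the crux; I would handle it by doing the rewrite on the explicit small tile from \cref{fig:lattices} and then invoking periodicity, exactly as the excerpt advertises (``one-line formal proof'' on a finite diagram), so that the ``$\dots$'' argument of~\cite{nest_universal_2006} is replaced by a genuine finite graphical derivation.
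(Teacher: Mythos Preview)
Your approach is essentially correct and close in spirit to the paper's: both arguments work at the level of the finite generating tile, use local-complementation-type ZX rewrites to transform one lattice into the other, and then appeal to \cite{mhalla_graph_2012} for universality of the triangular lattice with $\XZm$ measurements. The paper in fact runs the rewrite in the \emph{opposite} direction --- it starts from the triangular generator and uses two flow-preserving rules, $Z$-insertion (adding a $\Zm$-measured degree-$2$ vertex on an edge) and local complementation, to produce the honeycomb generator while computing the induced $\YZm$ measurement planes --- but this is the inverse of your vertex-elimination move and equally valid. Your worry about step~(4) is well placed and is exactly the calculation the paper performs explicitly; the outcome is that all surviving measurements land in $\YZm$, and your expectation that the bond vertices carrying a $\Ym$-type measurement are precisely what makes this work is correct.

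There is one genuine ingredient you do not name: how the tile-level rewrite lifts to the stream. You invoke ``periodicity'' and \cref{prop:zx-stream}, but the local complementation touches vertices adjacent to the delay lines, which leaves Clifford residues on the memory and changes the delay's \emph{initialisation} state. The paper handles this explicitly via the sliding rule (\cref{rewrite-sliding}): the residues are absorbed into a modified initial state on the delay, and sliding an isometry around the loop shows the two initialisations are \emph{observationally} equivalent (hence the use of $\sim$ rather than $=$ at that step). Without this, your step~(5) does not go through as stated --- the unrollings will differ by a boundary Clifford at the memory end, and you need the sliding/initialisation argument to discharge it. A minor aside: both lattices in \cref{fig:lattices} are shown at $\mathtt{unroll}_{11}$, so there is no $n$ vs.\ $n'$ time-step mismatch to reconcile.
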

\begin{proof}
  \cref{fig:lattices} gives the ZX representation of triangular and hexagonal lattices. 
  We only use two flow-preserving ZX rules: Z insertion and local complementation. 
  These hold for constant streams which means we can apply them in $\mathtt{Stream}(\bf{ZX})$ as long as they don't act on delayed lines. 
  We thus obtain the following proof:
  \[
    \scalebox{0.9}{\tikzfig{honeycomb-universality}}
  \]
  where we used the notation for delays initialised differently with a single qubit state $\ket{\psi}$:
  \[\tikzfig{propagate-paulis}\] 
  and the observational equality $\sim$ follows from \cref{rewrite-sliding}:
  \[\tikzfig{initialisation} \]
  These local Clifford operations result only in measurements in the $\YZm$ plane. 
  Note that both lattices can be flattened to the plane by applying $\Zm$ measurements (which are both in the $\XZm$ and $\YZm$ measurement planes).
  It was shown in \cite[Theorem 17]{mhalla_graph_2012}, that a faultless triangular lattice of size $4n \cdot 4k$ is sufficient to implement 
  any qubit circuit of width $n$ and depth $k$, generated by $CZ$, Hadamard and $\Zm$-phase gates.
  For fixed $d$, we thus obtain strong universality in the sense of \cref{def:universality} for circuits of fixed width or depth.
  Since the rewrite above holds for any $d$, we conclude that the (infinite, faultless) hexagonal lattice is universal for arbitrary qubit circuits.
\end{proof}

As a second application of our calculus, we can use the inductive rules and sliding to compute the commutation relations between ZX streams and infinite Pauli sequences.
To illustrate this, we study a classical example from the literature on convolutional codes \cite{ollivier_description_2003, wilde_quantum-shift-register_2009}.

\begin{example}[Infinite-depth CNOT gate]
  \label{example:inf-cnot}
  Consider the ladder of CNOT gates defined by the following stream:
  \[ \tikzfig{infinite-cnot}\]
  Given any sequence of Pauli $\Z$ gates, we have the following commutation relation, using \cref{rewrite-sliding}:
  \[ \tikzfig{infinite-cnot-Z}\]
  However, if we try to apply the same strategy to commute a sequence of Pauli $\X$ gates, we get stuck in a recursive loop:
  \[ \tikzfig{infinite-cnot-problem}\]
  In fact, the Pauli $\X$ gates accumulate on the output memory. 
  We can show the following equation about any finite unrolling of the stream:
  \[ \tikzfig{infinite-cnot-hypothesis}\]
  This is done by induction as follows:
  \[ \scalebox{0.8}{\tikzfig{infinite-cnot-X-proof}}\]
  After post-composing both sides with the discard map, we obtain the following observational equality:
  \[ \tikzfig{infinite-cnot-X}\]
\end{example}

\subsection{Routing and measurement modules}\label{subsec:LO-protocols}

Applying the $\mathtt{Stream}$ construction to linear optical circuit in $\mathtt{Channel}(\mathbf{LO})$ we can now construct 
linear optical protocols acting on photon time-bins, containing delays and feedback loops. 
These components are used throughout optical computing, see e.g. \cite{motes_scalable_2014, carolanScalableFeedbackControl2019, bartolucciSwitchNetworksPhotonic2021}.

The ability to prepare an empty optical mode gives rise to a useful class of memoryless processes in $\mathtt{Stream}(\bf{LO})$ called \emph{routers}.
For example, the binary oscillating router is defined by:
\[
  \tikzfig{figures/router-binary}
\]
And a similar 2 to 1 router is obtained using the discard map:
\[
  \tikzfig{figures/router-2-1}
\]
We assume that we have access to routers that can be controlled by a stream of classical variables $x_t$ with a predefined value at every time-step.
We may construct arbitrary routers from the binary router.
For example, the following setup implements an identity if $x_t = 0$ and a swap if $x_t = 1$.
\[
  \tikzfig{figures/router-general}
\]
\begin{remark}
  Routers with more outputs than inputs can only be defined on \emph{optical modes} in $\mathtt{Stream}(\mathbf{ZXLO})$.
  This is because the qubit space $\mathbb{C}^2$ does not allow for the empty state.
\end{remark}
\begin{remark}
  We distinguish between routers and switches, defined in \cref{eq:switch}.
  The control parameters of a router  every time step are set before executing the program.
  With switches, the routing can be actively controlled by a measurement outcome or a classical variable computed at run-time.
  Such actively controlled switches are denoted as routers but with the control parameter drawn inside the box, as in \cref{eq:correction-module}.
\end{remark}

By composing delays and routers, we can route arbitrary permutations in time encoding.

\begin{lemma}\label{lemma:permutations}
  The following setup implements any permutation $\sigma$ of length $d$ in time encoding:
  \[
    \tikzfig{figures/router-permutation}
  \]
\end{lemma}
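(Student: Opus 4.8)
The plan is to reduce an arbitrary permutation $\sigma$ of $d$ elements to a fixed pattern of elementary moves, implement each move by the small delay-and-router gadgets introduced earlier in this subsection, and compose them using the (pre)monoidal structure of $\mathtt{Stream}(\mathbf{LO})$ from \cref{prop:stream-monoidal}. Concretely, I would first recall that every permutation of $\{1,\dots,d\}$ is a product of adjacent transpositions $\tau_i = (i\ i{+}1)$, and more efficiently that a standard brick-wall network of adjacent-transposition switches of depth $d$ realises $\sigma$ as a composition of $d$ layers, each layer a disjoint union of such $\tau_i$'s with a chosen swap/no-swap setting. In time encoding the $d$ values occupy $d$ consecutive time-bins of a single optical mode, so realising one layer amounts to, at each time step, either passing the incoming time-bin through or swapping it with the neighbouring one temporarily held in a length-one delay --- which is exactly the controlled gadget $\tikzfig{figures/router-general}$ run with a predetermined control word $x_t \in \{0,1\}$. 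Since $\sigma$ is fixed before execution, all control words are precomputed, in keeping with the paper's distinction between (statically controlled) routers and (run-time controlled) switches.

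For the base case I would check that the gadget $\tikzfig{figures/router-general}$, fed the control word that is $1$ exactly on the bins to be swapped, implements the corresponding disjoint transpositions on the relevant window and the identity outside it. This is a direct computation: using \cref{def:unrolling} together with \cref{rewrite:unroll-stream} and \cref{rewrite:unroll-future}, one expands $\mathtt{unroll}_n$ of the gadget box by box, and the empty-state preparations built into the binary router (the notation $\tikzfig{figures/router/EmptyState}$) fill the slots vacated by the routed photon, so that for $n \ge d$ the unrolled linear-optical map is precisely the claimed permutation matrix on the first $d$ bins. One then appeals to the interpretation of $\mathtt{Stream}(\mathbf{LO})$ to conclude that the gadget implements the transposition as a stream.

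The inductive step is then formal: sequential composition of streams is functorial and $\mathtt{unroll}_n$ respects it (via \cref{rewrite:followed-by} and \cref{prop:stream-monoidal}), so stacking the gadgets for the brick-wall decomposition of $\sigma$ yields a stream whose $n$-th unrolling is the composite of the layers, i.e.\ $\sigma$, on the first $d$ time-bins. The main obstacle I expect is the boundary bookkeeping: a length-one delay shifts the entire stream, so the gadget for a layer acts on a time-interval that must be realigned before the next layer is applied, and the ancillary empty modes produced by the routers have to be discarded (or shown to vanish) cleanly so that the composite is still a well-defined causal stream that does not leak photons. Setting this up carefully --- choosing the control words so the windows line up, and normalising the delays with \cref{rewrite-sliding} --- is the crux; once it is in place, the remaining verification is routine matrix bookkeeping.
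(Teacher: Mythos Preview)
Your approach proves a related but different statement. The lemma is about \emph{the specific setup} in the figure, which from the paper's proof is a demultiplex--delay bank--multiplex architecture with two routers carrying control words $x_t$ and $y_t$ taken modulo $2d$. The paper's proof simply specifies those control words: $x_t = t + \sigma(t) \bmod 2d$ and $y_t = d + t \bmod 2d$, so that the $t$-th input is shunted into the appropriate delay line and collected at the right output time. There is no induction, no decomposition into transpositions, and no composition of gadgets --- just a one-line assignment of static routing parameters.

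Your brick-wall construction, by contrast, builds a \emph{different} circuit: a depth-$d$ sequential composition of many copies of the controlled-swap gadget. That would establish ``some stream of routers and delays implements $\sigma$'', which is weaker than (and does not imply) ``\emph{this} setup implements $\sigma$''. The machinery you invoke (unrolling rewrites, sliding, boundary alignment of windows) is correct in spirit for reasoning about composed gadgets, but it is aimed at the wrong target. If you want to prove the lemma as stated, you should identify what the two routers in the figure are indexed by (the $\bmod 2d$ in the paper's proof is the hint), read off how many delay lines sit between them, and then just write down $x_t$ and $y_t$ so that the input arriving at time $t$ exits at time $d + \sigma(t)$.
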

\begin{proof}
  Given a permutation $\sigma$, we set $x_t = t + \sigma(t) \mod 2d$ and $y_t = d + t \mod 2d$.
  A more efficient protocol for routing arbitrary permutation of size $d$ with $\mathtt{\log}(d)$ binary routers is given in~\cite{bartolucciSwitchNetworksPhotonic2021}.
\end{proof}

Working in $\mathtt{Stream}(\mathtt{Channel}(\bf{LO}))$, we are now ready to introduce the basic optical components used to perform universal MBQC: 
a measurement module and an active correction module.
For the measurement module, we need to be able to measure in the $\XYm$, $\XZm$, or $\YZm$ planes at different time-steps, typically fixed before running the experiment.
Given a choice of measurement plane $\lambda_t$ and angle $\alpha_t$ for each time-step $t$,
the measurement module $M_{\lambda, \alpha}$ is given by the following setup:
\begin{equation}\label{eq:measurement-module}
  \tikzfig{RUS/MeasurementModule}
\end{equation}
Finally, any sequence of Pauli corrections can be implemented using switches with classical control parameters $x_t, z_t \in \set{0, 1}$, 
as follows:
\begin{equation}\label{eq:correction-module}
  \tikzfig{RUS/CorrectionModule}
\end{equation}

Combining streams of linear optics and ZX diagrams, we can reason about optical protocols used in photonic quantum computing.
For example, the measurement and correction modules defined above with linear optical components, give rise to the expected streams of ZX diagrams.
\begin{lemma}\label{lemma-correct-measure}
  \[
    \tikzfig{RUS/MBQCModule}
  \]
\end{lemma}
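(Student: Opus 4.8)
The plan is to reduce the stream equation to a family of equations in the base language, one per time step, and then discharge each of these with the dual-rail encoding rewrites of \cref{sec:bgd-dual-rail}. The measurement module of \cref{eq:measurement-module} and the correction module of \cref{eq:correction-module} are both \emph{memoryless} streams: they are composites of routers (which the text identifies as memoryless), photon-number-resolving detectors, switches, and empty-mode preparations, none of which carry state across time steps. Hence, by the recursive description of streams (\cref{def-stream}) together with the functoriality of sequential and parallel composition (\cref{prop:stream-monoidal}), it suffices to prove the equality of the ``now'' components at an arbitrary time step $t$; since the environment is trivial (noiseless), this in turn follows from an equality of parametrised diagrams in $\mathbf{ZXLO}$ once the control values $\lambda_t,\alpha_t,x_t,z_t$ and the measurement outcome are fixed.

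For the correction module at step $t$, one expands the two switches via their definition in terms of the optical controlled phase flip (\cref{eq:switch}) and pushes the dual-rail encoding isometry (the triangle) through using \cref{eq:dual-rail-encoding} and the Hadamard / $Z$-spider encoding rewrites of \cref{sec:bgd-dual-rail}: the $x_t$-controlled block becomes the classically controlled ZX $X$ gate on the encoded qubit and the $z_t$-controlled block becomes the classically controlled $Z$ gate. The classical variables are untouched by these rewrites, so the annotations agree on the two sides, and the optical correction module equals $X^{x_t}Z^{z_t}$ in $\mathtt{Channel}(\mathbf{ZX})$ at every $t$. For the measurement module at step $t$, the router selects, according to $\lambda_t\in\{\XYm,\XZm,\YZm\}$, one of three single-qubit optical basis changes followed by a two-rail photon-number-resolving readout; translating the readout by the measurement-encoding rewrites of \cref{sec:bgd-dual-rail} turns it into the green (resp.\ red) ZX measurement effect carrying the outcome variable, while translating the optical basis change by the Hadamard / $Z$-phase encoding rewrites produces the corresponding local Clifford and $\alpha_t$-phase. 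The three cases then reproduce exactly the three lines of the definition of $\bra{\pm_{\lambda,\alpha}}$ in \cref{sec:base}. Sequentially composing with the correction module, and noting that the feedforward annotation relating $x_t,z_t$ to earlier outcomes is identical on both sides, yields the claimed ZX building block at step $t$; since this holds for all $t$, the two streams are intensionally equal.

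The main obstacle is the bookkeeping around the plane-selecting router: one must check that the three-way optical branch on $\lambda_t$ lines up termwise with the three-way definition of $\bra{\pm_{\lambda,\alpha}}$, and that pushing the triangle through the entire composite leaves a genuine pure ZX diagram with no residual optical modes. This relies on routers and switches being trace-preserving and on the detectors consuming every photon, so that the empty-mode ancillae introduced by \cref{eq:switch} and by the router contribute only the trivial scalar $1$. A secondary subtlety is that the memoryless reduction in the first step is valid only because the modules as written in \cref{eq:measurement-module} and \cref{eq:correction-module} contain no delays; were a particular plane selection realised with delay-based routing as in \cref{lemma:permutations}, one would instead peel off the delay with the unrolling rules and argue by induction, but this is not needed here.
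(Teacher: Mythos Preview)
Your proposal is correct and follows essentially the same approach as the paper's proof: both reduce to a per-time-step argument by observing that the correction and measurement modules are memoryless streams, and then discharge each time step using the dual-rail encoding rewrites of \cref{sec:bgd-dual-rail}. The paper's proof is simply a terse two-sentence version of what you have spelled out in detail; your elaboration of the three-way case analysis on $\lambda_t$ and the switch decomposition is exactly what the paper's phrase ``the routers and switches define a specific path'' is gesturing at.
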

\begin{proof}
  Since both the correction and measurement modules are memoryless streams,
  it is sufficient to prove that the equation above holds for any given time step $t$.
  In each case, the routers and switches define a specific path and the result follows from the equations of \cref{sec:bgd-dual-rail}.
\end{proof}

\subsection{Resource state generators}\label{subsec:RSG}

We now discuss different methods for the generation of photonic resource states (RSG) and how they can be modeled as streams of $\mathtt{Channel}(\bf{ZX})$ diagrams.
These can be broadly assigned to two classes --- (i) photonic and (ii) matter-based methods --- and we give two examples of the latter. 
These different procedures can in principle be used in conjunction.

\paragraph{Photonic RSG}

Linear optical methods begin with single photons, which are typically generated by spontaneous parametric down-conversion~\cite{harrisObservationTunableOptical1967}.
These photons are then entangled using linear optical Bell measurements or other heralded linear optical circuits~\cite{bartolucciCreationEntangledPhotonic2021}.
The advantage of this approach is that the resource states can in principle have arbitrary connectivity as photons
are not spatially or temporally restricted.
Moreover, photons of different resource states can be prepared with low distinguishability by active alignment of sources~\cite{carolanScalableFeedbackControl2019}.
Examples of photonic graph states used in the literature include the star graph~\cite{gimeno-segoviaThreePhotonGreenbergerHorneZeilingerStates2015, leeGraphtheoreticalOptimizationFusionbased2023},
rings~\cite{bartolucciFusionbasedQuantumComputation2023} and complete-like graphs~\cite{azumaAllphotonicQuantumRepeaters2015}.
The main drawback of these approaches is that, because of the fundamental limits of linear optics~\cite{stanisicGeneratingEntanglementLinear2017},
entanglement can only be generated probabilistically.
This drawback can be mitigated by using ancillary photons~\cite{bartolucciCreationEntangledPhotonic2021}
and \enquote{switch networks}~\cite{bartolucciSwitchNetworksPhotonic2021} to boost probabilities of success~\cite{ewertEfficientBellMeasurement2014}.

Assuming deterministic generation, we model photonic resource state generators of a graph state $\ket{G}$ as the constant stream over the underlying graph $G$ expressed as a ZX diagram:
\begin{equation}\label{def:RSG}
  \scalebox{0.75}{\tikzfig{stream-RSG}} \quad = \quad \scalebox{0.75}{\tikzfig{stream-bell}} \quad \text{ or } \quad \scalebox{0.75}{\tikzfig{stream-square}} \quad \text{ or } \quad \scalebox{0.75}{\tikzfig{stream-star}}
\end{equation}
where we omitted the normalisation scalar at every time step given by a constant stream $\bf{s}$ satisfying $\mathtt{unroll}_n(\bf{s}) = s^n$.

\begin{remark}
  Streams are \emph{synchronous} processes with a predefined global clock rate.
  It is sometimes useful to define streams acting at different clock rates, such as integer multiples of the global clock rate. 
  A possible application would be to formalise the switch networks of~\cite{bartolucciSwitchNetworksPhotonic2021} where multiple
  low-fidelity heralded resource states are actively routed to produce higher fidelity resource states in each time-bin.
\end{remark}

\paragraph{Emitter-based RSG}

Matter-based approaches for resource state generation rely on the emission of photons by excitation 
of a trapped ion~\cite{blinovObservationEntanglementSingle2004} or an artificial atom~\cite{ekimovQuantumSizeEffect1981, rossettiQuantumSizeEffects1983, murraySynthesisCharacterizationNearly1993, kastnerArtificialAtoms1993}.
Spin-based emitters, such as quantum dots, are able to keep the atom in a coherent superposition during the emission process.
They produce a stream of photons entangled with the atom and with each other~\cite{thomasEfficientGenerationEntangled2022}.
As such, they may be considered \enquote{photonic machine guns}~\cite{lindnerProposalPulsedOnDemand2009}, 
and represented accordingly:
\begin{equation}\label{def:emitter}
  \tikzfig{figures/stream-emitter}
\end{equation}
The atom is the memory of the stream, entangled via a Z spider to the dual-rail states of the emitted photons.
At each time-step $t$ we may perform a single qubit unitary $x_t$ on the atom.
Special cases of interest are GHZ state generators when the white box is the identity,
linear clusters when it is a Hadamard gate and variable GHZ-linear clusters, or caterpillar states~\cite{huet_deterministic_2025},
when it can be programmed arbitrarily.
This technology has proved particularly effective for the generation of entangled photonic graph states~\cite{schwartzDeterministicGenerationCluster2016, costeHighrateEntanglementSemiconductor2023, coganDeterministicGenerationIndistinguishable2023}, and it has the advantage that resource states can be generated deterministically~\cite{schwartzDeterministicGenerationCluster2016, coganDeterministicGenerationIndistinguishable2023}.
Nevertheless, as photons are emitted one at a time, the resulting entanglement is restricted to \emph{linear} structure,
and photons need to be demultiplexed making them more susceptible to loss.
Photons emitted by non-identical atoms also suffer from distinguishability, although methods for mitigating this are being developed~\cite{yardOnchipQuantumInformation2022}.
A great advantage of emitters is that they can be used in repeat-until-success protocols, as shown in \cref{subsec:RUS},
that enables the near-deterministic implementation of entangling gates by fusion measurements~\cite{limRepeatUntilSuccessLinearOptics2005,degliniastySpinOpticalQuantumComputing2024}.

\paragraph{Ion-based RSG}

In the case of generation of photons from an ion trap, the emission process can be stimulated by a classically controlled laser pulse~\cite{nigmatullin_minimally_2016, main_distributed_2025}.
One obtains an ion-photon Bell pair, but this destructively measures the previous state of the ion which has to be prepared in a fixed state at each entanglement stage.
Ions with this property are more generally called \emph{network qubits}.
If we are further able to perform gates between network qubits and other more stable \emph{memory qubits}, 
we obtain a matter-based resource state generator of the form:
\begin{equation}\label{def:QPU}
  \scalebox{0.75}{\tikzfig{stream-RSG}} \quad = \quad \tikzfig{matter-QPU}
\end{equation}
where $U_t$ is a unitary on $m + n$ qubits, and a $k$-labeled wire is interpreted as the tensor product of $k$ single wires.
A general quantum processing unit (QPU), with both matter and photonic degrees of freedom can be represented as a variable 
stream with memory given by a tensor product of qubits and with photonic outputs at regular time intervals.

\paragraph{Simulations between RSGs}

The stream language which we have developed can be used to prove (bi)simulations between these different physical setups.

\begin{proposition}
  The quantum emitter can be simulated deterministically by a matter-based QPU and classical feedforward.
\end{proposition}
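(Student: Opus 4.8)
The plan is to exhibit a concrete QPU stream $\bf{g} \in \mathtt{Stream}(\mathtt{Channel}(\bf{ZXLO}))$ of the shape prescribed by \cref{def:QPU}, equipped with a feedforward annotation, and to show that $\bf{g}$ simulates the emitter $\bf{f}$ of \cref{def:emitter} with probability $1$ at every unrolling --- which is exactly deterministic simulation in the sense of \cref{def:stream-simulation} and \cref{def:implementation}. Concretely I would take $\bf{g}$ to have a single memory qubit, playing the role of the atom, and a single network qubit $N$. At each time step the laser pulse (i) destructively measures the previous network-qubit state, producing an outcome $\cvar{s}$, and (ii) creates a fresh ion-photon Bell pair on $(N, P_t)$, which as a $\bf{ZXLO}$ diagram is a green $2$-spider with a triangle on the photonic leg. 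I would then set the QPU unitary $U_t$ to be $x_t$ on the memory followed by a $CNOT$ from the memory to $N$ (and a fixed local Clifford on $N$ compensating for the basis of the destructive measurement). At time $0$ the network qubit is prepared in a fixed state, so its measurement outcome is deterministic and discarded, and the memory qubit carries the initial atom state.

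The core of the argument is a single-time-step ZX identity: pushing the triangle through the dual-rail rewrites of \cref{sec:bgd-dual-rail}, fusing the Bell-pair spider with the $CNOT$, and applying Hopf, copy and spider-fusion rules, the per-step diagram of $\bf{g}$ collapses to that of $\bf{f}$ --- a green $1 \to 2$ spider from the atom to (atom, photon) preceded by $x_t$ --- up to a Pauli byproduct on the emitted photon equal to $X^{\cvar{s}}$. One checks that this byproduct stays local to $P_t$: it commutes past the later $CNOT$s and Pauli-basis measurements acting on $N$, so it never propagates into the memory or into later photons. The feedforward step then routes $\cvar{s}$ to an optical correction on $P_t$; since a Pauli $X$ on a dual-rail photon is a swap of the two rails, this correction is an actively controlled router built from switches exactly as in \cref{subsec:LO-protocols}. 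Because $\cvar{s}$ is only revealed by the destructive measurement at the \emph{following} time step, the correction must act one step after $P_t$ is produced, so I would route the photonic output of $\bf{g}$ through a unit $\mathtt{delay}$ before the correction switch, making the feedforward annotation causally compatible; rewrite \cref{eq:feedforward-rewrite} then absorbs the measurement--correction pair and sets the branch probability to $1$.

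The conclusion is an induction on the number $n$ of time steps via the unrolling rules \cref{rewrite:unroll-stream} and \cref{rewrite:unroll-future} (using the sliding rule \cref{rewrite-sliding} to slot the delay into position): at each stage the corrected per-step map of $\bf{g}$ equals that of $\bf{f}$ in the CQ interpretation, hence $\mathtt{unroll}_n(\bf{g})$ implements $\mathtt{unroll}_n(\bf{f})$ with probability $1$ for all $n$. The step I expect to be the main obstacle is precisely this timing issue: the byproduct is learned one step late, so $\bf{g}$ literally simulates the emitter \emph{with its photonic output delayed by one step}, and care is needed to state the result and manage the delay bookkeeping inside the $\mathtt{unroll}$ boxes so that this is a genuine deterministic simulation rather than an up-to-reindexing artefact. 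An alternative that sidesteps touching the photon altogether is to engineer the single-step identity so the byproduct is a $Z$ on the persistent memory qubit, corrected by a classically controlled $Z$ at the start of the next step, at the cost of a slightly more involved ZX computation (a $CZ$ and a beam-splitter on the photonic leg rather than a bare $CNOT$).
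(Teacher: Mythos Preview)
Your approach is essentially the paper's: a QPU with a single memory qubit, a CNOT from memory to the network qubit at each step, and classical feedforward to absorb the Pauli byproduct from the destructive network-qubit measurement; the paper's proof is literally one line exhibiting exactly this construction as a diagram. Your timing analysis (delaying the photonic output so the late-arriving outcome $\cvar{s}$ can correct it via a controlled switch, then invoking \cref{eq:feedforward-rewrite} and the unrolling rules) goes into more bookkeeping detail than the paper makes explicit, but it is the same construction, and the obstacle you flag is handled implicitly by the diagram and the stream rewrite rules.
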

\begin{proof}
  We use a QPU with a single memory qubit entangled with a CNOT gate to the network qubit:
  \[\scalebox{0.9}{\tikzfig{emitter-QPU}}\]
\end{proof}

In order to simulate these matter-based memories with constant RSGs, we moreover need entangling measurements.

\begin{proposition}
  The quantum emitter can be simulated deterministically by a constant GHZ state, the delay, perfect Bell measurements and classical feedforward.
\end{proposition}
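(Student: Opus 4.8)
The plan is to realise the emitter's entangling $Z$-spider as a chain of GHZ states welded together by Bell measurements, in direct analogy with the teleportation argument of \cref{sec:channel}. Recall that a three-legged GHZ state is just a $Z$-spider, so a constant GHZ stream supplies, at every time step, a fresh $Z$-spider with three legs. I would route the first leg to the photonic output, applying a triangle to obtain the dual-rail state and using the routing of \cref{lemma-correct-measure}; send the second leg into a unit delay; and at each subsequent time step perform a perfect Bell measurement between the delayed leg of the previous GHZ state and the remaining (third) leg of the current one, recording the outcome as bits $\cvar{a_t},\cvar{b_t}$. The external memory input is the unmeasured leg of the first GHZ state, and the external memory output is the final delayed leg. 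The single-qubit unitary $x_t$ of \cref{def:emitter} is incorporated by composing it on the delayed wire before the Bell measurement, i.e.\ by measuring in the $x_t$-rotated Bell basis.

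First I would compute $\mathtt{unroll}_n$ of both this protocol and of the emitter stream, unfolding feedback into delays via \cref{prop:zx-stream}. On the all-zero-outcome branch, each Bell effect is, up to a $\star$, an ordinary wire, so spider fusion collapses the whole chain of GHZ $Z$-spiders (interleaved with the $x_t$) into a single structure --- a $Z$-spider connected to all the emitted dual-rail photons with the $x_t$ applied between consecutive photonic legs --- which is exactly $\mathtt{unroll}_n$ of the emitter. The photonic legs land correctly by the dual-rail equations of \cref{sec:bgd-dual-rail} together with \cref{lemma-correct-measure}, and the accumulated $\star$'s cancel against the normalisation exactly as in the teleportation example.

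Next I would treat the nonzero branches. A Bell outcome $(\cvar{a_t},\cvar{b_t})\neq(0,0)$ deposits a Pauli $X^{\cvar{a_t}}Z^{\cvar{b_t}}$ on the fused wire; pushing this byproduct forward through the delay, it surfaces one time step later acting on the incoming memory, where it is eliminated by a feedforward annotation $x=\bigoplus\cvar{a_t}$, $z=\bigoplus\cvar{b_t}$ via the correction rewrite \cref{eq:feedforward-rewrite} --- precisely as the teleportation protocol is made deterministic by applying that rewrite twice. Since the correction set of each Bell measurement is just the memory wire at the next time step, this is a simple path-like flow, so all branches yield the same channel and equal probabilities, giving (strong) determinism in the sense of \cref{def:determinism-stream}. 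Combining with the previous proposition then gives the claim.

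The main obstacle is the bookkeeping in the third step: checking that each Pauli byproduct propagates through the delay to a single correctable location rather than spreading onto already-emitted photonic legs, and that the resulting corrections satisfy the causal-compatibility condition of \cref{def:annotations} so that the feedforward annotations form a valid channel. A secondary subtlety is the treatment of $x_t$: for non-Clifford $x_t$ the $x_t$-rotated Bell measurement is still a legitimate two-outcome projective measurement, but one must verify that its byproducts remain Paulis relative to the $x_t$'s already present in the fused diagram, which is where the flow-preservation reasoning of \cref{sec:channel} is needed.
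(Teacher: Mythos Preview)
Your construction is essentially the paper's: chain three-legged GHZ $Z$-spiders along a delayed memory wire, Bell-measure the old memory against one leg of the fresh GHZ, and feedforward the Pauli byproducts. The paper's argument is more compact because it stays at the stream level: after invoking \cref{prop:zx-stream} it uses \cref{rewrite-sliding} directly to slide the Pauli byproducts around the feedback loop, rather than unrolling and reasoning branch-by-branch as you do.

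Two of the obstacles you flag are not really obstacles. First, the $X$ byproduct from the Bell measurement \emph{does} copy through the GHZ $Z$-spider onto the current photonic leg as well as the new memory leg, but both are correctable by feedforward at the same time step---nothing propagates to \emph{previously} emitted photons, so the causal-compatibility condition is immediate. Second, the non-Clifford $x_t$ is harmless: the byproducts of a Bell measurement are always Pauli regardless of any prior rotation, and the correction on the memory wire is applied \emph{before} $x_{t+1}$, so no Pauli ever has to be commuted through an $x_t$; no flow-preservation argument is needed here. Finally, your last sentence about combining with the previous proposition is spurious---this result stands on its own and does not rely on the QPU simulation.
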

\begin{proof}
  This is shown by the derivation below which computes the feedforward instructions to correct the measurement-induced Pauli byproducts.
  \[
  \tikzfig{emitter-GHZ}
  \]
  The first step follows by \cref{prop:zx-stream} and the propagation of Pauli byproducts follows from \cref{rewrite-sliding}.
\end{proof}

Crucially, the above proposition relies on a perfect Bell measurement which is not available in linear optics. 
In the remainder of this paper we study the kind of entanglement afforded by linear optics and how it can be leveraged for photonic and distributed quantum computing.

\part{Application to distributed architectures}\label{part2}

\section{Characterization of correctable fusion measurements}\label{sec:characterization}

Fusion measurements are linear optical entangling measurements on dual-rail qubits.
They were originally introduced by Browne and Rudolf~\cite{browneResourceEfficientLinearOptical2005},
although several variations of this idea are present in the literature~\cite{limRepeatUntilSuccessLinearOptics2005,lee_nearly_2015}.
They are at the heart of recent \emph{fusion-based} proposals for performing quantum computation with photons~\cite{bartolucciFusionbasedQuantumComputation2023,  pankovichFlexibleEntangledState2023}, 
but are also used in \emph{distributed} quantum architectures to create entanglement between remote processors \cite{main_distributed_2025}.
Beyond \enquote{Type I} and \enquote{Type II fusion}~\cite{browneResourceEfficientLinearOptical2005},
linear optical circuits allow to construct different types of entangling gates on dual-rail qubits~\cite{limRepeatUntilSuccessLinearOptics2005, gimeno-segoviaThreePhotonGreenbergerHorneZeilingerStates2015, degliniastySpinOpticalQuantumComputing2024}.

In this section, we exploit the graphical language developed in \cref{part1} to obtain ZX representations of fusion measurements. 
We then generalise this concept by classifying all locally equivalent measurements whose Pauli byproducts \emph{can be corrected}.
Our notion of correctability requires that the non-determinism of measurement outcomes can be propagated as heralded Pauli errors in the input qubits.
The notion of planar fusion measurements introduced here (in the $\XYm$, $\XZm$ and $\YZm$ planes) allows to perform non-Clifford entangling gates 
in photonic circuits while reducing the total number of photons (compared to separate Type II fusions and non-Clifford single qubit measurements), see \cref{remark:XZfusion} for an example.
We also use it in \cref{subsec:emitter} to switch between $\Xm$ and $\Ym$ measurements with a single parameter shift.
The characterization results of this section are previewed in \cref{fig:zoo}.

\begin{figure}[h]
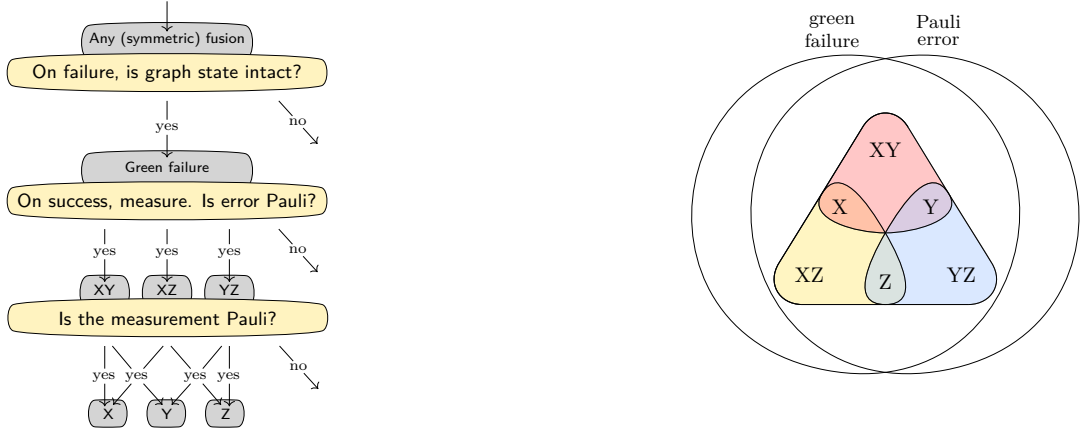

  \scalebox{0.8}{
  \begin{minipage}{.5\textwidth}
    \centering
    \tikzfig{char/flowchart}
  \end{minipage}
  }\hfill
  \scalebox{0.8}{
  \begin{minipage}{.5\textwidth}
    \centering
    \tikzfig{char/zoo}
  \end{minipage}
  }
  \label{fig:zoo}
  \caption{Overview of the results of this section. (a) Flowchart of the principles used to characterize correctability in fusion measurements (b) The zoo of correctable fusion measurements. Type II fusion is an instance of X fusion, CZ fusion is an instance of Y fusion, and two-qubit Z phase gadgets are a special case of YZ fusions.}
\end{figure}

\begin{remark}
  Ref.~\cite{loblTransformingGraphStates2024} also analyses the action of generalisations of fusion measurements on graph states.
  However, the generalisations of fusion they consider are local Clifford equivalent to Bell measurements, and they consider only the success case.
  Here instead, we consider all measurements local unitarily equivalent to Type I fusion followed by arbitrary single-qubit measurement.
  In particular, we obtain planar fusions and the $CZ$ fusion of~\cite{limRepeatUntilSuccessLinearOptics2005} that cannot be expressed in their setting.
\end{remark}

\subsection{Fusion measurements in ZX}\label{subsec:fusion-measurements}

In Ref~\cite{browneResourceEfficientLinearOptical2005}, fusions were expressed with quarter-wave plates and polarizing beam splitters, respectively, as follows:
\[
  \tikzfig{polarizing/Type1Fusion}
  \qquad \qquad \qquad \qquad
  \tikzfig{polarizing/Type2Fusion}
\]
Note that Type I fusion is a partial measurement having two inputs and one output dual-rail mode,
while Type II is destructive and measures both qubits.
We can translate from polarization primitives to $\bf{LO}$ circuits using the following equalities:
\[
  \tikzfig{polarizing/BeamSplitter}
  \quad=\quad
  \tikzfig{translate/PolarizationToDR/BeamSplitter}
  \qquad \qquad \qquad
  \tikzfig{polarizing/WavePlate}
  \quad=\quad
  \tikzfig{translate/PolarizationToDR/WavePlate}
\]
Expressing the Type I and Type II fusions using this translation, we obtain the following diagrams, respectively:
\[
  \tikzfig{translate/PolarizationToDR/Type1Fusion}
  \qquad \qquad \qquad \qquad
  \tikzfig{translate/PolarizationToDR/Type2Fusion}
\]
This representation enables us to diagrammatically calculate the action of these circuits by representing them as a mixture of ZX diagrams.
Starting with Type I fusion, we get the following Kraus decomposition, proved in \cref{sec:mixedfusion}.
\begin{restatable}{proposition}{typeOneFusionProp} The following equation holds in the CP interpretation:
  \label{prop:type-I}
  \begin{gather*}
    \tikzfig{dualrail/Type1FusionFull}
  \end{gather*}
  after coarse-graining of the measurement operator by the equations $\cvar s = \cvar a \oplus \cvar b$ and $\cvar k = \cvar s \cvar b + \neg \cvar s (1 - \frac{\cvar a + \cvar b}{2})$.
  Here, $s$ is the Boolean value of success and $k$ is the Pauli measurement error.
\end{restatable}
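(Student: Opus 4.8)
The plan is to prove \cref{prop:type-I} by evaluating the Type~I fusion $\mathbf{LO}$ circuit one detector outcome at a time, rewriting each resulting branch into a ZX diagram, and then collapsing the branches by coarse-graining. Concretely, I would first take the $\mathbf{LO}$ presentation of Type~I fusion displayed above --- obtained from the polarizing beam splitter / wave plate circuit via the two translation equalities --- and replace the beam splitter by its fine-grained QPath form (the $W$-node decomposition of \cref{sec:LO}). Precomposing the two incoming dual-rail wires with the dual-rail encoding isometry (the triangle of \cref{eq:triangle}), I would push the triangles rightwards through the circuit: the input states become single-photon states by \cref{eq:dual-rail-encoding}, the linear-optical operations become ZX operations by the encode-Hadamard and encode-spider rewrites of \cref{sec:bgd-dual-rail}, and the QPath rewrite system of \cref{sec:qpath} absorbs the $W$-nodes and phases that appear. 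Because exactly one photon enters each dual-rail mode, only finitely many Fock components survive, so this terminates, leaving a single diagram in $\mathbf{ZXLO}$ for the partial map from the two input qubits to the two detected modes together with the one output dual-rail mode.

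Next, for each pair of values $(\cvar a, \cvar b)$ of the two photon-number-resolving detectors I would plug in the corresponding number effects and read off the branch, keeping the output triangle so that it is a ZX diagram on one dual-rail qubit up to a scalar. The \enquote{success} outcomes, characterized by $\cvar a \oplus \cvar b = 1$, fuse the two input Z-spiders into one output Z-spider and leave a Pauli byproduct whose type is dictated by which detector fired, i.e.\@ by $\cvar b$; the \enquote{failure} outcomes $\cvar a \oplus \cvar b = 0$ (zero or two photons in the detected arm) give the diagram in which both input qubits have been measured, with byproduct recorded by $1 - \frac{\cvar a + \cvar b}{2}$. Since each branch is pure and the environment here is trivial, passing to the completely positive interpretation amounts to doubling, and I would then apply the coarse-graining rewrite \cref{eq:coarse-graining-rewrite} to merge branches with the same underlying ZX diagram: the two success branches (indexed by the value of $\cvar b$, each carrying a factor $\star$) combine into one branch recording $\cvar s = 1$ and $\cvar k = \cvar b$, and likewise the failure branches combine into one recording $\cvar s = 0$ and $\cvar k = 1 - \frac{\cvar a + \cvar b}{2}$. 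Removing the redundant $\star$'s accounts correctly for the coarse-grained probabilities, and the two cases package into the single stated equation with $\cvar s = \cvar a \oplus \cvar b$ and $\cvar k = \cvar s\,\cvar b + \neg\cvar s\,(1 - \frac{\cvar a + \cvar b}{2})$.

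The main obstacle is the Fock-space bookkeeping for the beam splitter on single-photon inputs: one must get the bunching amplitudes (the $\binom{n}{k}^{1/2}$ factors in the $W$-map interpretation), and hence the relative scalars between the branches, exactly right, since these scalars are what fix the coarse-grained probabilities and validate the formula for $\cvar k$. Tracking which of the four physical modes is detected versus output through the polarization-to-dual-rail translation is the other place where care is needed. A cheap independent check at the end is to recompute a handful of matrix elements of both sides directly via the permanent formula for $\mathbf{LO}$ circuits.
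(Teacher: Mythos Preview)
Your proposal is correct and follows essentially the same route as the paper: a branch-by-branch evaluation of the Type~I fusion circuit for each detector outcome $(\cvar a,\cvar b)$ using the QPath decomposition of the beam splitter and the dual-rail encoding equations, followed by coarse-graining in the CP interpretation. The paper's appendix does exactly this, computing $D^{\cvar a,\cvar b}$ for the five non-vanishing outcomes (noting $D^{1,1}=0$ by Hong--Ou--Mandel), tabulating them under the map $(\cvar a,\cvar b)\mapsto(\cvar s,\cvar k)$, and using that $D^{2,0}$ and $D^{0,2}$ agree up to a global phase so that they merge under CPM --- the one subtlety you gesture at in your \enquote{bunching amplitudes} remark but do not spell out explicitly.
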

This means that the error is $\cvar b$ in case of success and $1 - \frac{\cvar a + \cvar b}{2}$ in the failure case.
Note that, in case of failure, the pair of output modes is no longer in the qubit subspace defined in \cref{eq:dual-rail-encoding}.
Now, considering the Type II fusion, we see that this is just a Type I fusion
preceded by beam splitters and followed by a single qubit measurement in the Z-basis.
We can thus compute its action on the qubit subspace:
\[
  \tikzfig{dualrail/Type2FusionFull}
\]
where $\cvar s = \cvar a \oplus \cvar b$ is the Boolean value of success and $\cvar k = \cvar s (\cvar b + \cvar d) + \neg \cvar s (1 - \frac{\cvar a + \cvar b}{2})$ is the error.
This means that the error is $\cvar b + \cvar d$ in case of success and $1 - \frac{\cvar a + \cvar b}{2}$ in case of failure.
The scalars in the diagrams above are crucial for computing probabilities, but they can be disregarded in many cases. 
We use them in \cref{subsec:fusion-prob}, to compute the probability of success of fusion measurements for different input states.
However, in many other cases we can disregard them as in \cref{ex:XYpattern}.

As a first application, we consider a teleportation protocol where two parties share a dual-rail encoded photon Bell pair, and one of 
the parties performs a fusion measurement with a dual-rail encoded input qubit.
By \cref{prop:type-I}, the protocol can be rewritten as a distribution over
two causal maps:
\[\tikzfig{teleportation-fusion}\]
If we apply a Pauli X correction conditioned on $\cvar{k}$ on the output qubit, this results in the identity channel with probability $\frac{1}{2}$, 
otherwise the input qubit is measured and the opposite state is prepared in the output.

\begin{remark}
  \label{remark:nType1}
  Both Type I and Type II fusions can be generalized to arbitrary number of inputs.
  A natural generalisation of Type II fusion gives the $n$-GHZ state analyser studied in~\cite{panGreenbergerHorneZeilingerstateAnalyzer1998, boseMultiparticleGeneralizationEntanglement1998, pankovichFlexibleEntangledState2023}
  in the success case.
  For the $3$-GHZ state analyser we obtain:
  \[
    \tikzfig{dualrail/Type2Fusion-ternary}
  \]
  where $\cvar{j} = r_2 + r_4 + r_6 \mod 2$, $\cvar{s} = r_1 + r_2 \mod 2$, $\cvar{s'} = r_3 + r_4 \mod 2$, 
  $\cvar{k} = 1 - \frac{r_{3} + r_{4}}{2}$ if $\cvar{s} = 1$ and $\cvar{k} = 1 - \frac{r_1 + r_2}{2}$ otherwise. 
\end{remark}

\subsection{General fusion measurements}

In \cref{subsec:fusion-measurements}, we saw that Type II fusion differs from Type I fusion only by single-qubit unitaries applied before and after the fusion, as well as an additional measurement.
Similarly, we can describe the success and failure outcomes of all possible entangling measurements unitarily equivalent to Type II fusion as follows:
\begin{equation}
  \label{eq:GF}
  \tikzfig{ZX/GeneralFusion}
\end{equation}
Here, $U_1$, $U_2$, and $U_3$ are arbitrary single-qubit unitaries which, up to some global phase, can be expressed by the Euler decomposition as three alternating rotations around the $Z$ and $X$ axes:
\begin{equation}
  \tikzfig{ZX/Euler}
  \label{eq:euler}
\end{equation}

Using this decomposition, we get a total of $9$ parameters to characterize a general fusion;
however, we are able to reduce this number using different observations that we support with calculations in ZX calculus.
First, one parameter can be eliminated from $U_3$ as it is followed by a single-qubit measurement that only contributes an irrelevant global phase:
\begin{equation*}
  \tikzfig{ZX/MeasureQubit}
\end{equation*}
Second, we observe that $Z(\gamma_1)$, $Z(\gamma_2)$, $Z(\alpha_3)$, and the fusion error $\cvar k \pi$ itself are simultaneously diagonalizable in the Z basis.
In other words, we can apply the spider fusion rule of the ZX calculus as follows:
\begin{equation*}
  \label{eq:Fusion6}
  \tikzfig{ZX/Fusion6}
\end{equation*}
where $\phi = \gamma_1 + \gamma_2 + \alpha_3$.

With this, we have reduced the number of parameters to describe an entangling measurement to $6$.
However, by only considering fusion with certain desirable properties, we can reduce this number even further.

\subsection{Green failure}

Consider the fusion given in \cref{eq:GF} the fusion with $U_1$, $U_2$, and $U_3$ all being the identity, corresponding to Type I fusion composed with a Z measurement.
An unsuccessful fusion in this case acts as a projector in the $Z$-basis.
This means that in addition to failing to fuse the two nodes, it also \emph{disconnects} them from their neighbours. For example:
\begin{equation*}
  \tikzfig{ZX/FusionFailDisconnect2}
\end{equation*}
In order to preserve the entanglement of the graph state, we want failures to be \enquote{green}.
\begin{definition}[Green failure]
  We say that a fusion measurement has green failure if its failure outcome satisfies:
  \begin{equation*}
    \tikzfig{ZX/GreenFailure}
  \end{equation*}
  for some $\theta_1, \theta_2$.
\end{definition}
\noindent This means that, upon failure, the underlying resource graph preserves its connectivity. For our running example:
\begin{equation*}
  \tikzfig{ZX/FusionFailGreen2}
\end{equation*}

We can characterize all types of fusions up to Z rotations on the nodes by the choices of $\beta_i$ in the Euler decomposition shown in \cref{eq:euler}.
We have the following measurement outcomes for different choices of $\beta_i$, with $i \in \{0,1\}$,
\begin{equation*}
  \tikzfig{ZX/FusionFailCases}
\end{equation*}
In other words, the failure is either green and keeps the connection of the graph, red and disconnects the graph, or it induces a non-unitary (and thus not correctable) error on the graph.
Asking for green failure reduces $U_1$ and $U_2$ to be of the following shape:
\begin{equation*}
  \label{eq:GreenU1U2}
  \tikzfig{ZX/GreenU1U2}
\end{equation*}
This reduces the $6$ parameters of a general fusion to $4$ if it has green failure:
\begin{proposition}
  \label{prop:green-failure}
  Any fusion measurement with green failure has the following form:
  \begin{equation}
    \tikzfig{ZX/Greencharacterisation}
  \end{equation}
  for some choice of angles $\alpha_1, \alpha_2, \beta, \phi \in [0, 2\pi)$ and measurement outcomes $\cvar{j}, \cvar{k} \in \set{0, 1}$.
\end{proposition}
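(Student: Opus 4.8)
The plan is to combine the parameter reductions already carried out in this subsection with the green-failure constraint on $U_1$ and $U_2$. The starting point is a general fusion in the $6$-parameter form established just above: \cref{eq:GF} with each $U_i$ replaced by its Euler decomposition \cref{eq:euler}, after (i) discarding the outermost $Z$-rotation of $U_3$, which is absorbed as an irrelevant global phase by the single-qubit measurement that follows it, and (ii) fusing $Z(\gamma_1)$, $Z(\gamma_2)$, $Z(\alpha_3)$ and the error phase $\cvar{k}\pi$ into a single green phase $\phi = \gamma_1 + \gamma_2 + \alpha_3$. The remaining free parameters are then the real angles $\alpha_1, \beta_1, \alpha_2, \beta_2, \beta_3, \phi$, where I write $\beta \coloneqq \beta_3$ for the one angle surviving in $U_3$.

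First I would compute the failure branch in terms of these parameters. As in the Type-I and Type-II computations of \cref{subsec:fusion-measurements}, the failure outcome acts as a $Z$-basis projector on the two fused modes, conjugated by $U_1 \otimes U_2$; the spider-fusion and $\pi$-copy rules let me absorb the inner $Z$-rotations of $U_1, U_2$ into that projector, so that on each input wire what is left between the graph state and the green $Z$-basis effect is exactly the red rotation $X(\beta_i)$ preceded by the green phase $Z(\alpha_i)$. This composite effect is \emph{green} — i.e. it merges into the neighbouring graph-state spider and preserves connectivity — precisely when $X(\beta_i)$ is a Pauli, which is the trichotomy already displayed in this section: $\beta_i = 0$ gives a green failure with phase $\alpha_i$; $\beta_i = \pi$ gives a green failure up to a heralded Pauli-$X$; and any other $\beta_i$ gives a non-unitary disconnecting (hence non-correctable) effect. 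Imposing green failure therefore forces $\beta_1, \beta_2 \in \{0, \pi\}$, which is exactly the reduction of $U_1, U_2$ to the shape $\tikzfig{ZX/GreenU1U2}$, eliminating the two continuous parameters $\beta_1, \beta_2$ and leaving the four angles $\alpha_1, \alpha_2, \beta, \phi$ together with two discrete Pauli flips.

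Then I would substitute this constrained $U_1, U_2$ back into the success branch of \cref{eq:GF} and simplify. The $\beta_i = \pi$ Pauli-$X$ factors commute through the fusion vertex; passing them across the green phases produces only further $\pi$'s, and all of these discrete data are collected into the heralding variable $\cvar{j}$ and the output Pauli-byproduct variable $\cvar{k}$. Reducing every phase modulo $2\pi$ (legitimate by the $2\pi$-periodicity of the ZX interpretation noted in \cref{sec:ZX}), the diagram collapses to the claimed normal form $\tikzfig{ZX/Greencharacterisation}$ with $\alpha_1, \alpha_2, \beta, \phi \in [0, 2\pi)$ and $\cvar{j}, \cvar{k} \in \set{0, 1}$. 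The main obstacle is the middle step: rigorously establishing that no value of $\beta_i$ other than $0$ and $\pi$ yields a connectivity-preserving failure, and then bookkeeping carefully in the success branch how the two discrete Pauli corrections interact with the green phases and the measurement outcome so that the final diagram has exactly the two variables $\cvar{j}, \cvar{k}$ and the correct scalars; the parameter-counting part is routine once this is done.
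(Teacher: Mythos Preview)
Your overall strategy matches the paper's: reduce to the six-parameter form, analyse the failure branch to constrain $\beta_1,\beta_2$, and count the survivors. But the core computation is inverted.

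You write that the failure outcome is ``the green $Z$-basis effect'' and conclude that green failure holds precisely when $X(\beta_i)$ is a Pauli, i.e.\ $\beta_i\in\{0,\pi\}$. This conflates $Z$-spiders (green; states $\ket{\pm}$) with $Z$-basis projections (red $X$-spiders; states $\ket{0},\ket{1}$). The paper states explicitly that with $U_1=U_2=U_3=I$ the failure ``acts as a projector in the $Z$-basis'' and \emph{disconnects} the graph --- that base failure is a \emph{red} effect. If $\beta_i\in\{0,\pi\}$ then $U_i$ is a $Z$-rotation up to Pauli-$X$, which leaves the $Z$-basis invariant, so the failure stays red and disconnecting: this is the case the paper wants to \emph{exclude}, not the one it keeps. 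Concretely, $\bra{0}X(\beta)\propto (1+e^{i\beta})\bra{0}+(1-e^{i\beta})\bra{1}$ is a green ($\XYm$-plane) effect iff the two coefficients have equal modulus, i.e.\ iff $\beta\in\{\tfrac{\pi}{2},\tfrac{3\pi}{2}\}$; it is red iff $\beta\in\{0,\pi\}$; otherwise it is the ``non-unitary'' case of the trichotomy. That Type~II fusion (with $U_i=H$, hence $\beta_i=\tfrac{\pi}{2}$) is the prototypical green-failure fusion confirms this.

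Once you replace your constraint by $\beta_i\in\{\pm\tfrac{\pi}{2}\}$, the rest of your plan goes through unchanged: the two discrete sign choices are absorbed into $\cvar{j},\cvar{k}$, and the surviving continuous parameters are $\alpha_1,\alpha_2,\beta=\beta_3,\phi$, giving the claimed form.
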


Note that a similar characterization can be obtained for red fusion by reversing colours. 
However if the fusion is neither ``red'' nor ``green'' then we have shown that it induces 
a non-unitary and thus non-correctable error on the target qubits in the failure case.
We now analyse the success case and further refine our notion of correctability.

\subsection{Fusion with correctable Pauli error}\label{subsec:planar-fusions}

Since measurements in quantum computing are probabilistic processes, a fusion induces random errors in both its success and failure cases.
In MBQC, such errors can be propagated when the measurement pattern has flow.
To similarly propagate fusion measurement errors in our framework, they must be equivalent to local Pauli gates on the input qubits.
Diagrammatically, this means that the measurement errors $\cvar{k}\pi$ and $\cvar{j}\pi$ can be pulled out to the input wires while keeping them in X, Y or, Z basis.
\begin{definition}[Pauli error]
  \label{def-paulifusion}
  A fusion measurement has Pauli error when the success outcome satisfies:
  \begin{equation*}
    \tikzfig{ZX/PauliFusion}
  \end{equation*}
  for some bits $\cvar{w},\cvar{x},\cvar{y},\cvar{z} \in \set{0, 1}$.
\end{definition}
From the equation above, we deduce that either $U_1$ or $U_2$ must be Clifford, and that $U_3$ must be a gate locally equivalent to $H$, $S$, or $Id$ so that the measurement is in the \YZ, \XZ, or \XY plane, respectively.
Further requiring that failure is green gives us the following characterization.
\begin{restatable}{proposition}{characterization}
  \label{thm-characterization}
  Any fusion measurement with green failure and Pauli error has the following form:
  \begin{equation}
    \tikzfig{ZX/Paulicharacterisation}
  \end{equation}
  for a measurement plane $\lambda \in \set{\YZm, \XZm, \XYm}$, angles $\alpha, \omega \in [0, 2\pi)$, and a choice
  of Clifford parameter $d \in \set{0, 1, 2, 3}$.
\end{restatable}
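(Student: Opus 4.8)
The plan is to build on \cref{prop:green-failure}, which already reduces a general fusion of the form \cref{eq:GF} to the four-parameter green-failure normal form with angles $\alpha_1, \alpha_2, \beta, \phi$ and byproduct outcomes $\cvar{j}, \cvar{k}$, and then to impose the Pauli-error condition of \cref{def-paulifusion} on top of that form. First I would write out the success outcome of the green-failure form as a ZX diagram and locate the two byproduct phases: $\cvar{k}\pi$ sitting on the central fused $Z$-spider and $\cvar{j}\pi$ sitting on the branch feeding the single-qubit measurement that follows $U_3$. Everything then reduces to propagating these two $\pi$'s to the input wires with the $\pi$-copy rule and reading off the constraints on $U_1, U_2, U_3$ under which the result is a local Pauli that keeps the inputs in an $X$, $Y$ or $Z$ basis.

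Second, I would carry out the propagation. Pushing $\cvar{k}\pi$ through the central spider sends one copy through $U_1$ to input $1$, one through $U_2$ to input $2$, and one up through $U_3$ into the measurement effect, while $\cvar{j}\pi$ only meets $U_3$ and the measurement. For the copies reaching the inputs to be Paulis preserving a single-qubit Pauli basis we need $U_1 P U_1^{\dagger}$ and $U_2 P U_2^{\dagger}$ to be Paulis for the relevant $P \in \set{X, Z}$; a genuinely non-Clifford rotation violates this, so at least one of $U_1, U_2$ must be Clifford, the other then being free to carry an arbitrary $Z$-rotation. For the copies reaching the measurement to merely toggle the outcome or add a global phase, $U_3$ must send $Z$ to a Pauli, hence be Clifford as well; the effective post-$U_3$ measurement is then planar, and the three inequivalent possibilities for $U_3$ modulo phases are the images of $H$, $S$, $Id$, yielding exactly the planes $\lambda \in \set{\YZm, \XZm, \XYm}$. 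This formalises the informal deductions stated just above the proposition.

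Third, it remains to normalise the surviving data into the claimed diagram. The single non-Clifford unitary among $U_1, U_2$, together with $\phi$ and the $\alpha_i$'s, is merged by spider fusion and colour change (using the Euler decomposition \cref{eq:euler}) into one planar measurement angle $\alpha$ and one residual $Z$-phase $\omega$ on the other input; the Clifford parts of $U_1, U_2, U_3$ collapse to a single Clifford parameter once Paulis are re-absorbed as relabellings of $\cvar{j}, \cvar{k}$ and global phases are re-absorbed into the scalar, and one checks that this parameter ranges over exactly four inequivalent values $d \in \set{0,1,2,3}$. Assembling these pieces gives the claimed form; I would then check the converse, that every choice of $(\lambda, \alpha, \omega, d)$ produces a fusion that does have green failure and Pauli error, so that the characterisation is tight.

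I expect the main obstacle to be this last normalisation step: keeping the bookkeeping honest while collapsing the Clifford data, i.e.\ verifying that no configuration of $(U_1, U_2, U_3)$ is lost or double counted, so that $(\lambda, d)$ is a complete and non-redundant labelling of the discrete freedom, and that every phase or Pauli declared absorbable genuinely leaves the CP interpretation unchanged --- a global phase becoming a scalar, a Pauli becoming a change of the outcome variables $\cvar{j}, \cvar{k}$. A secondary subtlety is ensuring the $\pi$-propagation stays compatible with the green-failure shape throughout, so that both conditions hold simultaneously rather than pulling against each other.
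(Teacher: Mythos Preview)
Your strategy --- start from the green-failure form of \cref{prop:green-failure} and propagate the $\pi$ byproducts to the inputs --- is exactly the paper's, but the roles you assign to $\cvar{k}\pi$ and $\cvar{j}\pi$ are swapped, and following your description literally would not yield the claimed form. In the green-failure normal form, $\cvar{k}\pi$ is a $Z$-phase fused into the central green spider (this is precisely how the merged phase $\phi$ of \cref{eq:Fusion6} absorbs it); a $Z$-phase on a green spider does \emph{not} copy to all legs, so $\cvar{k}\pi$ can only be pushed along a single leg, and commuting it past the Hadamard and $Z(\alpha_i)$ on that leg is what forces exactly one $\alpha_i$ to be Clifford (the parameter $d$), leaving the other free as $\omega$. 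Your reasoning, which sends copies of $\cvar{k}\pi$ through both $U_1$ and $U_2$, would force \emph{both} to be Clifford and leave no room for the free angle $\omega$. Conversely, $\cvar{j}\pi$ is the $X$-type measurement outcome and must be propagated all the way back to the inputs to satisfy \cref{def-paulifusion}; it is this commutation past the green $\phi$-phase that forces $\phi$ to be a Clifford phase, and it is $\cvar{j}\pi$ (not $\cvar{k}\pi$) that then copies through the central spider to both input legs. Your claim that $\cvar{j}\pi$ ``only meets $U_3$ and the measurement'' would leave the $\cvar{j}$-dependence unextracted from the effect and violate the Pauli-error condition.

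The paper's case split is accordingly on $\phi \in \{0, \pi/2\}$ (even vs.\ odd multiple of $\pi/2$): the even case gives $\lambda = \XZm$, the odd case gives $\lambda = \YZm$, and $\lambda = \XYm$ arises as the sub-case $\beta = \pi/2 + \alpha$ of the odd branch rather than from a third independent value of $U_3$. Once you correct which byproduct copies and which does not, your normalisation step and the converse check go through as you outline.
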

\noindent The detailed proof is given in \cref{app:fusion}.

In practical applications, it is desirable that the action of a fusion measurement on its target qubits is symmetric,
so that errors can be propagated on either qubit at will.
\begin{definition}[Symmetric fusion]
  We say that a fusion measurement is symmetric if it is invariant under swap in the success case, that is,
  \begin{equation*}
    \tikzfig{fusion/SymmetricFusion}
  \end{equation*}
\end{definition}

\begin{theorem}
  \label{prop:characterization}
  Any symmetric fusion measurement with green failure and Pauli error has the following form:
  \begin{equation}
    \tikzfig{ZX/characterisation}
  \end{equation}
  where $c \in \set{0, 1}$, $\lambda \in \set{\YZm, \XZm, \XYm}$, and $\alpha \in [0, 2\pi)$.
\end{theorem}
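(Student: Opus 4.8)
The plan is to read off this statement as a specialisation of \cref{thm-characterization}, adding the single extra hypothesis that the success outcome be invariant under the swap of the two input wires. By \cref{thm-characterization}, a fusion measurement with green failure and Pauli error already has the form parametrised by a plane $\lambda \in \set{\YZm, \XZm, \XYm}$, two angles $\alpha, \omega \in [0, 2\pi)$, and a Clifford parameter $d \in \set{0,1,2,3}$; concretely, in that form the two input wires carry the single-qubit gates whose shape was fixed using green failure (\cref{prop:green-failure}) and \cref{def-paulifusion}, so that they differ only in that one wire carries an extra $Z$-rotation by $\omega$ on top of a common Clifford correction $C_d$, while the two wires then meet in a single green spider that also carries the free single-qubit measurement effect $\bra{+_{\lambda,\alpha}}$ coming from $U_3$.

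First I would observe that the \emph{core} of this diagram --- the central green spider together with its Hadamard edges, its phase, and the attached $\bra{+_{\lambda,\alpha}}$ effect --- is manifestly invariant under exchanging its two qubit legs, because a spider is symmetric in its legs. Hence, post-composing the success diagram with $\mathtt{swap}$ and using this symmetry, the symmetric-fusion condition collapses to the single-qubit identity $Z(\omega)\, C_d = C_d$ up to a global phase, which forces $\omega = 0$. With $\omega = 0$ both input wires carry the same Clifford $C_d$, and the next step is to absorb each copy of $C_d$ into the central structure using spider fusion, colour change, and $\pi$-commutation: the two copies of $C_d$ then contribute a total diagonal phase $d\pi$ to the spider --- that is $0$ for $d$ even and $\pi$ for $d$ odd --- while any off-diagonal part either cancels between the two wires or leaves a single residual Pauli absorbed into the central effect. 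It is precisely the symmetry, forcing both wires to carry the same $C_d$, that produces this two-fold redundancy: the choices $d \in \set{0,2}$ and $d \in \set{1,3}$ each give the same fusion measurement, so the only genuine remaining freedom is the bit $c \coloneqq d \bmod 2 \in \set{0,1}$. Since the swap never constrained $U_3$, the plane $\lambda$ and the measurement angle $\alpha$ are untouched, and collecting the rewrites leaves exactly the form claimed in \tikzfig{ZX/characterisation}; green failure and Pauli error hold automatically, as the rewrites are ZX-equalities and the underlying measurement --- hence both of its outcomes --- is unchanged.

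The step I expect to be the main obstacle is this last reduction of the Clifford parameter, because the way $C_d$ is absorbed into the central structure depends on the chosen measurement plane: a Clifford on the inputs of an $\XYm$-type effect behaves differently from one feeding an $\XZm$- or $\YZm$-type effect, and $C_d$ need not be diagonal in the colour of the central spider. The argument thus becomes a short but fiddly case analysis over the pairs $(\lambda, d)$, carried out entirely with local ZX rewrites, in which one must verify in each case both that exactly a single bit of choice survives --- neither collapsing all four Cliffords to one, nor retaining all four --- and that this residual bit can consistently be recorded as the single $c$ appearing in the claimed normal form without disturbing the green-failure shape of the failure outcome.
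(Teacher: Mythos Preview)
Your approach is the paper's: the result is read off from \cref{thm-characterization} once the swap-symmetry hypothesis is added, and your argument that the central spider is swap-invariant so that symmetry forces the two input legs to carry the same single-qubit gate --- hence $\omega = 0$ in your parametrisation --- is correct.

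The one step that does not work as written is your mechanism for collapsing $d \in \{0,1,2,3\}$ to $c = d \bmod 2$. In the form of \cref{thm-characterization}, the Clifford $Z(d\pi/2)$ sits on the \emph{input} side of the $X(\pi/2)$ rotation fixed by green failure (it is the $\alpha_i$-parameter of the Euler decomposition, not the $\gamma_i$-parameter already absorbed into $\phi$), so it cannot be pushed into the central green spider by spider fusion, and no ``total diagonal phase $d\pi$'' appears there. The correct and simpler argument is the one you relegate to a parenthetical remark about a ``residual Pauli'': the gates $Z(d\pi/2)$ and $Z((d{+}2)\pi/2)$ differ by a Pauli $Z$ on the input, which is absorbed into the relabelling of the outcome bits $\cvar{k},\cvar{j}$ already present in the success branch. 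This reduction is uniform in $\lambda$, so the plane-by-plane case analysis you anticipate in the final paragraph is unnecessary.
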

\begin{proof}
  This directly follows from \cref{thm-characterization} proved in \cref{app:fusion}.
\end{proof}

\begin{definition}[Planar fusion]
  We call YZ, XZ, and XY fusion the three classes of symmetric fusion with green failure and Pauli error obtained by the choice of $\lambda$:\\
  \tabulinesep=3mm
  \begin{tabu}
    to \textwidth { X[$$c] | X[$$c] | X[$$c] }
    \lambda = \YZm             & \lambda = \XZm             & \lambda = \XYm             \\
    \tikzfig{fusion/Fusion-YZ} & \tikzfig{fusion/Fusion-XZ} & \tikzfig{fusion/Fusion-XY} \\
  \end{tabu}
\end{definition}

\begin{example}[Phase gadgets]
  A \emph{phase gadget} is an entangling non-Clifford gate that plays an important role in quantum circuit optimization~\cite{kissingerReducingNumberNonClifford2020, debeaudrapFastEffectiveTechniques2020, vandeweteringOptimalCompilationParametrised2024}
and quantum machine learning where they allow tuning the amount of entanglement between their inputs.
  A phase gadget is an instance of a YZ-fusion with $c = 0$ and $\alpha \in [0, 2\pi)$:
  \[
    \tikzfig{fusion/PhaseGadget}
  \]
\end{example}

\subsection{Stabilizer X and Y fusions}\label{sec:three-fusions}

The characterization that we obtained for fusions with green failure and Pauli error is three-fold, corresponding to
the three planes on the Bloch sphere.
We now consider measurements with the additional property of being Pauli measurements, in the X, Y or Z basis.

\begin{definition}[Stabilizer fusion]
  A fusion measurement is a stabilizer measurement if all its branches are stabilizer ZX diagrams, i.e.\@ only involve phases with multiples of $\frac{\pi}{2}$.
\end{definition}

Among symmetric green fusions, the characterization \cref{prop:characterization} together with the stabilizer assumption gives us only three possibilities
for whether the measurement in plane $\lambda$ is a Pauli $X$, $Y$ or $Z$ measurement.
Note first that the Z-fusion is trivial: it is a separable two-qubit measurement and leaves the connectivity
of the graph state unchanged.
\[
  \tikzfig{fusion/Fusion-Z-disconnects}
\]
X and Y fusions instead are entangling measurements that qualitatively change the connectivity of
the graph: they either fuse two nodes into one (X-fusion) or add a Hadamard edge between them (Y-fusion).

\begin{theorem}[Stabilizer green fusions]
  Up to local Clifford rotation on the target qubits, entangling stabilizer green fusions are either X or Y fusions.\\
  \begin{tabu}
    to \textwidth { X[c] | X[c] }
    \textnormal{X-Fusion}     & \textnormal{Y-Fusion}     \\
    \tikzfig{fusion/X-Fusion} & \tikzfig{fusion/Y-Fusion} \\
  \end{tabu}
\end{theorem}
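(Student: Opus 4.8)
The plan is to push the stabilizer hypothesis through the symmetric characterization of \cref{prop:characterization}. By that proposition every symmetric fusion with green failure and Pauli error is fixed by a triple $(c,\lambda,\alpha)$ with $c\in\{0,1\}$, $\lambda\in\{\YZm,\XZm,\XYm\}$ and $\alpha\in[0,2\pi)$. The stabilizer condition forces $\alpha$ to be a multiple of $\pi/2$, so the single-qubit planar effect $\bra{\pm_{\lambda,\alpha}}$ collapses to a Pauli-eigenbasis measurement: in the $\XYm$ plane to an $X$ or $Y$ measurement, in $\XZm$ to a $Z$ or $X$ measurement, and in $\YZm$ to a $Z$ or $Y$ measurement. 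Thus, as already noted before the statement, there are only three cases to consider --- the measurement in plane $\lambda$ is effectively a Pauli $X$, $Y$, or $Z$ measurement --- and the coincidences between planes (for instance $X$ arising from either $\XYm$ or $\XZm$) differ only by a single-qubit Clifford, which together with the parameter $c$ is absorbed into the ``local Clifford rotation on the target qubits'' permitted by the statement.

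I would then dispatch the three cases by direct rewriting. The $Z$-measurement case reproduces the Z-fusion discussed above: the $Z$-effect copies through the target spider, the diagram factors as a separable two-qubit measurement, and the graph-state connectivity is unchanged --- this is not entangling and is ruled out by hypothesis. For the $X$-measurement case I would substitute $\bra{\pm}$ into the characterization diagram and simplify with spider fusion, the $\pi$-copy rule and the Hadamard identities; the two target green spiders merge into a single spider, which is exactly the X-fusion, with the outcome phases $\cvar{k}\pi$ and $\cvar{j}\pi$ ending up on its legs as $X$/$Z$ corrections. For the $Y$-measurement case I would substitute $\bra{\pm i}$ --- equivalently push an $S^{\dagger}$ onto the target node and measure in the $Z$ basis --- and the analogous rewrite instead inserts a Hadamard edge between the two target nodes, giving the Y-fusion. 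In each case the leftover dependence on $c$ and $\lambda$ collapses to local Cliffords, and one checks that the byproducts are genuine Paulis on the target wires.

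The main obstacle is the ZX bookkeeping in the $X$ and $Y$ sub-cases. Although only finitely many combinations of parameters need to be checked, one must verify that every leftover Clifford --- the $c$ of \cref{prop:characterization}, along with whatever was folded into the node phases $\phi$ by the Euler-decomposition reductions earlier in \cref{sec:characterization} --- really splits as a single-qubit Clifford on each target, and that the resulting entangling core is literally X-fusion or Y-fusion rather than a Clifford conjugate that only looks different. One must also track which of the two outcome-relabelled branches one is in, so that the corrections come out as honest Paulis (in particular, so that a $Y$ correction is not tacitly replaced by $XZ$ up to a phase). Everything else is immediate from \cref{prop:characterization} together with the observation that Z-fusion is non-entangling.
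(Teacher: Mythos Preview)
Your proposal is correct and takes essentially the same approach as the paper: start from the characterization of \cref{prop:characterization}, force $\alpha$ to a multiple of $\pi/2$ by the stabilizer assumption, and read off from each plane which Pauli measurement results, discarding the separable $Z$ case. The paper's proof is simply a terse enumeration of this same case analysis (YZ: even $\mapsto Z$, odd $\mapsto Y$; XZ: even $\mapsto Z$, odd $\mapsto X$; XY: even $\mapsto X$, odd $\mapsto Y$), with the Clifford bookkeeping you worry about absorbed without comment into the ``up to local Clifford rotation'' clause of the statement.
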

\begin{proof}
  \noindent Suppose we write $\alpha = a \frac{\pi}{2}$ where $a \in \{0, 1, 2, 3\}$.
  Then from YZ-fusion we obtain $Z$-fusion when $a$ is even and $Y$-fusion when odd, from XZ-fusion we obtain $X$-fusion when $a$ is odd and $Z$-fusion when even, and from XY-fusion we obtain $X$-fusion when $a$ is even and $Y$-fusion when odd.
\end{proof}

The notion of stabilizer green fusion recovers the two main examples of fusion measurements used in the literature.

\begin{example}[Type II as X-fusion]
  The Type II fusion~\cite{browneResourceEfficientLinearOptical2005} is an instance of $X$-fusion with $c = 0$:
\begin{equation*}
  \tikzfig{fusion/XFusion}
\end{equation*}
Note that setting $c = 1$ is undesirable in this case as it only changes the errors from the X to the Y basis.
\end{example}

\begin{example}[CZ with Y-fusion]
  \label{ex:Y-fusion}
  The fusion measurement for performing CZ gates with linear optics, studied in~\cite{limRepeatUntilSuccessLinearOptics2005,degliniastySpinOpticalQuantumComputing2024},
  is an instance of $Y$-fusion.
  Indeed, up to Pauli errors, $Y$-fusion with $c = 1$ adds a Hadamard edge in the success case:
  \[
    \tikzfig{fusion/HFusion}
  \]
\end{example}

\begin{remark}
  Recall from \cref{remark:nType1} that Type II fusions can be generalized to an arbitrary number of input legs.
  Similarly, $Y$-fusion can also be generalized to any number of inputs.
  Its action corresponds to applying a CZ gate between each pair of qubits:
  \[
    \tikzfig{fusion/nHFusion}
  \]
  where the connections of the spiders form a complete graph on the right-hand side.
  This rewrite rule corresponds to toggling the CZ edges between all of the nodes being fused in the fusion of the underlying graph state; a formal proof can be derived from~\cite[Lemma 5.2.]{duncanGraphtheoreticSimplificationQuantum2020}.
\end{remark}

To conclude this section, we have carried out a thorough characterization of all fusion measurements that preserve graph states' connectivity on failure (\emph{green failure}) and induce correctable Pauli errors on success (\emph{Pauli error}).
Fusion measurements satisfying both these properties as well as being symmetric are XY, XZ, or YZ fusions.

\section{Flow structure for fusion networks}\label{sec:flow}
In the previous section we characterized fusion measurements that induce Pauli errors on their input qubits.
The aim of this section is to describe the flow structure that enables correction of these Pauli errors.
To do this, we extend the treatment of errors in standard MBQC to the fusion-based setting.
We study a notion of ``determinism on success'' for measurement patterns with additional destructive fusion measurements.
We define a notion of partially static flow for fusion networks which ensures errors can be propagated without applying corrections on fusion nodes.
We prove that any fusion network with partially static flow defines a fusion pattern that is deterministic on success.
Focusing then on fusion networks consisting of stabilizer X and Y fusions, 
we show that any decomposition of an open graph as an XY-fusion network has partially static flow, provided that the original graph has Pauli flow.

\begin{remark}
  The notion studied in this section only ensures determinism after post-selecting on fusion successes, it is focused at correcting errors 
  while reducing the feedforward required on fusion circuits.
  This assumption is justified in view of the repeat-until-success
  and boosting protocols studied in the following sections.
\end{remark}

\subsection{Fusion patterns}

In fusion-based quantum computing, photons may be prepared, entangled and measured by destructive fusions and single-qubit measurements.
We thus begin extending the definition of measurement patterns with the ability to perform \emph{correctable} fusion measurements.

\begin{definition}[Fusion pattern]
  A fusion pattern is a diagram in $\mathtt{Channel}(\mathbf{ZXLO})$ generated by the same commands as measurement patterns (\cref{def:meas_pattern})
  and additional two-qubit fusion commands defined according to \cref{thm-characterization} by:
  \begin{equation}
    \tikzfig{characterised-fusion}
  \end{equation}
  for some plane $\lambda \in \set{\XYm, \XZm, \YZm, X, Y, Z}$, angle $\alpha \in [0, 1)$ and Clifford parameter $c \in \set{0, 1}$.
\end{definition}

For a fusion pattern $P$ with $m$ measurement commands and $f$ fusion commands, the \emph{fusion branches} are the $2^f$ 
measurement patterns $\{P_{\vec{s}} \}_{\vec{s} \in \mathbb{B}^f}$ obtained by post selecting each success outcome according to $\vec{s}$, i.e.\@ $\cvar{s}_i = \vec{s}_i$.
This corresponds to replacing the $F_{\lambda, \alpha}$ command with the success or failure outcome, according to \cref{thm-characterization}.
The branch $P_{\vec{1}}$ where all fusions are successful is called the \emph{success branch}.

\begin{definition}[Determinism on success]
    We say that a fusion pattern $P$ is \emph{deterministic on success} if $P_{\vec{1}}$ is deterministic. Similarly, $P$ is uniformly, strongly or
    stepwise deterministic on success if $P_{\vec{1}}$ is uniformly, strongly or stepwise deterministic.
\end{definition}

\begin{example}
    \label{ex:XYpattern}
    As an example, consider the fusion pattern defined by the following diagram $P$ and its corresponding success pattern $P_{\vec{1}}$:
    \[
      \tikzfig{fpattern}
    \]
    where two stars have been cancelled by the scalars from the two entangling gates.
    The success pattern has $8$ branches obtained by setting the different values of $k, l, j \in \{0, 1 \}$.
    By rewriting the ZX diagram above, we can show that these $8$ branches are proportional to each other:
    \[
      \scalebox{.9}{\tikzfig{fpattern-rewrite}}
    \]
    where the last equation holds 
    Therefore, this specific pattern is deterministic on success.
    Moreover, each of the branches carries the same scalar, making the pattern strongly deterministic.
    Since the rewrite above holds for any angle $\alpha$, the pattern is also uniformly deterministic.
    By considering the pattern truncated at single-qubit measurement commands, a similar rewrite shows that it is also stepwise deterministic on success.
  \end{example}

\subsection{Fusion networks}\label{subsec:fusion-network}
 
Fusion networks represent configurations of single-qubit measurements and fusion measurements on a resource graph state.
Just as open graphs define the underlying topology of a measurement pattern,  fusion networks capture the underlying topology of fusion patterns.

\begin{definition}[Fusion network]
  A fusion network, denoted by $\mathcal{F} = (G, I, O, F, \lambda, \alpha, c)$, is given by the following:
  \begin{enumerate}
    \item an open graph $(G, I, O)$ (called \enquote{resource graph}),
    \item a set of fusions $F \sub \mathcal{M}(\comp{O} \times \comp{O})$,
    \item an assignment of measurement planes $\lambda_G: \comp{O} \to \set{\XYm, \XZm, \YZm, X, Y, Z, \ast}$ and $\lambda_F: F \to \set{\XYm, \XZm, \YZm, X, Y}$ (which we denote simply by $\lambda$ when the domain is unambiguous),
    \item an assignment of measurement angles $\alpha: \comp{O} + F \to [0, 2\pi)$ ($\alpha(v)$ is set to zero if $\lambda(v) \in \set{X, Y, Z}$), and
    \item a Clifford parameter for each measured qubit $c: \comp{O} \to \set{0, 1, 2, 3}$.
  \end{enumerate}
  where $\mathcal{M}$ denotes the multi-set construction, $+$ denotes the disjoint union, and $-$ denotes the set difference.
  An XY-fusion network is a fusion network where all fusions are either $X$ or $Y$ fusions.
\end{definition}

\begin{remark}
  The assignment of measurement planes to nodes in the resource graph $\lambda_G$ may return $\ast$ 
  which means that the node does not have a single-qubit measurement.
  We call these nodes ``unmeasured''.
  This makes the definition more flexible, as nodes may be part of a fusion and not of a single-qubit measurement.
  Moreover, the definition of fusion network given here references a single resource graph.
  In practice, the graph $G$ may be the disjoint union of multiple copies of the same basic resource state.
\end{remark}

Following \cref{prop:characterization}, a successful fusion has the effect of introducing an additional
node in the graph, measured in an arbitrary plane and angle.
Thus, any fusion network $\mathcal{F}$ defines a \emph{target open graph}, denoted $\mathcal{M_F}$, capturing the computation performed when we post select on fusion successes.

\begin{definition}[Target open graph]
  Given a fusion network $\mathcal{F} = (G, I, O, F, \lambda, \alpha, c)$ with $G = (V, E)$.
  Any fusion $f \in F$ contributes an extra vertex to the graph, labelled $v_f$.
  The \emph{target open graph} of $\mathcal{F}$ is $\mathcal{M_F} \coloneqq (G_\mathcal{F}, I, O, \lambda_\mathcal{F}, \alpha_\mathcal{F})$,
  where $G_\mathcal{F} = (V_\mathcal{F}, E_\mathcal{F})$,
  \begin{gather*}
      V_\mathcal{F} = V \cup \set{v_f | f \in F} \qquad
      E_\mathcal{F} = E \cup \set{ (v_f, w) | w \text{ belongs to } f \in F }\\
      \lambda_\mathcal{F}(u) =
    \begin{dcases}
      \lambda(f), & \text{if } u = v_f \text{ for some } f \in F \\
      X & \text{if } \lambda(u) = \ast \land c(u) \bmod 2 \equiv 0 \\
      Y, & \text{if } \lambda(u) = \ast \land c(u) \bmod 2 \equiv 1 \\
      \YZm, & \lambda(u) = \XZm \land c(u) \bmod 2 \equiv 1 \\
      \XZm, & \lambda(u) = \YZm \land c(u) \bmod 2 \equiv 1 \\
      \lambda(u), & \text{otherwise}
    \end{dcases}
    \qquad
    \alpha_\mathcal{F}(u) =
    \begin{dcases}
      \alpha(f), & \hspace*{-1.2cm}\text{if } u = v_f \text{ for some } f \in F \\
      \alpha(u) + \frac{c(u)\pi}{2}, & \lambda(u) = \XYm\\
      (-1)^{\left\lceil\! \frac{c(u)}{2}\! \right\rceil} \alpha(u), & \lambda(u) = XZ \\
      (-1)^{\left\lfloor\! \frac{c(u)}{2}\! \right\rfloor} \alpha(u), & \lambda(u) = YZ
    \end{dcases}
  \end{gather*}
  The \emph{target linear map} of the fusion network $T(\mathcal{F})$ is the target linear map of $\mathcal{M_F}$.
\end{definition}

In other words, nodes and edges of $G$ are extended with those coming from the set of fusions $F$.
The fusion measurement planes and angles are part of the new single-qubit measurement parameters.
We model ``unmeasured'' nodes in the resource graph (such that $\lambda(u) = \ast$) as measured in the $X$ basis.
Furthermore, some of the original single-qubit measurements are modified if their Clifford parameters are non-zero.
Clifford parameters on measured qubits correspond to changes in measurement planes.
We can capture these by the following equations:
\[
  \tikzfig{figures/fusion/clifford-parameter-change}
\]

\begin{example}\label{ex:network}
  \label{ex:fusion-network}
  Consider a fusion network with a pair of lines as the resource graph and two fusions.
  The target measurement graph is obtained by adding a new node in the graph for each fusion.
  \[
    \tikzfig{figures/network-graph}
    \qquad \longrightarrow \qquad
    \tikzfig{figures/network-target1}
  \]
\end{example}

\subsection{Partially static flow}

We now define a notion of flow for fusion networks that makes them deterministically implementable by a fusion pattern.
We assume that corrections can only be applied before a single-qubit measurement.
This ensures that fusion measurements are implemented by a \emph{static} setup.
Similarly, unmeasured nodes in the resource graph are modeled as $X$ measurements, but it is not possible to apply a correction on these nodes.
According to the definition of Pauli flow from \cref{def:pauli-flow},
$X$ corrections will be performed in $g(v) - \set{v}$ and $Z$ corrections in $\Odd(g(v)) - \set{v}$, for any non-output node $v$.
We thus obtain the following definition.

\begin{definition}[Partially static flow]
  \label{def:flow}
  A static flow for a fusion network $\mathcal{F} = (G, I, O, F, \lambda, \alpha)$ is a Pauli flow $(p, \leq)$ on the target open graph $\mathcal{M_F}$,
  such that no corrections need to be applied on fusion nodes.
  Concretely, for any node in the resource graph $u \in \comp{O}$ we must have:
  \begin{itemize}
    \item if $\lambda(u) = \ast$ then for any $v \in \mathcal{M_F}$, $ f\notin \Odd(p(v))$.
  \end{itemize}
  and for any fusion node $f \in F$ we must have:
  \begin{itemize}
    \item if $\lambda(f) = X$ then for any $v \in \mathcal{M_F}$, $ f\notin \Odd(p(v))$,
    \item if $\lambda(f) \in \set{\XYm, \XZm, \YZm}$ then for any $v \in \mathcal{M_F} - \set{f}$, $f\notin \Odd(p(v)) \cup p(v)$.
  \end{itemize}
\end{definition}

\begin{remark}
  Note that the condition above is precisely what is necessary to define a flow on the target open graph which does not require corrections on fused and unmeasured nodes.
  For $Y$-measured nodes this is already the case by conditions $1-3$ in \cref{def:pauli-flow}, so we do not need to impose additional conditions.
\end{remark}

Following~\cite{browneGeneralizedFlowDeterminism2007}, we prove that our notion of flow is both necessary and sufficient
for an XY-fusion pattern to be uniformly, strongly and stepwise deterministic on success.
Moreover, every such pattern can be factorized such that all fusions appear before single-qubit measurements.

\begin{theorem}\label{thm:flow-pattern}
  Given a fusion network $\mathcal{F}$ with static flow $(p, <)$ the fusion pattern defined by:
  \[
    \left( \prod^<_i X^{k_i}_{g(i)} Z^{k_i}_{\Odd(g(i))} M_i^{\lambda_i, \alpha_i, \cvar{k_i}}\right)
    \left(\prod_{f = (i, j) \in F} X^{k_f}_{g(f)} Z^{k_f}_{\Odd(g(f))} F_{ij}^{\lambda(f), \cvar{s_f}, \cvar{k_f}}\right)
    E_G N_{\comp{I}}
  \]
  is uniformly, strongly, and stepwise deterministic on success and implements the target linear map $T(\mathcal{F})$ when all fusions are successful.
  Here, $g(i) = p(i)\, \cap \{ j \, \vert \, i < j \}$, $\prod^{<}$ denotes concatenation in the order $<$ and
  $\prod$ denotes concatenation in any order.
\end{theorem}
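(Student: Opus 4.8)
The plan is to reduce the statement to the standard Pauli-flow determinism theorem \cref{thm:flow-determinism} applied to the target open graph $\mathcal{M_F}$, and then to argue that hoisting all fusion commands to the front of the pattern is harmless. First I would fix the success branch $P_{\vec 1}$, obtained by post-selecting $\cvar{s_f}=1$ for every fusion, which is by definition all that \emph{determinism on success} concerns. Using \cref{thm-characterization} (equivalently the symmetric form \cref{prop:characterization}) together with the Clifford-parameter rewrites displayed before \cref{ex:network}, each successful fusion $F_{ij}^{\lambda(f),\cvar{s_f},\cvar{k_f}}$ rewrites to a fresh green spider $v_f$ entangled by Hadamard edges to $i$ and $j$, carrying measurement plane $\lambda_\mathcal{F}(v_f)$ and angle $\alpha_\mathcal{F}(v_f)$, with its undesired outcome a heralded Pauli $\cvar{k_f}\pi$ on $v_f$. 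Composing with $E_G N_{\comp I}$ and the remaining single-qubit measurements, $P_{\vec 1}$ becomes — in the CQ interpretation, up to the scalar cancellations already visible in \cref{ex:XYpattern} — a diagram in MBQC form whose underlying labelled open graph is exactly $\mathcal{M_F}$, with correction commands $X^{k}_{g(\cdot)}Z^{k}_{\Odd(g(\cdot))}$; these are precisely the corrections prescribed by the canonical flow pattern of \cref{thm:flow-determinism} for the Pauli flow $(p,<)$, \emph{except} that the commands associated with the fusion vertices $v_f$ have been pulled in front of every single-qubit measurement command.

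Next I would justify this reordering. The extra clauses of \cref{def:flow} are tailored exactly so that no correction is ever applied \emph{onto} a fusion vertex (for planar $\lambda(f)$, $f\notin p(v)\cup\Odd(p(v))$; for $\lambda(f)=X$, $f\notin\Odd(p(v))$, and an $X$-correction on an $X$-measured vertex is trivial by condition $1$ of \cref{def:pauli-flow}), and likewise onto unmeasured vertices. Hence the measurement basis of each $v_f$ is fixed before the protocol runs, so $F_{ij}$ reads no earlier classical variable; conversely $g(f)=p(v_f)\cap\{j\mid v_f<j\}$ shows that the outcome $\cvar{k_f}$ only feeds corrections on $<$-later vertices, none of which are $<$-earlier single-qubit measurements. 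Therefore, for a measurement command $M_i$ with $i<v_f$, the pair $\bigl(M_i,\;X^{k_f}_{g(f)}Z^{k_f}_{\Odd(g(f))}F_{ij}\bigr)$ shares no controlled-preparation or measurement variable across the swap and the underlying pure maps act on commuting subsystems after $E_G N_{\comp I}$, so the premonoidal interchange law for $\mathtt{Channel}(\mathbf{ZXLO})$ lets the two commands be exchanged without changing the CQ interpretation; iterating bubbles every fusion block to the front, and the same exchanges restricted to any truncation show that this hoisting commutes with the truncate-and-reinstate-corrections operation defining stepwise determinism. One also checks that the resulting annotation set $\mathcal E$ is still causally compatible, and that the reordered command list is runnable.

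Finally, $(p,<)$ is by definition a Pauli flow on $\mathcal{M_F}$, so the Pauli-flow version of \cref{thm:flow-determinism} gives that the canonical flow pattern on $\mathcal{M_F}$ is runnable, uniformly, strongly and stepwise deterministic, and realises $T(\mathcal{M_F})=T(\mathcal F)$, an isometry. Transporting these properties back through the reordering — which is an equality in the CQ interpretation, compatible with truncations and independent of the measurement angles $\alpha$ — yields exactly the claims of \cref{thm:flow-pattern}. I expect the main obstacle to be the reordering step: one must handle the premonoidal (non-)interchange carefully, confirm that $\mathcal E$ stays causally compatible after hoisting, and verify that stepwise determinism survives; the remaining work — the scalar and Clifford-parameter bookkeeping identifying $P_{\vec 1}$ with the MBQC-form diagram for $\mathcal{M_F}$ — is routine ZX rewriting of the kind already carried out in \cref{ex:XYpattern}.
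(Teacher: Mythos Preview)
Your proposal is correct and follows essentially the same approach as the paper: reduce to the standard Pauli-flow determinism theorem on $\mathcal{M_F}$, then argue that the static-flow conditions permit placing all fusion commands before the single-qubit measurements. The paper's version is more concise on the reordering step --- it simply observes that since no corrections are applied onto fusion nodes, the partial order $<$ can be modified to make fusion nodes initial, rather than your explicit premonoidal-interchange argument --- but the underlying idea is the same.
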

\begin{proof}
  Since $\mathcal{M_F}$ has Pauli flow, we have a correction function $p: \comp{O} + F \to \mathcal{P}(\comp{O} + F)$
  satisfying the Pauli flow conditions.
  Since the errors in the success branches of $\mathcal{F}$ correspond exactly to the errors in $\mathcal{M_F}$,
  by~\cite[Theorem 4]{browneGeneralizedFlowDeterminism2007}, $\mathcal{F}$ is uniformly, strongly and stepwise deterministic on success
  and implements the target linear map $T(\mathcal{F}) = T(\mathcal{M_F})$.
  Moreover, since no corrections need to be applied on fusion nodes, the fusion nodes can be made initial in the order $<$.
  This gives us the factorisation required, where every fusion appears before single qubit measurements.
\end{proof}

\subsection{Decomposing open graphs as XY-fusion networks}\label{subsec:flow-preserving}

Suppose we wish to implement a given labelled open graph $\mathcal{G}$ as an XY-fusion network.
Several different fusion networks may exist that have $\mathcal{G}$ as their simplified target graph.
We now show that any such decomposition of $\mathcal{G}$ as a fusion network $\mathcal{F}$ is guaranteed to have static flow,
provided that $\mathcal{G}$ has Pauli flow. 
To prove this, we use rewrites that preserve the existence of Pauli flow~\cite{simmonsRelatingMeasurementPatterns2021,mcelvanneyCompleteFlowpreservingRewrite2023, mcelvanneyFlowpreservingZXcalculusRewrite2023}.

\begin{restatable}[X-fusion]{proposition}{xFusionFlowPreserving}\label{prop:Xflow-preserving}
  The following open graph rewrite preserves the existence of Pauli flow:
  \[
    \tikzfig{fusion/XFusionFlowPreservingGraph}
  \]
  where $\lambda(f) = \lambda(b) = X$, $\lambda(a) = \lambda(v_f)$, and $\alpha(a) = \alpha(v_f)$.
\end{restatable}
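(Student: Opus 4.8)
The plan is to work entirely at the level of open graphs and reduce the claim to the already-established invariance of Pauli flow under the elementary graph rewrites — local complementation, pivoting, and insertion/deletion of pendant or identity Pauli-measured vertices — from~\cite{simmonsRelatingMeasurementPatterns2021,mcelvanneyCompleteFlowpreservingRewrite2023,mcelvanneyFlowpreservingZXcalculusRewrite2023}. First I would make the rewrite precise as a relation between two labelled open graphs: by \cref{prop:characterization} and the discussion of $X$-fusion in \cref{sec:three-fusions}, a successful $X$-fusion identifies its two target spiders up to a Pauli byproduct, so introducing the fusion $f$ amounts, on the underlying (target) graph, to replacing a single vertex $a$ — measured in plane $\lambda(a)$ at angle $\alpha(a)$ with neighbourhood $N(a)$ — by an $X$-measured ancilla $b$ together with a vertex $v_f$ carrying $a$'s measurement data ($\lambda(v_f)=\lambda(a)$, $\alpha(v_f)=\alpha(a)$), the fusion contributing the $X$-measured bridge joining $b$ and $v_f$, and with $N(a)$ recovered as the symmetric difference of the neighbourhoods of $b$ and $v_f$. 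The proposition is then the statement that this graph transformation is flow-invariant in both directions, so it suffices to treat one direction and run it backwards.

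The core of the argument is to exhibit the transformation as a short composition of the elementary flow-preserving moves. Reading from the collapsed side to the fused side, the bridge vertex introduced by $f$ is $X$-measured of degree two, so it can be produced by unfusing at $a$ and then is handled by a pivot (equivalently a pair of local complementations) about one of its two bridge edges, after which the ancilla $b$ appears as a pendant-type $X$-vertex; each of these moves preserves the existence of Pauli flow by the cited results. As an independent check I would also transport a Pauli flow $(p,\prec)$ directly: place $v_f$ and $b$ early in $\prec$, fix their correction sets so that the membership and odd-neighbourhood parities demanded by condition $7$ of \cref{def:pauli-flow} hold (e.g. with $v_f$ correcting one endpoint of its bridge), and update each old correction set by toggling $\{b,v_f\}$ in the way that leaves the induced $\Odd(\cdot)$ unchanged on the old vertices. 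The plane conditions $4$–$6$ and the $Y$-vertex condition $3$ then go through because the $\Odd$-parities at old vertices are preserved, while $b$ and $v_f$ being $X$-measured means their presence in other correction sets or odd neighbourhoods is harmless (conditions $1$, $2$ exempt $X$-measured vertices, and condition $8$ is vacuous for them).

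The main obstacle I expect is the parity bookkeeping forced by the neighbourhood redistribution: since $a$'s neighbours are split between $b$ and $v_f$, the neighbourhoods of pre-existing vertices adjacent to $a$ also change, so the sets $\Odd(p(v))$ can shift for several old vertices $v$ simultaneously, and one must verify that the updated correction sets compensate exactly — both so that conditions $2$–$6$ survive and so that no new ordering constraint is created between old vertices, which would break preservation of $\prec$. This is precisely the delicate computation that flow-invariance under pivoting and local complementation already encapsulates, which is why I would route the proof through those moves rather than re-deriving the parity arithmetic by hand; the remaining work is then the bookkeeping of checking that the chosen sequence of moves realises the displayed rewrite and that the side conditions $\lambda(f)=\lambda(b)=X$, $\lambda(a)=\lambda(v_f)$, $\alpha(a)=\alpha(v_f)$ are exactly those making the composite well-defined.
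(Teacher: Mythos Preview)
Your approach is essentially the paper's: the proof in the appendix is a chain of diagrammatic rewrites, each an instance of one of the cited flow-preserving lemmas (Copy, Local Complementation, Pivot, and a small State-Change lemma), exactly as you propose. The paper does not carry out the direct transport of $(p,\prec)$ that you sketch as a backup; it relies purely on the decomposition into elementary moves, so your main line already matches and the secondary check is extra.
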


\begin{restatable}[Y-fusion]{proposition}{yFusionFlowPreserving}
  The following open graph rewrite preserves the existence of Pauli flow:
  \[
    \tikzfig{fusion/YFusionFlowPreservingGraph}
  \]
  where $\lambda(f) = Y$, $c(a) = c(b) = 0$ on the left and $c(a) = c(b) = 1$ on the right-hand side.
\end{restatable}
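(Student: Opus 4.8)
The plan is to proceed exactly as for \cref{prop:Xflow-preserving}: reduce the claim to a known Pauli-flow-preserving rewrite on labelled open graphs, applied to the target open graphs of the two fusion networks. First I would write out the target open graph of each side. On the side carrying the fusion $f$, the target-open-graph construction introduces a single fresh vertex $v_f$ measured in the $Y$ basis, joined by Hadamard edges to $a$ and $b$ and to nothing else --- a fusion is a two-element multiset, so $\deg(v_f) = 2$ --- while leaving the rest of the resource graph, and in particular the measurement data on $a$ and $b$, unchanged (here $c(a)=c(b)=0$). On the other side there is no fusion vertex, $a$ and $b$ are joined directly, and the choice $c(a)=c(b)=1$ rewrites the planes and angles of $a$ and $b$ through the $\lambda_\mathcal{F}$, $\alpha_\mathcal{F}$ formulas ($\ast \mapsto \Ym$, $\XZm \mapsto \YZm$, $\YZm \mapsto \XZm$, and a shift of the $\XYm$-plane angle by $\tfrac{\pi}{2}$), which are precisely the equalities recorded in the \enquote{clifford-parameter-change} diagram.

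The key observation is that these two labelled open graphs are related by eliminating the $Y$-measured vertex $v_f$: since $v_f$ is measured in $Y$ with neighbourhood $\set{a,b}$, a local complementation at $v_f$ toggles the single edge $ab$, deposits an $S^{\pm 1}$-type local Clifford on each of $a$ and $b$, and fixes everything else; absorbing those two local Cliffords as a unit increment of $c(a)$ and $c(b)$ and deleting the now-irrelevant vertex $v_f$ produces the other side. This is the $Y$-vertex elimination rewrite, which preserves the existence of Pauli flow~\cite{simmonsRelatingMeasurementPatterns2021, mcelvanneyFlowpreservingZXcalculusRewrite2023}; since it is invertible --- its inverse inserts a $Y$-measured degree-$2$ vertex on the edge $ab$ --- the existence of Pauli flow transfers in both directions. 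As the target open graph is what carries the Pauli-flow structure of a fusion network, this shows that one side admits Pauli flow exactly when the other does.

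The routine but delicate part, and the step I expect to be the real obstacle, is the Clifford-parameter bookkeeping: one must check that the local Clifford left by local complementation at a $Y$-vertex on a neighbour $w$ is exactly the one produced by incrementing $c(w)$ by $1$ in the target-open-graph formulas, and that this holds uniformly over all measurement planes of $a$ and $b$ --- including the unmeasured case $\lambda=\ast$ (treated as an $X$ measurement) and the case where $a$ or $b$ lies in $I$. A further point to dispatch is the degenerate situation in which $a$ and $b$ are already adjacent in the resource graph, so that local complementation at $v_f$ deletes the edge $ab$ instead of creating it; there the rewrite is read as toggling the edge, and the argument above is unchanged.
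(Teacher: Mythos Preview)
Your proposal is correct and matches the paper's approach: both arguments identify the target open graph on the fusion side as having a degree-$2$ $Y$-measured vertex $v_f$ adjacent to $a$ and $b$, and then eliminate it by local complementation at $v_f$ --- precisely the flow-preserving rewrite of \cref{lem:lc} (\cite[Lemma D.12]{simmonsRelatingMeasurementPatterns2021}, \cite[Lemma 2.10]{mcelvanneyFlowpreservingZXcalculusRewrite2023}) --- which toggles the edge $ab$ and deposits the $c(a)=c(b)=1$ Clifford increments on the neighbours. The bookkeeping concerns you flag (matching the deposited local Cliffords against the $\lambda_\mathcal{F},\alpha_\mathcal{F}$ formulas, the already-adjacent case, inputs) are exactly what the cited lemma handles in general, so no separate case analysis is needed.
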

The proofs are in Appendix~\ref{app:flow}.

We can now show that Pauli flow on the target open graph $\mathcal{G_F}$ is both necessary and sufficient for $\mathcal{F}$ to have static flow.

\begin{theorem}\label{thm:flow-simplified-graph}
  An XY-fusion network $\mathcal{F}$ has static flow if and only if the simplified target graph $\mathcal{G_F}$ has Pauli flow.
\end{theorem}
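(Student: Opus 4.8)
The plan is to realise the simplified target graph $\mathcal{G_F}$ as the result of applying the flow-preserving open-graph rewrites of \cref{prop:Xflow-preserving} and its Y-fusion analogue to the target open graph $\mathcal{M_F}$: one rewrite per fusion $f \in F$ (the X-fusion rewrite when $\lambda(f) = X$ and the Y-fusion one when $\lambda(f) = Y$), after first absorbing the Clifford parameters $c$ with the change-of-plane moves of \cref{subsec:fusion-network}. Eliminating all fusion vertices $v_f$ and all \enquote{unmeasured} resource nodes in this way leaves exactly $\mathcal{G_F}$, so the theorem reduces to transporting Pauli flow back and forth across these rewrites.

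The forward direction is then immediate: a static flow for $\mathcal{F}$ is, by \cref{def:flow}, in particular a Pauli flow on $\mathcal{M_F}$, and since each of the rewrites above preserves the existence of a Pauli flow, so does their composite; hence $\mathcal{G_F}$ has Pauli flow.

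For the converse, given a Pauli flow on $\mathcal{G_F}$ one un-applies the same rewrites --- each again preserving existence of a Pauli flow --- to obtain a Pauli flow $(p, \leq)$ on $\mathcal{M_F}$. For an XY-fusion network the only residual conditions of \cref{def:flow} left to verify concern the $X$-fusion vertices and the formerly unmeasured nodes, both of which become $X$-measured in $\mathcal{M_F}$: none of them may appear in $\Odd(p(v))$ for any other vertex $v$. The $Y$-fusion vertices require nothing further, since conditions $1$--$3$ of \cref{def:pauli-flow} already ensure that $Y$-measured vertices need carry no correction (cf.\ the remark after \cref{def:flow}). I would extract the $X$-case condition from the explicit flow constructions in the proofs of \cref{prop:Xflow-preserving} and its Y-fusion counterpart in \cref{app:flow}: un-applying the X-fusion rewrite reinserts $X$-measured gadget vertices that can be placed minimally in the order and, because an $X$ error on an $X$-measured qubit is only a global phase, need carry no correction beyond the self-correction permitted by condition $7$, so they are never used to correct another vertex. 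Iterating over all of $F$ then produces a Pauli flow on $\mathcal{M_F}$ that is a static flow for $\mathcal{F}$, and the two directions together give the equivalence.

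I expect the converse refinement to be the main obstacle: when un-applying an X-fusion rewrite one must reroute, through the reinserted gadget vertices, every correction set that referenced the collapsed node, and one has to show this can always be done while keeping those $X$-measured vertices out of all odd neighbourhoods --- a parity condition that becomes delicate when a resource node takes part in several fusions, forcing care about the order in which the rewrites are undone. A cleaner alternative worth trying is to first note that $\mathcal{M_F}$ has \emph{some} Pauli flow (immediate from the rewrites) and then pass to a focused, maximally delayed Pauli flow, whose corrections are pushed as late as possible and hence automatically avoid the $X$- and $Y$-measured gadget vertices; turning that observation into a rigorous argument would itself be the crux.
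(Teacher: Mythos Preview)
Your overall strategy matches the paper's: reduce both directions to the flow-preserving rewrites of the two propositions, and for the converse verify the extra static-flow condition on the $X$-measured fusion nodes (correctly noting that $Y$-fusion nodes require nothing further). The forward direction is handled exactly as you describe.

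Where you diverge is in resolving the residual $X$-case, which you flag as the main obstacle and propose to handle either by mining the explicit appendix constructions or by passing to a focused or maximally delayed Pauli flow. The paper instead dispatches it with a one-line direct construction: when un-applying the $X$-fusion rewrite, whenever the merged vertex $v$ appears in a correction set $g(u)$ of the Pauli flow on $\mathcal{G_F}$, replace it by \emph{both} target vertices $a$ and $b$, setting $p(u) = g(u) - \{v\} + \{a, b\}$. Since the fusion node $f$ is adjacent precisely to $a$ and $b$, it then has an even number of neighbours in every $p(u)$, so $f \notin \Odd(p(u))$ automatically. No delicate parity bookkeeping across multiple fusions and no focused-flow machinery is needed; each $X$-fusion is handled independently by this substitution.

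One correction to your sketch of the $X$-case: the argument that the reinserted $X$-measured gadget vertices \enquote{need carry no correction beyond the self-correction permitted by condition~7, so they are never used to correct another vertex} is aimed at the wrong condition. The static-flow requirement for an $X$-fusion node $f$ is $f \notin \Odd(p(u))$ for all $u$, i.e.\ that $f$ receives no $Z$-correction; this concerns how \emph{other} vertices' correction sets meet $f$'s neighbourhood, not what $f$'s own correction set contains. The observation that an $X$-error on an $X$-measured qubit is a global phase only explains why $f \in p(u)$ is permissible (Pauli-flow condition~1), and says nothing about $\Odd(p(u))$.
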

\begin{proof}
  This follows from the two propositions above.
  For $X$-fusion we moreover need to show that when rewriting from $\mathcal{G_F}$ to $\mathcal{M_F}$ the newly introduced fusion node is not in the odd neighbourhood of some correction set.
  Using the notation of \cref{prop:Xflow-preserving},
  suppose that $v \in \Odd(g(u))$ for some node $u$ in $\mathcal{G_F}$ where $g$ is the Pauli flow on $\mathcal{G_F}$,
  then if $u$ is a neighbour of $v$, it is a neighbour of either $a$ or $b$ in $\mathcal{M_F}$ (and not a neighbour of both).
  Therefore, we can set $p(u) = g(u) - \{v\} + \{a, b\}$ as the correction function in $\mathcal{M_F}$ without changing the connectivity of
  $u$ to its correction set, and thus without violating the Pauli flow conditions for $u$.
  Then we have $f \notin \Odd(p(u))$, as required.
\end{proof}

\section{Universality in fusion-based architectures}\label{sec:universality}

Different architectures have recently been proposed to perform photonic and distributed quantum computing.
These include all-photonic approaches such as the FBQC proposal~\cite{bartolucciFusionbasedQuantumComputation2023}, 
and matter-based architectures such as spin-optical~\cite{degliniastySpinOpticalQuantumComputing2024} and ion-based~\cite{monroe_large_2014, main_distributed_2025} architectures.
The graphical framework developed in this paper allows us to represent, reason about and compare these architectures on an equal footing.

In this section, we study universality in distributed architectures, focusing on two examples:
\begin{enumerate}
  \item lattice-based photonic architectures based on a constant-size resource state generator and fusion measurements~\cite{bartolucciFusionbasedQuantumComputation2023}.
  \item reprogrammable architectures based on quantum emitters~\cite{degliniastySpinOpticalQuantumComputing2024, hilaire_enhanced_2024}
\end{enumerate}
We work in an idealised setting satisfying the following assumptions:
\begin{itemize}
  \item resource state generation is deterministic,
  \item photons are indistinguishable,
  \item all components are noiseless and have perfect efficiency.
\end{itemize}
While these assumptions are strong, a proof of universality in this setting needs to address 
both the probabilistic nature of fusion measurements and the correction of Pauli byproducts induced by undesired measurement outcomes. 
Relaxing any of the above assumptions gives an error model for these architectures 
which may be further analysed with the components of \cref{subsec:channel-examples}.

We start this section by proving the correctness of protocols that boost the probability of success of fusion measurements.
Exploiting these results we then provide two minimal examples of architectures --- lattice-based and reprogrammable --- 
giving graphical proofs of determinism and (weak) universality for each.
Compared to previous approaches~\cite{bartolucciFusionbasedQuantumComputation2023, degliniastySpinOpticalQuantumComputing2024} 
based on the generation of a universal lattice, our results show the potential for intermediate scale photonic quantum computing 
in programmable optical setups enhanced by compilation techniques.

\subsection{Boosting fusion with entangled resource states}\label{subsec:RUS}

Boosting the success probability of fusion measurements is an essential requirement for scaling FBQC\@.
Indeed the naive fusion measurement has a probability of success of only $\frac{1}{2}$ (see \cref{subsec:fusion-prob}).
Common approaches to boosting include the use of ancillary single photons~\cite{griceArbitrarilyCompleteBellstate2011, ewertEfficientBellMeasurement2014}, 
which however require exponentially many ancillas to arbitrarily approach probability $1$.
Here, we focus on boosting protocols that use entangled states as resource and reach probabilities arbitrarily close to $1$ with a linear number of attempts \cite{limRepeatUntilSuccessLinearOptics2005, lee_nearly_2015}.
Known methods apply only to $X$-fusion~\cite{lee_nearly_2015} and $Y$-fusion~\cite{limRepeatUntilSuccessLinearOptics2005}.
We give a formal graphical proof of their correctness and we generalise the protocol of~\cite{limRepeatUntilSuccessLinearOptics2005} to arbitrary fusions with green failure.

Let us consider the circuit of a non-destructive fusion measurement with green failure, parametrised by three phases $\theta_1$, $\theta_2$ and $\theta_3$.
\begin{equation}\label{fusion-green}
  \tikzfig{fusion-green}
\end{equation}
We define the \emph{passive fusion module}, implementing arbitrary correctable fusions with green failure, as the following stream:
\begin{equation}\label{eq:fusion-module}
  \tikzfig{RUS/FusionModule}
\end{equation}
This module produces two streams of classical outputs: the success values $\cvar{s_t}$ and the Pauli byproducts $\cvar{k_t}, \cvar{j_t}$.
We are now ready to state our results about repeat-until-success protocols. The proofs are given in \cref{app:RUS}.

\begin{definition}[RUS protocol]\label{def:RUS-protocol}
  The repeat-until-success fusion protocol is defined by the following setup:
  \[
    \tikzfig{repeat-until-success}
  \]
  Note that the control parameter of the switch takes the value $s_{t-1}$ of the previous success outcome.
\end{definition}

\begin{restatable}{theorem}{RUS}\label{thm:RUS}
  Any fusion with green failure can be boosted with a repeat-until-success protocol. More precisely, the following holds for $n \geq 1$:
  \[
  \scalebox{.8}{\tikzfig{repeat-boosted}}
  \]
  where $T$ is the time of the first successful fusion (if it exists) and:
  \[
    \cvar{\Sigma_t} \coloneqq \sum_{i = 0}^t \cvar{s_t}
    \quad \qquad
    c_{t} = c_{t - 1} \oplus (\neg \cvar{\Sigma_t}) k_t \oplus \cvar{\Sigma_{t - 1}} a_t
    \quad \qquad
    d_{t} = d_{t - 1} \oplus (\neg \cvar{\Sigma_t}) (\neg k_t) \oplus \cvar{\Sigma_{t - 1}} b_t
  \]
  with $s_{0} = 0$, $c_{0} = d_{0} = 1$.
\end{restatable}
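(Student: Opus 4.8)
The plan is to prove the statement by induction on $n$ using the unrolling rules of \cref{sec:stream}, in particular \cref{rewrite:unroll-stream} and \cref{rewrite-sliding}, together with the characterization of the passive fusion module from \cref{eq:fusion-module} and the green-failure structure from \cref{fusion-green}. The key observation is that the RUS protocol of \cref{def:RUS-protocol} is a stream whose behaviour at each time-step is controlled by the accumulated success flag $\cvar{\Sigma_{t-1}}$: before the first success ($\cvar{\Sigma_{t-1}} = 0$) the switch routes the qubits through a fresh attempt of the passive fusion module on new resource states, and after the first success ($\cvar{\Sigma_{t-1}} = 1$) the switch simply passes the already-fused output through, idling on the remaining resource states. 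The two branches of this switch are exactly what makes the $(\neg\cvar{\Sigma_t})$ and $\cvar{\Sigma_{t-1}}$ coefficients appear in the recurrences for $c_t$ and $d_t$.

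First I would establish the base case $n=1$ by directly unrolling once: $\mathtt{unroll}_1$ of the RUS stream reduces, via the fusion module equation and the green-failure form \cref{fusion-green}, to a single passive fusion applied to the input qubits and the first resource state, with outcome-dependent Pauli corrections $k_1$ on success and the green-failure byproducts on failure; checking that this matches the claimed right-hand side with $c_0 = d_0 = 1$, $s_0 = 0$ and $T$ set to $1$ or undefined is a routine bookkeeping of the Pauli byproducts through the characterization of \cref{thm-characterization}. For the inductive step, I would apply \cref{rewrite:unroll-stream} to peel off the time-$0$ layer, obtaining the composite of the first attempt with $\mathtt{unroll}_{n-1}$ of the (shifted) RUS stream, and then split into the two cases $\cvar{s_0}=1$ and $\cvar{s_0}=0$. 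In the success case, the first attempt already produces the fused output, so the switch idles for the remaining $n-1$ steps and — using \cref{rewrite-sliding} to slide the idle resource states and the trivial isometries past the discards — the remaining unrolling contributes nothing but identity, giving $T = 1$ and the corrections frozen at their time-$1$ values; this must be matched against the closed form with $\cvar{\Sigma_t}=1$ for all $t\ge 1$. In the failure case, the green failure leaves the two input qubits (now carrying Pauli byproducts $a_1\pi$, $b_1\pi$ on the appropriate rails by \cref{fusion-green}) available for a fresh attempt, and the inductive hypothesis applies to the remaining $n-1$ steps with updated $c_0' = c_0 \oplus b_1$-type initial data; the slogan is that the failure byproducts get absorbed into the accumulators via the $\cvar{\Sigma_{t-1}} a_t$ and $\cvar{\Sigma_{t-1}} b_t$ terms, while a success later on contributes the $(\neg\cvar{\Sigma_t}) k_t$ term exactly once.

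The main obstacle I expect is the careful propagation of Pauli byproducts across the switch and the delay loop: one must track which rail each $X$ or $Z$ correction lands on as the failure outcome pushes the green byproducts onto the input qubits, and verify that commuting these past the next passive fusion module (which is itself a ZX diagram with its own spider structure) produces precisely the XOR recurrences $c_t = c_{t-1} \oplus (\neg\cvar{\Sigma_t})k_t \oplus \cvar{\Sigma_{t-1}}a_t$ and the analogous one for $d_t$. This is where \cref{rewrite-sliding} does the heavy lifting — it lets the Pauli sequences slide along the memory wire between time-steps — but one has to be scrupulous that the isometry condition is met (so that no stray scalar other than $1$ slides between time-steps, which is what guarantees each finite execution has a well-defined probability, cf.\ the remark after \cref{rewrite-sliding}). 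Once the byproduct recurrences are verified in both branches and the time $T$ of first success is correctly identified with the step at which $\cvar{\Sigma}$ first becomes $1$, the inductive step closes and the theorem follows.
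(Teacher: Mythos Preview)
Your overall strategy---induction on $n$, peeling off one time-step via \cref{rewrite:unroll-stream}, and splitting on the first success outcome---is exactly the paper's approach. However, you have the roles of the accumulator terms reversed, and this leads to a genuine gap.

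You write that in the success case ``the remaining unrolling contributes nothing but identity, giving $T=1$ and the corrections frozen at their time-$1$ values.'' This is not correct: after the first success the switch idles the \emph{target} qubits, but the resource-state generator continues to emit entangled photons at every subsequent time-step, and these must be measured out. Those post-success measurements are precisely the source of the variables $a_t, b_t$ in the recurrences, which is why they carry the coefficient $\cvar{\Sigma_{t-1}}$ (active only \emph{after} the first success). The paper isolates this contribution in a separate lemma showing that $n$ rounds of emitter measurement, after coarse-graining, contribute a single heralded Pauli byproduct on the memory qubit; this lemma is what makes the success branch of the induction close, and your proposal does not account for it.

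Correspondingly, you attribute the ``failure byproducts'' to the $\cvar{\Sigma_{t-1}} a_t$ and $\cvar{\Sigma_{t-1}} b_t$ terms, but those coefficients vanish during the failure phase. The failure-phase byproducts are what feed the $(\neg\cvar{\Sigma_t}) k_t$ and $(\neg\cvar{\Sigma_t})(\neg k_t)$ terms: before the first success, each green-failure outcome deposits its Pauli errors into $c_t, d_t$. Once you straighten out which variables live in which phase, and supply the missing lemma for the post-success emitter measurements, your inductive scheme goes through as the paper's does.
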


As a consequence, the probability of success of a repeat-until-success fusion protocol after $n$ time-steps is $1 - \frac{1}{2^{n}}$.
We recover the RUS protocol of~\cite{limRepeatUntilSuccessLinearOptics2005} as the special case for $Y$-fusion.

\begin{restatable}[Y fusion RUS]{corollary}{YRUS}
  For $n \geq 1$ we have:
  \[
    \tikzfig{repeat-Y}
  \]
  where $z_t = (k_T \oplus j_T) \oplus c_t$ and $y_t = (k_T \oplus j_T) \oplus d_t$ if $T < t$ and $y_t = (k_T \oplus j_T) \oplus \neg c_t$ if $T=t$.
\end{restatable}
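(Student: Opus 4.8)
The plan is to obtain this corollary by specialising the general repeat-until-success result of \cref{thm:RUS} to a $Y$-fusion. Since $Y$-fusion has green failure, it lies within the scope of that theorem, so the RUS protocol of \cref{def:RUS-protocol} applies directly. Recall from \cref{ex:Y-fusion} that $Y$-fusion, taken with Clifford parameter $c = 1$, is the symmetric green fusion whose success branch adds a Hadamard edge between its two input qubits up to Pauli byproducts, and whose failure branch preserves the green connection. First I would pin down the three phases $\theta_1, \theta_2, \theta_3$ in the green-failure circuit \cref{fusion-green} so that the passive fusion module of \cref{eq:fusion-module} realises exactly this $Y$-fusion; this is a direct $\mathbf{ZXLO}$ computation using the Euler decomposition \cref{eq:euler} together with the characterisation \cref{thm-characterization}, and it identifies the success-case byproduct of the module with $\cvar{k_T} \oplus \cvar{j_T}$ and the failure-case byproducts with the variables $\cvar{a_t}, \cvar{b_t}$ appearing in \cref{thm:RUS}.

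Next I would substitute these data into the right-hand side of \cref{thm:RUS}. That equation already exhibits the protocol as: repeat the passive fusion until the first success at time $T$, while accumulating $X$- and $Z$-type corrections $c_t, d_t$ given by the recursions in the theorem statement with $\cvar{s_0} = 0$ and $c_0 = d_0 = 1$. For a $Y$-fusion the single-qubit measurement in the success gadget is a Pauli $Y$ measurement, so spider fusion collapses the two-node success gadget to the Hadamard-edge form of \cref{ex:Y-fusion}; the accumulated byproducts $c_t, d_t$ then become Pauli corrections on the two output rails, and the extra $\cvar{k_T} \oplus \cvar{j_T}$ appears precisely because the Hadamard-edge gadget equals the target $Y$-fusion only up to that Pauli. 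For any time step $t > T$ the result is $z_t = (\cvar{k_T} \oplus \cvar{j_T}) \oplus c_t$ and $y_t = (\cvar{k_T} \oplus \cvar{j_T}) \oplus d_t$, while for $t < T$ no fusion has yet succeeded and both rails carry only the green-failure byproducts, exactly as in \cref{thm:RUS} before specialisation.

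I expect the one genuinely delicate point to be the time step $t = T$ at which the fusion first succeeds, where the corollary records $y_t = (\cvar{k_T} \oplus \cvar{j_T}) \oplus \neg c_t$ rather than $\oplus\, d_t$. This asymmetry is forced by the Clifford parameter $c = 1$ of $Y$-fusion: at the success step the Hadamard on the freshly added edge conjugates the accumulated $X$-type correction into a $Z$-type one and pushes a $\pi$ phase through the Hadamard, which produces the negation, whereas on subsequent steps the correction is merely transported along the delay line by the sliding rule \cref{rewrite-sliding} with no further conjugation. So the main work is to propagate the accumulated Pauli string through the Hadamard edge and the delay exactly once at $t = T$, using \cref{rewrite-sliding} together with the ZX $\pi$-commutation rule, and to check that the outcome matches the recursions for $c_t, d_t$ inherited from \cref{thm:RUS}; the rest is routine bookkeeping.
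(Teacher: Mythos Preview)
Your approach is essentially the same as the paper's: specialise \cref{thm:RUS} to the $Y$-fusion parameters and then rewrite the success gadget in ZX to the Hadamard-edge form, extracting the Pauli byproducts. The paper's proof is just ``Follows from \cref{thm:RUS}'' together with a short ZX derivation that performs exactly the collapse and Pauli-propagation you describe, with the identification $x_T = k_T \oplus j_T$; your account of the $t = T$ negation via the Hadamard conjugation is the right mechanism underlying that diagram.
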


It was shown in~\cite{lee_nearly_2015} that $X$-fusion, corresponding to a Bell measurement in the success case,
can be boosted with entangled resource states.
As the $X$-fusion is an idempotent projection in the success case, and it has a correctable failure, 
we can boost it simply by repeating it on a GHZ resource state.
 
\begin{restatable}[Boosted X Fusion]{proposition}{XBoosted}
  For $n \geq 1$ we have:
  \[
    \tikzfig{boosted-X}
  \]
  where $c_{-1} = d_{-1} = 0$ and
  \[
    \cvar{c_{n + 1}} =
    \begin{cases}
      \cvar{c_{n}} & \text{ if } s_{n + 1} = 1 \\
      \cvar{c_{n}} + \cvar{k_{n + 1}} & \text{ if } s_{n + 1} = 0
    \end{cases}
    \qquad \qquad
    \cvar{d_{n + 1}} =
    \begin{cases}
      \cvar{d_{n}} & \text{ if } s_{n + 1} = 1 \\
      \cvar{d_{n}} + \neg \cvar{k_{n + 1}} & \text{ if } s_{n + 1} = 0
    \end{cases}
  \]
\end{restatable}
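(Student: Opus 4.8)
The plan is to prove the Boosted X-Fusion statement by induction on $n$, using the unrolling machinery of \cref{sec:stream} — in particular \cref{rewrite:unroll-stream}, \cref{rewrite:followed-by}, and the stream-addition rule — just as is done for \cref{thm:RUS}. The setup is the RUS protocol of \cref{def:RUS-protocol} instantiated with $X$-fusion on a GHZ resource state: at each time-step the passive fusion module attempts an $X$-fusion (which in the success case is an idempotent projector onto the diagonal, by \cref{prop:characterization} / the Type II $=$ $X$-fusion example, and in the failure case is green, hence merely a pair of $Z$-rotations on the retained legs). The switch fed by $s_{t-1}$ routes either a fresh GHZ leg (if the previous attempt failed, so we retry) or short-circuits (if we already succeeded). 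The claim to establish is that $\mathtt{unroll}_n$ of this stream equals a single successful $X$-fusion, post-composed with the Pauli corrections $X^{c_n} \otimes X^{d_n}$ (or whatever the diagram on the RHS specifies), with the recursions $c_{n+1}, d_{n+1}$ tracking the accumulated $Z$-rotation byproducts from the failed rounds plus the Pauli error $k_{n+1}$ of the successful round.

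\emph{First} I would handle the base case $n=1$: unroll once, split into the $s_1 = 1$ and $s_1 = 0$ branches using the mixed-sum structure, observe that on success one gets exactly an $X$-fusion with Pauli error $k_1$ (matching $c_1 = c_0 + k_1 = k_1$ since $c_{-1} = d_{-1} = 0$ and $c_0$ is whatever the $n=0$ convention forces), and on failure one gets the green-failure output feeding into the next (absent) round — here the $n=1$ RHS should already absorb this via the case split in the stated $c_{n+1}, d_{n+1}$. \emph{Then} for the inductive step I would unroll once more, apply \cref{rewrite:unroll-stream} to expose $\mathtt{unroll}_{n-1}$ of the future stream, split the time-$0$ fusion into its success/failure branches, and in each branch invoke the induction hypothesis on the remaining $n-1$ steps. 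In the success branch at time $0$ the idempotence of the $X$-fusion projector is what lets the subsequent (successful, by IH) fusion collapse to a single $X$-fusion — this is the key algebraic fact that makes boosting by mere repetition work, and it is exactly why this protocol is simpler than the general RUS of \cref{thm:RUS}. In the failure branch, the green failure pushes $Z$-rotations onto the GHZ legs, which I then slide through the GHZ spider (a green spider) using \cref{rewrite-sliding} and ZX spider fusion, converting them into the accumulated corrections recorded by the recursion; the Pauli byproducts $k$ of the eventual successful round get propagated to the output in the standard way.

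\emph{The main obstacle} I anticipate is the bookkeeping in the failure-branch case: one must carefully track how the green-failure $Z$-rotations on the two retained dual-rail legs interact with the GHZ spider and with the Pauli $X$ corrections coming out of the inductive hypothesis, since $Z$ and $X$ corrections do not commute and the colour of the spider matters for which ones copy through freely. Getting the precise form of the $c_{n+1}, d_{n+1}$ recursion — in particular the asymmetry between the $k_{n+1}$ contribution to $c$ and the $\neg k_{n+1}$ contribution to $d$ — to drop out of the diagrammatic rewrite, rather than being imposed by hand, is where the real care is needed. A secondary subtlety is confirming that the idempotence rewrite in the success branch is valid \emph{as a ZX equality up to the scalars being tracked}, so that the success probability $1 - 2^{-n}$ comes out correctly; since the passive fusion module carries its scalars and \cref{rewrite-sliding} only slides the scalar $1$, the probability accounting should be automatic, but I would double-check that the two stars cancelled by the resource-state scalars (as in \cref{ex:XYpattern}) are accounted for at each round. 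Once these points are pinned down, the whole argument is a one-diagram induction in the spirit of \cref{thm:honeycomb} and \cref{thm:RUS}.
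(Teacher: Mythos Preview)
Your overall proof strategy --- induction on $n$ via the unrolling rules, case-splitting each fusion into success and failure, and pushing Pauli byproducts through by ZX rewrites --- is correct and matches the paper's (very terse, purely diagrammatic) proof. You have also correctly identified the central algebraic fact: the idempotence of the $X$-fusion success branch is what makes repetition boost the success probability.

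However, you have misidentified the protocol. The Boosted $X$-Fusion stream is \emph{not} the RUS protocol of \cref{def:RUS-protocol} instantiated with $X$-fusion: there is no switch, no routing on $s_{t-1}$, and no feedforward at all. The paper says this explicitly immediately after the statement (``the above stream requires no feed forward''), and it is precisely the point of the proposition --- because the success branch is idempotent, you can simply fire an $X$-fusion at every time step against a fresh GHZ leg, regardless of whether a previous attempt succeeded. Multiple successes collapse to one; failures just deposit green Pauli byproducts. Your description of the switch ``short-circuiting'' after success is therefore wrong, and if you set up your induction around that structure the case analysis will not match the diagram you are meant to prove.

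Once you drop the switch, the induction becomes cleaner than you anticipate: at step $n+1$ you unroll once, split the new $X$-fusion into its two branches, and in \emph{both} branches apply the induction hypothesis to the first $n$ steps (there is no ``already succeeded so do nothing'' case). In the $s_{n+1}=1$ branch, idempotence with the success term from the IH gives back a single $X$-fusion with unchanged $c_n, d_n$; in the $s_{n+1}=0$ branch, the green-failure phases fuse into the GHZ spider and emerge as $k_{n+1}\pi$ and $\neg k_{n+1}\pi$ on the two output legs, giving the stated recursion. The asymmetry $k$ versus $\neg k$ that worried you comes directly from the two failure effects of Type I fusion (cf.\ \cref{prop:type-I}: the two legs receive complementary bits), not from any commutation subtlety. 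The sliding you need is ordinary ZX spider fusion and $\pi$-copy, not the stream sliding rule \cref{rewrite-sliding}.
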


This has practical advantages compared to the RUS protocol in \cref{def:RUS-protocol}, as the above stream requires no feed forward.

\begin{corollary}\label{prop-boostedX}
  For $n \geq 1$ we have:
  \[
    \tikzfig{boosted-X-nooutput}
  \]
  where $e_n = c_n + d_n$.
\end{corollary}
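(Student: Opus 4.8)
The plan is to obtain this as an immediate corollary of the preceding proposition on boosted $X$-fusion, by post-composing both sides with a single terminating effect. First I would recall that the proposition's equality of streams holds on the nose in $\mathtt{Channel}(\mathbf{ZXLO})$ — including scalars, which are pinned down by the initial conditions $c_{-1}=d_{-1}=0$ — so that its right-hand side is an $X$-fusion on the two inputs carrying the two accumulated Pauli byproducts $X^{c_n}$ and $Z^{d_n}$ on the residual (fused) output wire, while its left-hand side is the $n$-fold unrolling of the repeat-on-GHZ protocol. I would then cap that residual output wire with the single-qubit effect that completes the \enquote{fuse two into one} $X$-fusion to a fully destructive two-qubit measurement, and apply the same cap to both sides; on the left this yields exactly the \enquote{no output} boosted $X$-fusion in the statement.

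Next I would propagate the two correction gates through the cap. Using the $\pi$-commutation and copy rules of the ZX calculus, both $X^{c_n}$ and $Z^{d_n}$ flip the coarse-grained classical outcome of the capping effect (so the relevant effect is a $Y$-basis one), and hence their composite flips it precisely when $c_n \oplus d_n = 1$; absorbing the two corrections then amounts to relabelling that outcome by the single bit $e_n \coloneqq c_n + d_n$. To finish I would check, using \cref{eq:coarse-graining-rewrite}, that this relabelling neither creates nor destroys a $\star$, so that the success probability $1 - 2^{-n}$ inherited from the proposition is unchanged, and that the new annotation $e_n = c_n + d_n$ is causally compatible with the unrolled diagram in the sense of \cref{def:annotations}.

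I expect the main obstacle to be purely bookkeeping: tracking which intermediate success and measurement variables feed into $c_n$ and $d_n$ through the recursions of the proposition, and verifying that collapsing two corrections into one is consistent with causal compatibility and with the scalar count. No fresh inductive argument should be needed, since the substantive claim — that the repeat-on-GHZ stream implements a boosted $X$-fusion — has already been proved by induction over the unrolling in \cref{sec:stream}; this corollary sits on top of that proposition as a post-composition plus a single local ZX rewrite.
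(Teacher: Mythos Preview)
Your approach is correct and matches the paper's treatment: the corollary is stated without proof, as an immediate consequence of the Boosted $X$-Fusion proposition, and your plan---post-compose both sides with the capping effect on the residual wire, then push the two accumulated Pauli byproducts through it so they collapse into the single bit $e_n = c_n + d_n$---is exactly the intended one-line derivation. Your attention to the scalar bookkeeping and causal compatibility of the new coarse-graining annotation is appropriate, and you are right that no new induction is needed since the proposition already carries the inductive weight.
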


\subsection{Lattice-based architecture}

Lattice-based architectures for universal quantum computing~\cite{gimeno-segoviaThreePhotonGreenbergerHorneZeilingerStates2015, bartolucciFusionbasedQuantumComputation2023} are structured as follows:
\begin{enumerate}
  \item a resource state generator (see \cref{subsec:RSG}) is used to construct a small constant-size entangled state at every time-step,
  \item the photons undergo a bounded-depth circuit built from linear optics and delays,
  \item photons from different time-bins are either fused or directly measured.
\end{enumerate}
To illustrate these architectures we focus on a minimal example involving a \emph{triangle resource state} and $\XZm$ measurements, inspired by~\cite{mhalla_graph_2012}.
Other examples of universal lattice-based architectures can be obtained by replicating the argument below for the streams in \cref{fig:lattices}.
We define our triangle architecture as the following protocol:
\begin{equation}\label{eq:photonic-architecture}
  \tikzfig{figures/triangle-architecture}
\end{equation}
where the delays are initialised with $\ket{+}$ states and the delay labelled $d$ is the $d$-fold composition of such delays (outputting $\ket{+}$ for the first $d$ time steps).
Using \cref{prop-boostedX} and post-selecting on fusion successes, we obtain a diagram in $\mathtt{Stream}(\mathtt{Channel}(\bf{ZX}))$ which 
we can simplify as follows:
\[
  \scalebox{0.73}{\tikzfig{figures/triangle-architecture-rewrite-new}}
\]
Then, we show that:
\[
  \scalebox{0.73}{\tikzfig{figures/triangle-architecture-rewrite-new-support}}
\]
where $(\ast)$ uses \cref{rewrite-sliding} and the fact that the delay is initialised with a $\ket{+}$ state satisfying $X\ket{+} = \ket{+}$.
Applying the above equality recursively, similarly to \cref{example:inf-cnot}, we obtain the observational equality:
\[
  \scalebox{0.8}{\tikzfig{figures/triangle-architecture-rewrite-new-support-lem}}
\]
where $x_{t} \coloneqq 0$ for $t < 0$ so that the sums are over finitely many previous measurement outcomes.
Putting this all together, we need to apply $\X$ correction and $\Z$ correction summing the above.
Moreover, applying \cref{def:unrolling} to the resulting stream builds a triangular lattice where each node is measured in the $\XZm$ plane, see \cref{fig:lattices} and \cref{thm:honeycomb}.
As shown in~\cite{mhalla_graph_2012}, the triangular lattice with $\XZm$ measurements is universal for qubit circuits.
Therefore, we see that the protocol in \cref{eq:photonic-architecture} simulates a universal architecture with decreasing probability $(1 - \frac{1}{2^r})^{2t}$, in the sense of \cref{def:stream-simulation}.
This is however not sufficient to prove strong universality of the protocol, in the sense of \cref{def:universality}, since the parameter $r$ is constant in the setup.
Instead, we obtain a weak universality result, in the sense of \cref{def:weak-universality}, as follows.

\begin{theorem}\label{thm:photonic-universality}
  The architecture in \cref{eq:photonic-architecture} is weakly universal for qubit circuits generated by $\set{CZ, H, Z(\alpha)}$ of width $\frac{d - 1}{4}$ and depth bounded by a constant $k$.
\end{theorem}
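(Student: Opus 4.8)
The plan is to assemble the pieces already established earlier in the excerpt rather than re-derive the rewriting. The key observation is that the sequence of equalities preceding the theorem statement already shows, via \cref{prop-boostedX} and \cref{rewrite-sliding}, that after post-selecting on fusion successes the protocol of \cref{eq:photonic-architecture} is observationally equal — up to feedforward $\X$ and $\Z$ corrections computed from the outcome variables $x_t$, summed over finitely many previous time steps — to a stream whose $\mathtt{unroll}_n$ (by \cref{def:unrolling}) produces a triangular lattice graph state with every internal node measured in the $\XZm$ plane. So the first step is to invoke \cref{thm:honeycomb} and \cite{mhalla_graph_2012}[Theorem 17]: a faultless $4n \times 4k$ triangular lattice with $\XZm$ measurements is strongly universal for any width-$n$, depth-$k$ circuit generated by $\set{CZ, H, Z(\alpha)}$. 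Since the architecture has memory width $d$, producing a lattice of cylinder circumference proportional to $d$, choosing width parameter $\frac{d-1}{4}$ matches the $4n$ requirement; taking $\mathtt{unroll}$ long enough (at least $4k$ layers) realizes depth $k$.

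Second, I would track the probability. Each time-bin uses $2t$ boosted $X$-fusions (two fusions per layer for $t$ layers, or whatever the exact count in \cref{eq:photonic-architecture} is), and by \cref{prop-boostedX} each boosted fusion succeeds with probability $1 - \frac{1}{2^r}$ where $r$ is the fixed boosting depth built into the delay $d$. Hence the whole post-selected execution for depth $k$ succeeds with probability $(1 - \frac{1}{2^r})^{2t}$ for the relevant finite $t$ determined by $k$. Because $k$ is bounded by a constant, $t$ is bounded, so this probability is bounded below by some fixed $\epsilon' = (1-2^{-r})^{2t_{\max}} \in (0,1)$ independent of the particular circuit (only its width and the fixed depth bound). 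This is exactly the shape required by \cref{def:weak-universality}: there exists a single $\epsilon$ (namely $1 - \epsilon'$) such that every circuit in the family is implemented in some finite time $n$ with success probability exceeding $1 - \epsilon$.

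Third, I would note why strong universality (\cref{def:universality}) fails and weak universality is the correct statement: the boosting parameter $r$ is fixed by the physical delay length in the setup and cannot be driven to infinity, so the per-fusion failure probability $2^{-r}$ does not vanish; only a constant success probability is attainable, matching the discussion immediately before the theorem. Finally, I would spell out the feedforward: the $\X$ and $\Z$ corrections are the sums $\bigoplus_{j<i,\, i \in g(j)} \cvar{k_j}$ and $\bigoplus_{j<i,\, i \in \Odd(g(j))} \cvar{k_j}$ coming from the gflow on the triangular lattice (\cref{thm:flow-determinism}, and the analysis of \cref{example:inf-cnot} for how such sums propagate through delays), so the post-selected stream is deterministic and computes the target circuit. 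The main obstacle I anticipate is purely bookkeeping: reconciling the cylinder geometry of the unrolled lattice (with its wrap-around delay-of-$1$ connecting time step $t$ to $t+1$) with the planar $4n \times 4k$ lattice of \cite{mhalla_graph_2012}, which requires applying $\Zm$ measurements to "flatten" the cylinder exactly as in the proof of \cref{thm:honeycomb}; once that flattening is in place the universality and probability claims are immediate from the cited results.
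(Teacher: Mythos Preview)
Your proposal is correct and follows essentially the same approach as the paper's proof: invoke the preceding rewrite and \cite{mhalla_graph_2012} to get determinism on success for the triangular lattice with $\XZm$ measurements, then bound the total fusion-success probability from below by a constant depending only on $r$, $d$, and the depth bound $k$. The paper resolves your flagged bookkeeping by noting that $4dk$ time-steps suffice (one in every $d$ spent on $\Zm$ measurements to flatten the cylinder), giving $8dk$ fusions total and hence $\epsilon \geq 1 - (1 - 2^{-r})^{8dk}$.
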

\begin{proof}
  Note first that, according to~\cite{mhalla_graph_2012}, a triangular lattice of size $4 n \cdot 4k$ is required to compile a circuit of width $n$ and depth $k$.
  Thus, given a circuit as above, we need at most $4 d k$ time-steps to build a lattice of the correct size, using one time step every $d$ to flatten the lattice to the plane by applying $\Zm$ measurements.
  Moreover, by the rewrite above, all the Pauli byproducts can be corrected by classical feedforward and the execution of the protocol is deterministic on success.
  The lattice is obtained when all fusions succeed, with probability $(1 - \frac{1}{2^r})^{8dk}$. Therefore we can set $\epsilon \geq 1 - (1 - \frac{1}{2^r})^{8dk}$ to obtain a simulation of all the circuits in the set within probability $1 - \epsilon$.
\end{proof}

Stronger universality results may be obtained using percolation methods to avoid post-selecting on fusion successes.
In fact, the percolation threshold of the triangular lattice is $\frac{1}{2}$~\cite{malarz_site_2020}, which is reached when $r=2$.
While this ensures the existence of a path from the input to the output boundary of the lattice, it does not give us information about the size of this connected component and thus, 
given a circuit $C$, we cannot ensure that the connected component will be large enough to accommodate $C$.
Renormalisation ideas~\cite{herr_local_2018} may offer a way to show this rigorously, but we leave their formalisation in the ZX calculus to future work.

\begin{remark}\label{remark:XZfusion}
  The architecture of \cref{eq:photonic-architecture} can be simplified by exploiting the planar fusion measurements introduced in \cref{subsec:planar-fusions}.
Instead of performing
  two $X$-fusion measurements and an $\XZm$ single-qubit measurement, we could perform a single ternary $\XZm$-fusion measurement:
  \[
    \tikzfig{triangle-architecture-simplified}
  \]
  where the arrow denotes simulation according to \cref{def:stream-simulation} and we omitted the Pauli byproducts for simplicity.
  A RUS protocol for this ternary fusion can be obtained by a simple generalisation of \cref{def:RUS-protocol} although it requires more active switching compared to the static boosted $X$-fusion.
\end{remark}

\begin{remark}
  The architecture studied in this section relies on the deterministic generation of the triangle graph state. 
  For an all-photonic approach this could be achieved using, for example, switch networks \cite{bartolucciSwitchNetworksPhotonic2021}, although a convincing experimental demonstration has not been achieved to date.
  The other main source of noise in an all-photonic setup is photon loss, which can be heralded but will further impact the probability of successfully generating the lattice, and thus the 
  tolerance $\epsilon$ used in \cref{thm:photonic-universality}. The triangle resource state could also be constructed using other resource-state generation techniques such as quantum emitters.
\end{remark}

\subsection{Reprogrammable architecture}\label{subsec:emitter}

Quantum computing architectures based on quantum emitters~\cite{paesaniHighThresholdQuantumComputing2023,degliniastySpinOpticalQuantumComputing2024} 
can leverage variable-length resource states allowing constructive forms of universality where a given quantum computation is directly generated, rather than carved out of a lattice.
They are generally organised as follows:
\begin{enumerate}
  \item quantum emitters generate variable-length linear cluster states,
  \item photons from different quantum emitters and photon time-bins are routed into either fusions or single-qubit measurements.
\end{enumerate}
We now study a simple version of these architectures, based on a single quantum emitter, and show that it is universal for \emph{arbitrary}
qubit circuits. The protocol is built from a quantum emitter, the delay, photonic measurement and correction modules defined in \cref{subsec:LO-protocols},
and the repeat-until-success $\XYm$-fusion module of \cref{def:RUS-protocol}.
We define it as the following diagram:
\begin{equation}
  \label{architecture-1}
  \tikzfig{figures/architecture-1}
\end{equation}
By recursively applying \cref{def:unrolling}, we produce a diagram in $\bf{ZXLO}$ which is structured as follows:
\begin{itemize}
  \item the top part of the diagram consists of a variable GHZ-linear cluster produced by the emitter,
  \item the middle part is obtained by unrolling the delay module and equates to a permutation of the qubits,
  \item the bottom part is a sequence of fusions and single qubit measurements of arbitrary types.
\end{itemize}
As an example, the following is the success term of a possible unrolling of the architecture:
\[
  \tikzfig{figures/architecture-unrolling}
\]
To prove universality of the architecture, we show that it can be used to implement any XY-fusion network where the resource graph is a line.
\begin{definition}
  A linear XY-fusion pattern is an XY-fusion pattern where the $n$-qubit register is totally ordered $V = \set{1, \dots, n}$ and the entangling commands are restricted to be of the form $E_{i, i \plus 1}$.
  A linear XY-fusion network is an XY-fusion network where the resource graph is a disjoint union of lines.
\end{definition}
\begin{proposition}
  \label{prop:hamiltonian-path}
  For any labeled open graph $\mathcal{G}$ with flow, there is a linear XY-fusion network $\mathcal{F}$ with flow with the same target linear map $T(\mathcal{G}) = T(\mathcal{F})$.
\end{proposition}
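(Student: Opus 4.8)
The plan is to reduce the statement to \cref{thm:flow-simplified-graph} together with the flow-preserving fusion rewrites of \cref{prop:Xflow-preserving} and its $Y$-fusion analogue. By \cref{thm:flow-simplified-graph}, an XY-fusion network has static flow precisely when its target open graph has Pauli flow, so it suffices to exhibit a linear XY-fusion network $\mathcal{F}$ whose resource graph is a disjoint union of lines, whose target open graph $\mathcal{G_F}$ has the same target linear map as $\mathcal{G}$, and such that $\mathcal{G_F}$ has Pauli flow. Since $\mathcal{G}$ has flow it in particular has Pauli flow, so the last requirement follows as soon as $\mathcal{G_F}$ is obtained from $\mathcal{G}$ by rewrites that preserve the existence of Pauli flow.

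First I would build $\mathcal{F}$ from $\mathcal{G}$ by repeatedly applying the $X$-fusion rewrite of \cref{prop:Xflow-preserving}, read as a \emph{splitting} move: a vertex of $G$ together with its incident edges is replaced by a short path of copies joined pairwise by $X$-fusions, one copy inheriting the original measurement label while the remaining copies are measured in the $X$ basis, with the neighbours of the original vertex distributed among the copies. Iterating this brings every vertex of the resource graph down to degree at most two; output vertices, which cannot be endpoints of fusions, are handled first by appending a single auxiliary measured neighbour through which their extra connections are routed, leaving each output of degree one. A graph of maximum degree two is a disjoint union of paths and cycles, so I would then break each remaining cycle by replacing one of its edges with a $Y$-fusion via the $Y$-fusion proposition, which removes the edge from the resource graph at the cost of a Clifford parameter on the two endpoints, the latter absorbed into a relabelling of their measurements using the Clifford-parameter change equations. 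After these steps the resource graph is a disjoint union of lines, every fusion is an $X$- or $Y$-fusion, and all fusions act on non-output vertices, so $\mathcal{F}$ is a linear XY-fusion network. Each rewrite used preserves the target linear map (up to a global scalar, tracked by the fusion-network scalars) and the existence of Pauli flow, whence $T(\mathcal{F}) = T(\mathcal{G_F}) = T(\mathcal{G})$ and $\mathcal{G_F}$ has Pauli flow; by \cref{thm:flow-simplified-graph}, $\mathcal{F}$ has static flow.

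The main obstacle is the bookkeeping of measurement labels, Clifford parameters and the input/output interface through the splitting and cycle-breaking rewrites: one must check that the auxiliary vertices introduced for high-degree outputs leave the target linear map unchanged, that the order in which edges are cut never strands a Clifford parameter on a vertex that later needs another rewrite, and that the copies produced when splitting a vertex carry exactly the measurement data making the $X$-fusion success outcome reproduce the original vertex (with the convention, from the definition of target open graph, that unmeasured nodes count as $X$-measured). None of this is conceptually deep, but verifying it carefully is where the work lies; the Pauli-flow part is automatic once the rewrites are taken from \cref{prop:Xflow-preserving}, its $Y$-fusion analogue, and the standard flow-preserving graph rewrites.
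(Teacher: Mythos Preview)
Your approach is workable in outline but takes a substantially different route from the paper. The paper's argument uses \emph{only} $Y$-fusions together with a Hamiltonian-completion step: starting from $\mathcal{G}$, one inserts $Z$-measured vertices (a flow-preserving rewrite) until the resulting labelled open graph $G'$ admits a Hamiltonian path $L$; this path is taken as the single resource line, and every remaining edge of $G'$ becomes a $Y$-fusion. No $X$-fusions and no vertex splitting appear; the flow argument is then a direct appeal to the $Y$-fusion flow-preserving rewrite together with \cref{thm:flow-simplified-graph}.

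What each buys: the paper's construction is short and yields a single resource line, with one $Y$-fusion per non-path edge of $G'$. Your splitting-then-cycle-breaking construction may use fewer fusions overall and naturally produces several shorter lines, but requires tracking measurement labels and Clifford parameters through repeated $X$-fusion splits. Both strategies are legitimate; the paper simply takes the path of least bookkeeping.

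One concrete gap in your sketch: appending \emph{a single} auxiliary measured neighbour to an output $o$ and rerouting $o$'s edges through it does not preserve the target linear map. In a graph state every edge is a Hadamard edge, so a single interposed $X$-measured vertex leaves an unmatched Hadamard between $o$ and its former neighbours. The standard fix is to extend each output by a length-two identity wire (two extra $X$-measured vertices in series), after which the former output is a measured vertex and can be split freely. This is exactly the bookkeeping you flag as the main obstacle, but as written that particular step is wrong rather than merely tedious.
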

\begin{proof}
  We use exclusively the $Y$-fusion measurement which adds a hadamard edge between nodes.
  Given any labeled open graph $\mathcal{G} = (G, I, O, \lambda, \alpha)$ with flow, we may extend it to an equivalent labeled open graph $(G', I, O, \lambda, \alpha)$
  such that $G'$ has a Hamiltonian path by finding a Hamiltonian completion of $G$.
  The additional edges in the completion are constructed by introducing nodes measured in the Z-basis, an operation which preserves the existence of Pauli flow~\cite{mcelvanneyCompleteFlowpreservingRewrite2023}.
  Then we may construct a linear XY fusion network $(L, F)$ where $L \sub G'$ is the Hamiltonian path and $F$ is the set of all remaining edges.
  It is easy to check that this fusion network has the same target linear map as $G$.
\end{proof}
\begin{proposition}
  \label{prop:pattern-implementability}
  The protocol in (\ref{architecture-1}) has settings $\lambda, \alpha, \sigma, \rho, u$ that implement any runnable linear XY-fusion pattern, with probability arbitrarily close to $1$.
\end{proposition}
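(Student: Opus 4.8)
The plan is to give a constructive compilation argument: starting from an arbitrary runnable linear XY-fusion pattern $P$, exhibit explicit settings $\lambda, \alpha, \sigma, \rho, u$ for the protocol in \cref{architecture-1} whose unrolling reproduces $P$ with high probability. First I would unpack the structure of a linear XY-fusion pattern: it consists of a disjoint union of lines as resource graph, single-qubit measurements (in planes $\XYm$, $\XZm$, $\YZm$ or Pauli, with Clifford parameters), and a set $F$ of $X$- and $Y$-fusions between qubits on these lines. I would observe that, by inlining the definition of the fusion commands (\cref{thm-characterization}) and using the Clifford-parameter rewrites recorded just before \cref{def:flow}, every $\XYm$-plane measurement with its Clifford parameter can be absorbed into a single $\XYm$ measurement with shifted angle and extra $Z$-rotations pushed onto neighbours, so that without loss of generality the pattern uses $\XYm$-measurements together with $X$- and $Y$-fusions of the canonical forms in the ``Planar fusion'' table.

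Next I would set up the correspondence between the three structural layers of the unrolled architecture — emitter at the top, delay-permutation in the middle, fusions and single-qubit measurements at the bottom, as displayed in the ``architecture-unrolling'' figure — and the three ingredients of a linear XY-fusion network. The emitter with white box programmed to Hadamard (the ``linear cluster'' special case noted after \cref{def:emitter}) produces exactly a linear cluster state; choosing the per-step unitary $u_t$ lets us also realise the variable-length linear clusters needed to match a disjoint union of lines of prescribed lengths. The delay-and-router block, by \cref{lemma:permutations}, can realise an arbitrary permutation $\sigma$ in time-encoding, so we route the emitted photons so that any two photons we wish to fuse arrive at the same time-bin at the fusion module. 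The measurement module $M_{\lambda,\alpha}$ (\cref{eq:measurement-module}) and correction module (\cref{eq:correction-module}) then implement the single-qubit measurements with their feedforward corrections, and the RUS $\XYm$-fusion module (\cref{def:RUS-protocol}) implements each fusion in $F$. Concretely I would: (i) fix a Hamiltonian-path-style linear layout of $P$ (this is where \cref{prop:hamiltonian-path} is used, if $P$ is presented as a general pattern rather than already linear — but here $P$ is assumed linear, so this step is immediate); (ii) read off the emitter schedule and the permutation $\sigma$ that time-aligns the fusion pairs; (iii) set the measurement-module planes/angles $\lambda_t, \alpha_t$ and the correction-module control parameters from the feedforward annotations of $P$; (iv) invoke \cref{lemma-correct-measure} to rewrite the optical measurement/correction modules into the corresponding $\mathtt{Channel}(\mathbf{ZX})$ single-qubit measurement and Pauli correction, reducing the whole unrolled diagram to a $\mathtt{Stream}(\mathtt{Channel}(\mathbf{ZX}))$ diagram whose success branch is an MBQC-form diagram with exactly the underlying topology, planes and angles of $P$.

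The probability bookkeeping then comes from \cref{thm:RUS}: each RUS $\XYm$-fusion succeeds within $n_f$ attempts with probability $1 - 2^{-n_f}$, and on success contributes the intended green $\XYm$-fusion together with correctable Pauli byproducts whose corrections have already been wired in via the correction modules; since $P$ is runnable, these feedforward dependencies respect causality, so the composite is a well-formed quantum protocol. Taking all fusions to run long enough, the overall success probability is $\prod_f (1 - 2^{-n_f})$, which can be pushed above $1 - \epsilon$ for any prescribed $\epsilon$ by choosing the $n_f$ large. I expect the main obstacle to be the routing-and-timing argument: verifying that a single delay line plus binary routers can schedule all the fusion pairs of an arbitrary linear pattern into coincident time-bins without collisions — i.e. that the permutation produced by \cref{lemma:permutations} can be chosen to simultaneously satisfy all pairing constraints — and that the feedforward corrections, which may now span many time-steps because of the interleaving, still form a causally-compatible annotation in the sense of \cref{def:annotations}. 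This is essentially a scheduling lemma; I would handle it by processing the line of the pattern left-to-right, emitting each qubit in order and using the delay to ``park'' the earlier partner of each fusion until its later partner is emitted, which keeps every correction dependency pointing backwards in unrolling time and hence causally compatible.
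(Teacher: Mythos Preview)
Your high-level decomposition (emitter $\to$ routing $\to$ fusion/measurement) matches the paper, and the probability bookkeeping via \cref{thm:RUS} is correct. But there is a genuine gap in the resource accounting that would make your construction fail.

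You have the emitter produce a linear cluster state matching the resource graph of the pattern --- one photon per logical qubit --- and then feed pairs into the RUS module. But the RUS fusion module of \cref{def:RUS-protocol} is a \emph{stream}: it consumes one fresh photon on each input wire per attempt. With a single photon per qubit you get exactly one fusion attempt, success probability $\tfrac12$, and no way to boost. The paper's proof fixes this by programming the emitter to produce a \emph{caterpillar} state rather than a bare linear cluster: for each logical qubit $i$ participating in $f_i$ fusions, the atom emits $r f_i + 1$ photons with $u_t = \mathrm{id}$ (a GHZ segment, all photons hanging off the same $Z$-spider), and only then applies $u_t = H$ to advance to qubit $i+1$ if the entangling command $E_{i,i+1}$ is present. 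The $r$ extra photons per fusion endpoint are exactly the stream that the RUS module consumes over $r$ rounds, giving success probability $(1-2^{-r})$ per fusion and $(1-2^{-r})^f$ overall; now $r$ can be chosen to beat any $\epsilon$. So the role of the variable $u_t$ is not merely to segment disjoint lines, as you say, but to alternate between GHZ growth (identity) and Hadamard steps --- this is the ``variable GHZ-linear'' case mentioned after \cref{def:emitter}.

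Two smaller remarks. First, your normalisation step (absorbing Clifford parameters into $\XYm$ measurements) is unnecessary here and not in the paper --- the measurement module (\ref{eq:measurement-module}) already handles all three planes directly. Second, the scheduling worry you flag at the end is largely dissolved once you have the GHZ fans: with $r f_i + 1$ photons per qubit there is ample slack in the time-encoding, and the paper simply invokes \cref{lemma:permutations} for the routing without a separate collision argument.
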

\begin{proof}
  Fix a linear XY-fusion pattern.
  Let $f_i$ be the number of fusions applied to qubit $i$.
  The total number of fusion operations is $f = \frac{1}{2} \sum_{i = 1}^n f_i$.
  For each qubit $i$, we emit $kf_i + 1$ photons entangled as a GHZ state with the atom, where $r$ is a positive integer.
  Between rounds, we either apply a hadamard gate on the atom if the command $E_{i, i \plus 1}$ is present,
  or else we emit an additional photon to be measured in the $X$ basis.
  By setting the parameters $\rho$ and $\sigma$ we may route these photons arbitrarily in either a RUS fusion or a single-qubit measurement, following \cref{lemma:permutations}.
  We use $r$ photons for each node in a RUS fusion operation, giving us a probability of success of $(1 - \frac{1}{2^r})$ for each of the $f$ fusions.
  If the RUS fusion fails after $k$ rounds we restart the whole computation.
  Finally, we can apply any sequence of single qubit measurements and corrections on the remaining $n$ photons, following \cref{lemma-correct-measure}.
  In order to achieve a total success probability $\epsilon$ close to $1$ we just have to set $r$ an odd integer such that $(1 - \frac{1}{2^r})^f > 1 - \epsilon$.
\end{proof}

Note that the RUS $X$ and $Y$ fusions defined above may induce additional $Z$ errors on the target qubits.
Even for $Y$-fusion, it is sufficient to set $n$ to be even (i.e.\@ repeat an odd number of times) to ensure that the error is Pauli.
Thus, in order to correct these errors in an XY-fusion pattern, we must add $Z$ corrections on the target qubit.
This is always possible with the factorisation given in \cref{thm:flow-pattern}, since fusion nodes precede their target qubits in the partial order.

\begin{theorem}
  \label{thm:universality}
  The protocol in \cref{architecture-1} is strongly universal for any qubit unitary.
\end{theorem}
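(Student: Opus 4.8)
The plan is to assemble the results of this section into a single chain. First I would fix a qubit unitary $U$ and a tolerance $\epsilon \in (0,1)$; since $\{CZ, H, Z(\alpha)\}$ is exactly universal, $U$ is the interpretation of a finite circuit, and by~\cite{backensThereBackAgain2021} any such circuit compiles to a labelled open graph $\mathcal{G}$ with gflow --- hence with Pauli flow --- whose target linear map is $U$. Next, by \cref{prop:hamiltonian-path} this $\mathcal{G}$ can be realised, after a Hamiltonian completion introducing $Z$-measured nodes, as a linear XY-fusion network $\mathcal{F}$ with flow and $T(\mathcal{F}) = T(\mathcal{G}) = U$; by \cref{thm:flow-simplified-graph} this flow is exactly a static flow on $\mathcal{F}$. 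Then \cref{thm:flow-pattern} turns $\mathcal{F}$ into a runnable linear XY-fusion pattern $P$ that is uniformly, strongly and stepwise deterministic on success, realises $U$ when all fusions succeed, and --- crucially --- is factorised so that every fusion command precedes every single-qubit measurement command, which is precisely the structure that unrolling the emitter protocol of \cref{architecture-1} produces.

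With $P$ in hand I would invoke \cref{prop:pattern-implementability}: there are settings $\lambda, \alpha, \sigma, \rho, u$ of the protocol in \cref{architecture-1} and a repetition count $r$ (taken odd) such that, for a finite unrolling time $n$ fixed by the photon budget of $P$, the $n$-th unrolling of the protocol implements $P$, hence $U$, on the event that all $f = \frac{1}{2}\sum_i f_i$ RUS fusions succeed, with probability $(1 - 2^{-r})^{f}$. Choosing $r$ with $(1-2^{-r})^{f} > 1-\epsilon$ then yields implementation of $U$ in time $n$ with probability $> 1-\epsilon$, which is exactly strong universality (\cref{def:universality}) since $n$, the settings and $r$ depend only on $U$ and $\epsilon$. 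One extra ingredient must be folded in: the RUS $X$- and $Y$-fusion modules of \cref{def:RUS-protocol} may leave residual $Z$ byproducts on their target qubits (the byproduct variables tracked by \cref{thm:RUS}); for odd $r$ these are Pauli and are absorbed by appending $Z$ corrections on the target, which is legitimate because in the factorisation of \cref{thm:flow-pattern} every fusion node precedes its target in the partial order, so the relevant control variable is available before the target is measured, and the correction and measurement modules of \cref{lemma-correct-measure} together with the permutation routing of \cref{lemma:permutations} realise it inside the static optical setup.

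The hard part is the bookkeeping at the interface between the abstract flow picture (\cref{thm:flow-pattern}, which records fusion-success byproducts symbolically) and the concrete protocol (\cref{prop:pattern-implementability}): one must check that \emph{all} feedforward demanded by the static flow of $\mathcal{F}$, together with the RUS $Z$-corrections, is expressible through the classical control parameters actually available in \cref{architecture-1}, and that no correction is ever required on a fusion node or an unmeasured node. This is where the assumption that corrections occur only before single-qubit measurements, and the fusions-first factorisation, do the real work; everything else is routine composition of the cited statements.
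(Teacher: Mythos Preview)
Your proposal is correct and follows essentially the same route as the paper: represent the unitary as a labelled open graph with flow, apply \cref{prop:hamiltonian-path} to obtain a linear XY-fusion network, turn it into a runnable linear XY-fusion pattern, and invoke \cref{prop:pattern-implementability}. You are more explicit than the paper about the intermediate use of \cref{thm:flow-simplified-graph} and \cref{thm:flow-pattern} to pass from network to runnable pattern, and about absorbing the RUS $Z$-byproducts via the fusions-first factorisation --- points the paper handles tersely or in the paragraph preceding the theorem --- but the underlying argument is the same.
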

\begin{proof}
  Given any qubit unitary, we may represent it as a labeled open graph $\mathcal{G}$ with flow. By \cref{prop:hamiltonian-path}, there is a linear XY fusion pattern $\mathcal{F}$
  with flow and the same target linear map.
  This gives rise to a runnable linear XY fusion pattern and the result follows by \cref{prop:pattern-implementability}.
\end{proof}

The proof of strong universality above crucially relies on a linear cluster state of length arbitrary.
In practice however, the length of the resource state is limited by the coherence time of the atom.
If $R$ is the rate of emission and $T$ is the maximum coherence time, we can have at most $N = RT$ photons in each resource state.
With this restriction, we cannot show strong universality, as this relied on the ability to boost the fusion success probability arbitrarily.
We thus obtain the following weak universality result, which can be directly compared with \cref{thm:photonic-universality}.

\begin{theorem}\label{thm:emitter-universality}
  The protocol in \cref{architecture-1}, with maximum resource state length $N$, is weakly universal for MBQC patterns with number of edges bounded by a constant $E$.
\end{theorem}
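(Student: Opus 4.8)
The proof follows the template of \cref{thm:universality}, with the extra bookkeeping needed to track how the resource-state length bound $N$ caps the success probability of the boosting subroutine. First I would start from an MBQC pattern whose underlying labelled open graph $\mathcal{G} = (G, I, O, \lambda, \alpha)$ has flow and at most $E$ edges; without loss of generality $G$ is connected, since isolated vertices are measured trivially and only contribute global phases, so $\mathcal{G}$ has at most $E + 1$ vertices. By \cref{prop:hamiltonian-path}, $\mathcal{G}$ admits a \emph{linear} XY-fusion network $\mathcal{F}$ with flow and the same target linear map $T(\mathcal{G}) = T(\mathcal{F})$: a Hamiltonian completion $G'$ of $G$ adds at most $|V(G)| - 1 \le E$ new edges, each realised by introducing a $Z$-measured node (preserving Pauli flow); the Hamiltonian path $L \subseteq G'$ becomes the chain of entangling commands, and the edges of $G'$ outside $L$ become $Y$-fusions. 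Hence $\mathcal{F}$ has a number of fusions $f$ and a resource graph whose total number of vertices are both bounded by a fixed affine function of $E$, say $f \le 2E$ and at most $4E + 1$ resource-graph vertices.

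Next I would invoke \cref{thm:flow-pattern}: since $\mathcal{F}$ has static flow, it defines a runnable linear XY-fusion \emph{pattern} — uniformly, strongly and stepwise deterministic on success, with all fusion commands preceding the single-qubit measurements — realising $T(\mathcal{F})$. The additional $Z$ byproducts that the RUS $X$- and $Y$-fusions place on their target qubits are correctable precisely because this factorisation puts every fusion node before its targets in the partial order, as noted just before the theorem; so the pattern is genuinely implementable by a deterministic-on-success list of commands.

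Then I would re-run the resource count of \cref{prop:pattern-implementability} under the constraint that the single emitter's GHZ-linear cluster has length at most $N$. Allocating $r$ photons per node involved in each RUS fusion, one photon per $X$-measured node, and a Hadamard on the atom between consecutive chain segments, the total number of photons emitted is bounded by an expression of the form $a\,r E + b E$ for absolute constants $a, b$ (coming from the $\le 2E$ fusions and the $O(E)$ resource-graph vertices). It therefore suffices to choose $r$ to be the largest \emph{odd} integer with $a\,r E + b E \le N$ — odd so that the $Y$-fusion byproducts remain Pauli, and depending only on $N$ and $E$ (the regime where $N$ is too small to make $r \ge 1$ renders the statement vacuous). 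Each of the $f \le 2E$ RUS fusions then succeeds with probability $1 - 2^{-r}$, and since the pattern is deterministic on success, the target computation is realised with probability at least $(1 - 2^{-r})^{2E}$. Setting $\epsilon \coloneqq 1 - (1 - 2^{-r})^{2E}$, a constant independent of which $\le E$-edge pattern was chosen, and applying \cref{prop:pattern-implementability} to the resulting runnable linear XY-fusion pattern, gives weak universality in the sense of \cref{def:weak-universality}.

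The main obstacle is the bookkeeping in the second and third steps: one must check that the Hamiltonian completion, the extra $Z$-measured nodes, and the chain structure all stay linear in $E$ and do not secretly reintroduce a dependence on circuit width or depth, and that it is the single global cluster produced by one emitter — not per-fusion resources — that the bound $N$ constrains, so that the affine estimate $a\,r E + b E$ is the correct one. A secondary subtlety, already flagged above, is verifying that once $r$ is fixed the additional $Z$ corrections from the RUS fusions still fit inside the partially static flow of \cref{thm:flow-pattern}, so that determinism on success is preserved.
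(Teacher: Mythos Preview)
Your high-level strategy matches the paper's: bound the number of RUS rounds $r$ available per fusion, invoke the machinery of \cref{prop:pattern-implementability}, and set $\epsilon$ from the resulting success probability $(1-2^{-r})^{f}$. The determinism-on-success reasoning via \cref{thm:flow-pattern} and the remark about $Z$ byproducts is also exactly what the paper relies on.

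The substantive divergence is in how you read ``maximum resource state length $N$''. You take $N$ to cap the \emph{total} length of the single cluster the emitter produces, which forces the affine bookkeeping $a\,rE + bE \le N$ and hence $r = O(N/E)$. The paper instead reads $N$ as a bound on each \emph{per-qubit} GHZ segment (cf.\ the sentence just before the theorem: ``at most $N = RT$ photons in each resource state''). Under that reading, every node of the target graph gets its own round of up to $N$ emissions, so one can repeat each fusion $N/m$ times where $m$ is the maximum vertex degree, and the paper's one-paragraph proof takes $\epsilon \ge 1 - (1 - 2^{-N/m})^{E}$ directly, with total photon count $NE$. Your constraint is strictly harsher (total $\le N$ implies each segment $\le N$), so your argument still yields a valid $\epsilon$; it is just more conservative than needed, and most of your careful Hamiltonian-completion and vertex-counting becomes unnecessary once $N$ is interpreted per node. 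The ``main obstacle'' you flag --- whether the global or the per-fusion resource is being capped --- is precisely this point, and the paper resolves it the other way.
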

\begin{proof}
  Given a set of MBQC patterns with number of edges bounded by a constant $E$.
  We follow the same procedure as in \cref{prop:pattern-implementability} but we now can attempt each fusion for only a bounded number of times.
  In the worst case scenario, each node in the graph state is implemented by a single round of $N$ photon emissions and we can thus repeat each fusion 
  at most $\frac{N}{m}$ times, where $m$ is the maximum vertex order of the MBQC patterns.
  The total number of photons for this worst-case scenario is $NE$, which is proportional to the total time required to implement any pattern in the set.
  Then we can set $\epsilon \geq 1 - (1 - \frac{1}{2^{\frac{N}{m}}})^E$ to get the desired approximation over all patterns in the set.
\end{proof}

The scaling obtained applies to a worst-case scenario and may be improved for optimized MBQC patterns by
using the linear cluster states to cover a larger part of the graph, thus reducing the number of fusions which are the main limiting factor.
Using multiple quantum emitters would allow to reduce delays and the total execution time.
The programmable nature of the setup in \cref{architecture-1} allows for these types of optimization which are not available in 
the setup of \cref{eq:photonic-architecture} that relies on a fixed compilation process onto the lattice.
Moreover, note that implementing an MBQC graph $G$ with the gate set of \cref{eq:photonic-architecture} may require a circuit with up to $O(E^3)$ gates~\cite{mhalla_graph_2012}.
This suggests that programmable setups with variable delays may offer a more feasible route to short-term photonic computing, 
although the additional number of components required could have an impact on their long-term scalability.

\begin{remark}
  The architecture studied in this section can be implemented using optical components currently available in advanced photonic labs. 
  In particular, as discussed in \cref{subsec:RSG}, the quantum emitter gives a high level of determinism to the resource state generation part. 
  The other sources of error will be photon loss and distinguishability. The first can be heralded but will impact the tolerance $\epsilon$ of \cref{thm:emitter-universality}. 
  The latter cannot be heralded and thus will have an impact on the fidelity of the resulting graph state. 
  Using a \emph{single} quantum emitter can however mitigate this latter source of error, at the cost of longer delay lines and thus photon loss.
\end{remark}

\section{Conclusion}

We have introduced a graphical framework for networked quantum computing that integrates linear optics, the ZX calculus, and dataflow programming. 
This framework connects photonic hardware models with formal methods from programming languages, 
supporting verification, optimization, and compilation in distributed quantum architectures.

At the \textbf{base layer}, we combined ZX diagrams with linear optical circuits to analyze qubit-photon interactions (\cref{sec:base}). 
We characterized all fusion measurements with failure outcomes in the $\XYm$ plane and whose success outcomes induce correctable Pauli errors, 
and identified the class of correctable measurements that give rise to useful entangling operations (\cref{sec:characterization}). 

At the \textbf{channel layer}, we extended the calculus to capture measurement, feedforward, and classical–quantum interaction (\cref{sec:channel}). 
Within this setting, we formally distinguished fusion patterns from fusion networks, and introduced the notion of partially static flow, 
showing that decompositions of open graphs with Pauli flow yield fusion networks with guaranteed correctability (\cref{sec:flow}). 
These results support recent work on fusion networks~\cite{zilkCompilerUniversalPhotonic2022,zhangOneQCompilationFramework2023,leeGraphtheoreticalOptimizationFusionbased2023} 
by providing the first systematic account of determinism in measurement-driven photonic protocols.

At the \textbf{stream layer}, we enriched the formalism with discrete-time dynamics to describe protocols acting on photon streams with routers, delays, switches, and time-controlled emitters (\cref{sec:stream}). 
This extension enables inductive proofs about dynamic quantum processes, including the correctness of new repeat-until-success protocols that boost the probability of success of planar fusion measurements (\cref{subsec:RUS}). 
Using these tools, we established universality results for photonic architectures (\cref{sec:universality}), 
showing how MBQC patterns can be compiled into concrete optical instructions without recourse to universal graph states.

Several avenues for future work naturally emerge from this work. 
Among them are the characterization of higher-arity fusion measurements, the development of Pauli-flow techniques for stream processes, and the use of diagrammatic methods to capture renormalization ideas \cite{herr_local_2018} in universality proofs.
On the practical side, optimization strategies that reduce fusion counts, time delays, and optical component depth will be crucial to bridging theory and experiment. 
More broadly, the framework establishes a foundation for systematic methods in distributed quantum computing, with applications ranging from distributed error correction to advanced quantum communication protocols.

\section*{Acknowledgements}

GdF would like to thank Mario Rom\'an for helping elucidate multiple aspects of the stream formalism.
We would like to thank Will Simmons for reviewing an earlier version of this manuscript.
We benefited from discussions with Ross Duncan, Pierre-Emmanuel Emeriau, Paul Hilaire, Dan Mills, Luc\'ia Tormo Ba\~nuelos, Razin A.\@ Shaikh, and Richie Yeung.

\bibliographystyle{plainnat}
\bibliography{preamble/references_doi}

\appendix

\section{Rewrite rules used in the paper}\label{sec:qpath}

This section contains a selection of rewrite rules used in the paper.

\paragraph{ZX calculus}
The axioms of the ZX calculus used in the paper are shown below.

\begin{figure}[H]
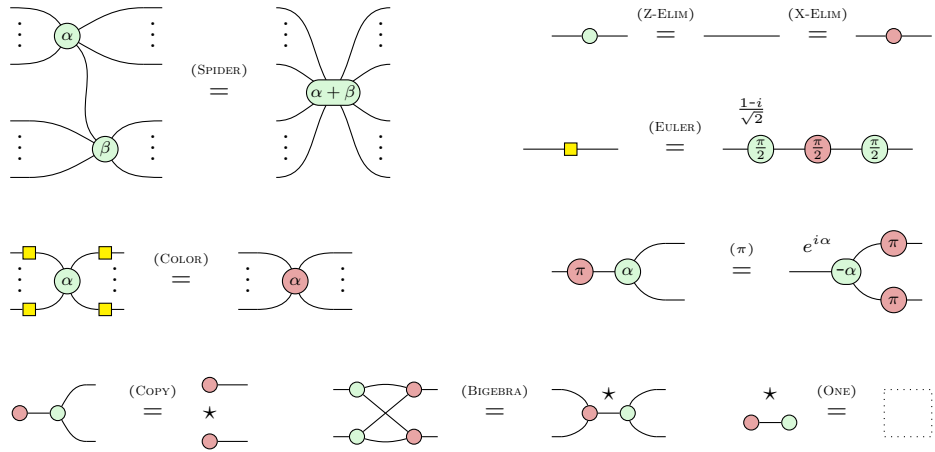

  \[
    \tikzfig{ZX/axioms}
  \]
  \caption{Axioms of the ZX calculus.}
  \label{fig:ZX-axioms}
\end{figure}

\paragraph{QPath}

We review the axioms of the QPath calculus~\cite{defeliceQuantumLinearOptics2023}, which are used in \cref{sec:mixedfusion}.
Diagrams in \textbf{QPath} are generated by:
\[
  \tikzfig{qpath/w-qpath}
  \qquad \qquad
  \tikzfig{qpath/wdag-qpath}
  \qquad \qquad
  \tikzfig{qpath/phase}
  \qquad \qquad
  \tikzfig{LOQC/nPhotonState}
  \qquad \qquad
  \tikzfig{LOQC/nPhotonMeasurement}
\]
for all $n \in \N$ and $c \in \C$.
The \textbf{QPath} calculus admits the following graphical rewrite rules.
Additionally, all rules hold under transposition of the linear maps, which is represented by horizontal reflection of the diagrams.
\begin{equation*}
    \scalebox{0.8}{\tikzfig{qpath/rules-qpath}}
\end{equation*}

\paragraph{Dual-rail}
The following rewrite rules relate to dual-rail encoding.

\encodeKet*

\encodeHad*

\encodeMeas*

\paragraph{Channel}
The following rewrite rules relate to the $\mathtt{Channel}$ construction.

\corseGraining*

\feedForward*

\paragraph{Stream}
The following rewrite rules relate to the $\mathtt{Stream}$ construction.

\unrollingStream*

\unrollingFuture*

\followedBy*

\slidingStream*

\steamAddition*

\section{Kraus decomposition of Type-I fusion}\label{sec:mixedfusion}

To prove \cref{prop:type-I}, we derive diagrams in \textbf{QPath} for each measurement outcome of the linear optical circuit implementing Type I fusion:
\begin{equation*}
    D^{\cvar{a},\cvar{b}}
    \quad\coloneqq\quad
    \tikzfig{RUS/rus0b}
\end{equation*}
Because the input to $D^{\cvar{a},\cvar{b}}$ is two dual rail qubits and hence at most two photons, and no photons are created in this process, we can restrict our attention to only measurement outcomes observing at most two photons.

From the definitions of the dual-rail encoding \cref{eq:dual-rail-encoding}, we can define the following decomposition:
\begin{equation}
  \label{eq:dual-rail-decomposition}
  \tikzfig{RUS/dualrail-decomp}
\end{equation}
Using this, we can now prove a set of lemmas for the different cases of $D^{\cvar{a},\cvar{b}}$.
First, $D^{\cvar{a}=0,\cvar{b}=0}$ evaluates to:
\begin{equation}
    \tikzfig{RUS/d00}
\end{equation}

\noindent Continuing on, we compute $D^{\cvar{a}=1,\cvar{b}=0}$:
\begin{equation}
    \tikzfig{RUS/d10}
\end{equation}
Similarly, $D^{\cvar{a}=0,\cvar{b}=1}$ computes to:
\begin{align}
    \tikzfig{RUS/d01-0}\quad
    &\tikzfig{RUS/d01-1}\\
    &\tikzfig{RUS/d01-2}\\
    &\tikzfig{RUS/d01-3}
\end{align}
We can show that $D^{\cvar{a}=1,\cvar{b}=1}$ evaluates to:
\begin{equation}
    \tikzfig{RUS/d11}
\end{equation}
This means that the probability of observing $D^{\cvar{a}=1,\cvar{b}=1}$ is zero which is due to the Hong-Ou-Mandel effect.

For the last two cases of $D^{\cvar{a},\cvar{b}}$, we use the following equality:
\begin{equation}
  \tikzfig{RUS/lem2}
\end{equation}
Proceeding with the calculations, we compute $D^{\cvar{a}=0,\cvar{b}=2}$ as follows:
\begin{equation}
    \tikzfig{RUS/d02}
\end{equation}
and finally, $D^{\cvar{a}=2,\cvar{b}=0}$ evaluates to:
\begin{equation}
    \tikzfig{RUS/d20}
\end{equation}

To realize $D^{\cvar{s},\cvar{k}}$, we determine the action of a classical function from measurement outcomes to the Boolean variable indicators of success ($\cvar{s}$) and correction ($\cvar{k}$):
\begin{center}
    \begin{tabular}{ll|ll} 
        \hline
        \cvar{a} & \cvar{b} & \cvar{s} & \cvar{k}\\
        \hline
        0 & 0 & 0 & 1\\
        1 & 0 & 1 & 0\\
        0 & 1 & 1 & 1\\
        2 & 0 & 0 & 0\\
        0 & 2 & 0 & 0\\
        \hline
    \end{tabular} 
\end{center}
For three of the five possible values of $(\cvar{s},\cvar{k})$, the original measurement outcomes $(\cvar{a},\cvar{b})$ can be identified.
For the case $(\cvar{s}=0,\cvar{k}=1)$, it happens that $D^{\cvar{a}=2,\cvar{b}=0} = D^{\cvar{a}=0,\cvar{b}=2}$ up to a global phase which is thereafter eliminated upon invoking the CPM construction.
Because of this, in mixed-state quantum mechanics we have
\begin{equation}
    D^{\cvar{s}=0,\cvar{k}=1} = D^{\cvar{a}=2,\cvar{b}=0} + D^{\cvar{a}=0,\cvar{b}=2} = 2 \,\, D^{\cvar{a}=2,\cvar{b}=0}
\end{equation}
Therefore, $D^{\cvar{s},\cvar{k}}$ is a non-destructive measurement that is a sum of four terms, one for each possible value of $(\cvar{s},\cvar{k})$.
By performing a mixed sum over all possible measurement outcomes of $D^{\cvar{a},\cvar{b}}$, and quotienting by $\cvar{s}$ and $\cvar{k}$, we obtain the proposition.

\typeOneFusionProp*
\begin{proof}
  \[
    \tikzfig{RUS/mixed-fusion-thm}
  \]
\end{proof}

\section{Proof of characterization}\label{app:fusion}

In section, we prove \cref{thm-characterization}, characterizing all fusion measurements with green failure and Pauli error.

\characterization*

We are interested in fusions with green failure satisfying the following equation:
\begin{equation}
    \tikzfig{ZX/PauliFusionCommutation}
\end{equation}
for some $w,x,y,z \in \{0,1\}$.

First note that in order for the $\cvar{j} \pi$ error to commute to the inputs as a Pauli error, 
$\phi$ must satisfy the commutation relation
\begin{equation}
    \tikzfig{ZX/PauliFusionCase34}
\end{equation}
for some $\cvar{p}, \cvar{q} \in \{0,1\}$. That is, $\phi$ must be a Clifford phase. 
Therefore, necessarily $\phi = v \frac{\pi}{2}$ for some $v \in \{0,1,2,3\}$.
We thus consider two cases: (i) $v$ is odd and (ii) $v$ is even.

If $v$ is even, we can take $\phi = 0$; the $\phi = \pi$ case is redundant because of the random error $\cvar{k} \pi$.
In this case, the $\cvar{j} \pi$ error induces $Z$ errors on both input qubits. 
In order to correct $\cvar{k} \pi$ error, one of $\alpha_1$ or $\alpha_2$ must be Clifford. Without loss of generality, 
we assume $\alpha_1$ is a Clifford phase, i.e. $\alpha_1 = d \frac{\pi}{2}$, and the result follows with $\lambda = \XZm$.
\[ 
  \tikzfig{ZX/PauliFusionCase1}
\]

If $v$ is odd, we can take $\phi = \frac{\pi}{2}$; the $\phi = -\frac{\pi}{2}$ case is redundant because of the random error $\cvar{k}$.
Then, the $\cvar{j} \pi$ error induces a $Y$ error when commuting with $\phi$. Since $Y = XZ$, this induces an $Z$ error 
on both input qubits and an $X$ error which merges with the $\cvar{k} \pi$ error. In order to correct the resulting 
$(\cvar{k} \oplus \cvar{j}) \pi$ error, either $\alpha_1$ or $\alpha_2$ must be a an integer multiple of $\frac{\pi}{2}$. 
Without loss of generality we can set $\alpha_1 = d \frac{\pi}{2}$, and the result follows with $\lambda = \YZm$.
\[ 
  \tikzfig{ZX/PauliFusionCase2}
\]

The third case, where $\lambda = \XYm$ is obtained when $\beta = \frac{\pi}{2} + \alpha$ for some angle $\alpha$ and 
$\phi = \frac{\pi}{2}$. This is technically an instance of the case where $v$ is odd, but we distinguish 
it from the other cases because the errors are different, as they are induced by a single-qubit measurement in the $\XYm$ plane. 
\[ 
  \tikzfig{ZX/PauliFusionCase3}
\]

\section{Probability of success of fusions with green failure}\label{subsec:fusion-prob}

Recall the circuit for fusion measurements with green failure, parametrised by three phases:
\begin{equation}
  \tikzfig{fusion-green}
\end{equation}
Recall from \cref{prop:type-I}, that the behaviour of Type I fusion measurements on dual-rail encoded qubits can be described by a sum of ZX diagrams.
A similar equation holds for the circuit defined above.
\begin{equation}\label{fusion-green-decomp}
  \tikzfig{fusion-prob-1}
\end{equation}
where $\cvar s = \cvar a \oplus \cvar b$ and $\cvar k = \cvar s \cvar b + \neg \cvar s (1 - \frac{\cvar a + \cvar b}{2})$.
The two diagrams above represent the action of fusion in case of success and failure, respectively.
The \emph{probability of success} for an input state $\Psi$ is obtained by taking the \emph{trace} of the success diagram, 
and discarding the classical output $\cvar{k}$, which corresponds to summing over its possible values.
\begin{equation*}
  \Pr(\cvar{s}=1\,\vert\Psi) = \; \tikzfig{fusion-prob-2} \; = \; \tikzfig{fusion-prob-2-1} \; = \; \tikzfig{fusion-prob-2-2}
\end{equation*}
Note that the probability will usually depend on the input state $\Psi$.
For example, we may engineer input states for which the fusion \enquote{always succeeds}.
\[
  \tikzfig{fusion-prob-fail}
\]
The calculation above uses dotted wires to represent ZX diagrams in the standard (rather than the CP) interpretation.
Replacing the first input with a green $\theta_1 + \pi$ spider we obtain an example where the fusion \enquote{always fails}.
If the input state is the completely mixed state on two qubits, the probability is found to be $\frac{1}{2}$ by the following derivation.
\[
  \tikzfig{fusion-prob-fully-mixed}
\]

By the purification theorem, a general mixed state $\Psi$ can be expressed in terms of a pure state $\psi$ on a larger space:
\[
  \tikzfig{purification}
\]
The dependence of the probability of success on the input can be avoided if we assume that the inputs are unmeasured qubits in a graph state.
Then the state $\psi$ has the following form:
\[
  \tikzfig{fusion-prob-3}
\]
where $\ket{G}$ is a graph state.
\begin{proposition}\label{prop:green2qhalf}
  When the inputs are unmeasured qubits in a graph state $\ket{G}$, 
  the success probability of any fusion with green failure is $\frac{1}{2}$.
\end{proposition}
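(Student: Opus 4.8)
The plan is to read off the success probability from the Kraus decomposition of the green fusion and reduce it, via the purification picture already set up in this appendix, to a graph-state expectation value that vanishes for parity reasons. First I would use the decomposition \eqref{fusion-green-decomp} (the analogue of \cref{prop:type-I} for the circuit \eqref{fusion-green}): the success branch is a ZX diagram $S$ on the two input dual-rail qubits carrying the byproduct variable $\cvar k$, and by definition
\[
  \Pr(\cvar s = 1 \mid \Psi) \;=\; \sum_{\cvar k}\mathrm{tr}\!\big(S_{\cvar k}^{\dagger}\,\Psi\,S_{\cvar k}\big)\;=\;\mathrm{tr}\big((S^{\dagger}S)\,\Psi\big).
\]
Substituting the purification of $\Psi$ supplied in the statement --- in which the two fusion inputs are the free legs of two \enquote{unmeasured} green spiders of a graph state $\ket G$ --- turns the right-hand side into a closed ZX diagram.

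Next I would simplify this closed diagram. Every unitary occurring inside the fusion --- the phases $\theta_1,\theta_2,\theta_3$ of \eqref{fusion-green} and, after the green-failure reduction of \cref{prop:green-failure}, the single-qubit gates $U_1,U_2,U_3$ --- cancels against its adjoint in $S^{\dagger}S$; the point is that the green-failure condition (\cref{prop:green-failure}) forces the gates acting on the two input legs to preserve the $Z$ axis, i.e.\@ to be $Z$-rotations up to a Pauli $X$. Hence $S^{\dagger}S$ acts on the two input qubits as $\tfrac12(\mathbb 1 + \epsilon\,Z_1 Z_2)$ for some sign $\epsilon\in\{+1,-1\}$ (any $Z$-rotation commutes with $Z_1Z_2$, while a Pauli $X$ on an input flips the sign; the discarded byproducts $\cvar j,\cvar k$ only contribute such a sign), so that
\[
  \Pr(\cvar s = 1 \mid \Psi) \;=\; \tfrac12\big(1 + \epsilon\,\langle Z_1 Z_2\rangle_{\Psi}\big),
\]
and it remains to show $\langle Z_1 Z_2\rangle_{\Psi}=0$.

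This is the heart of the argument. Writing $\Psi$ through its purification, $\langle Z_1 Z_2\rangle_{\Psi} = \bra G Z_1 Z_2 \ket G$; diagrammatically the two $\pi$ phases slide onto the two graph-state spiders, and pushing them along the Hadamard edges using spider fusion together with the copy/Hopf rules of \cref{fig:ZX-axioms} forces the closed diagram to evaluate to $0$. Equivalently: $Z_1 Z_2$ is a nontrivial product of Pauli $Z$'s, hence never a stabilizer of a graph state, since every non-identity product of the stabilizer generators $K_v = X_v\prod_{u\sim v}Z_u$ has non-empty $X$-support; therefore $Z_1 Z_2\ket G \perp \ket G$. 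Combining with the previous step yields $\Pr(\cvar s = 1 \mid \Psi)=\tfrac12$, independently of the input graph state and of all the fusion parameters.

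The main obstacle is bookkeeping rather than conceptual: one must check that all the internal phases and local unitaries of the green fusion, and the discarded byproduct variables, genuinely drop out of $S^{\dagger}S$ (or at worst flip the overall sign $\epsilon$), so that the answer is uniform in the fusion parameters; and one must handle the graph-state identity $\bra G Z_1 Z_2 \ket G = 0$ uniformly, since the two input qubits may be adjacent in $G$ or share neighbours without affecting the vanishing. A convenient sanity check is that, because the full (non-selective) fusion is a complete measurement, $\Pr(\cvar s=1\mid\Psi)+\Pr(\cvar s=0\mid\Psi)=1$, so the computation above is equivalent to showing the success and failure outcomes are equiprobable on such inputs.
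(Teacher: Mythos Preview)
Your overall plan — compute the success POVM element $\sum_k S_k^{\dagger}S_k$, plug in the graph-state purification, and finish with a stabilizer argument — is exactly the right shape and is what the paper's diagrammatic rewrite is doing. However, the key algebraic step is wrong in a way that matters.

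You claim that the green-failure condition forces the single-qubit gates on the two input legs to preserve the $Z$ axis, so that $S^{\dagger}S=\tfrac12(\mathbb 1+\epsilon Z_1Z_2)$. It is the other way round. Raw Type~I fusion has \emph{red} ($Z$-basis) failure effects; to obtain \emph{green} failure one must pre-compose with Hadamard-like rotations on the inputs, which conjugate $Z$ to the $XY$-plane. Concretely, for the circuit~\eqref{fusion-green} one finds
\[
  \sum_k S_k^{\dagger}S_k \;=\; \tfrac12\big(\mathbb 1 + P_1\otimes P_2\big),\qquad P_i=\cos\theta_i\,X+\sin\theta_i\,Y,
\]
i.e.\ the relevant two-body operator lives in the $XY$-plane, not on the $Z$ axis. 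The engineered-input example just before the proposition is a direct witness: a green $\theta_1$-spider input (the state $\ket{+_{\theta_1}}$) makes the fusion succeed with certainty, which is impossible for a $Z_1Z_2$ POVM since $\bra{+_{\theta}}Z\ket{+_{\theta}}=0$ for every $\theta$.

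This also breaks your stabilizer step as written. Your argument that $Z_1Z_2$ is never a graph-state stabilizer (empty $X$-support) is correct but irrelevant; what you actually need is $\bra{G}P_1P_2\ket{G}=0$ for $P_i$ in the $XY$-plane. The only stabilizer with $X/Y$ support exactly $\{1,2\}$ is $K_1K_2$, so the expectation vanishes precisely when $K_1K_2$ is \emph{not} supported on $\{1,2\}$ alone, i.e.\ when the two fusion inputs do not have identical external neighbourhoods in the graph. That hypothesis is implicit in the paper's purification picture (the two inputs are attached via Hadamard edges to the rest of~$\ket{G}$), and is of course the natural situation in fusion-based protocols; but it is genuinely needed — two isolated $\ket{+}$ qubits already give success probability $\tfrac12(1+\cos\theta_1\cos\theta_2)\neq\tfrac12$. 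Once you correct the POVM to the $XY$-plane form and make this neighbourhood assumption explicit, your argument goes through and is essentially the algebraic transcription of the paper's ZX rewrite.
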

\begin{proof}
  \begin{align*}
    \Pr(\cvar{s}=1\,|\Psi)\quad
    &\tikzfig{fusion-prob-graph-state}\\
    &\tikzfig{RUS/fusion-prob-4}
  \end{align*}
\end{proof}

The action of fusion on unmeasured qubits in a graph state may be seen as a non-demolition measurement, 
obtained by composing the fusion operation with Z spiders as above.
It is useful to write this operation as a classical probability distribution over causal maps, as follows.

\section{Proof of flow preserving rewrites}\label{app:flow}

In this appendix, we prove the results about preservation of Pauli flow of \cref{sec:flow}.
We use an alternative notation to simplify the diagrams, and replace a Hadamard
between two spiders by a blue dashed edge, as illustrated below.
\[
    \tikzfig{ZX/elements/HadamardEdge}
\]
Both the blue edge notation and the Hadamard box can always be translated back
into spiders when necessary.
We refer to the blue edge as a Hadamard edge.

\begin{lemma}[Copy]
  \label{lem:copy}
  Copying preserves the existence of Pauli flow~\cite[Lemma D.6]{simmonsRelatingMeasurementPatterns2021}.
  Graphically, this corresponds to the copy rule of the ZX calculus~\cite[Lemmas 2.7 and 2.8 ]{mcelvanneyFlowpreservingZXcalculusRewrite2023}:
  \[
    \tikzfig{fusion/CopyFlowPreserving}
  \]
\end{lemma}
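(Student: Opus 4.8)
The plan is to reduce the statement to its combinatorial core -- whether the labelled open graph underlying the left-hand side admits Pauli flow if and only if the one underlying the right-hand side does -- and then transport a given Pauli flow across the rewrite explicitly, in each direction. The depicted rewrite is a valid equation of the ZX calculus, namely the copy rule of \cref{fig:ZX-axioms} specialised to a spider carrying a Pauli phase, so the two sides already have equal interpretations and nothing analytic is at stake. This flow-preservation claim is exactly~\cite[Lemma D.6]{simmonsRelatingMeasurementPatterns2021} and~\cite[Lemmas 2.7 and 2.8]{mcelvanneyFlowpreservingZXcalculusRewrite2023}, so the shortest route is to quote those; the rest is the self-contained argument in case one prefers to reprove it.

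First I would fix notation for the affected patch: write $v$ for the ``through'' vertex of the copy rule and $N(v)$ for its neighbourhood. On the side where $v$ is present it carries a Pauli phase and is therefore Pauli-measured, in the $\Xm$ or $\Zm$ basis; the copied side deletes $v$ and its incident edges and attaches a local Pauli effect -- equivalently, a fresh Pauli-measured leaf $u_w$ adjacent only to $w$ -- to each $w \in N(v)$, leaving everything outside the second neighbourhood of $v$ untouched. Given a Pauli flow $(p, \prec)$ on the side containing $v$, I would build a candidate $(p', \prec')$ on the copied side by (i) keeping $\prec'$ equal to $\prec$ on the old vertices, inserting each new leaf $u_w$ just below $w$, and setting $p'(u_w) = \set{w}$, which satisfies $u_w$'s own Pauli condition; and (ii) putting $p'(w) := \big(p(w) \setminus \set{v}\big) \,\triangle\, S_w$ for each old non-output $w$, with $S_w$ drawn from $N(v)$ together with the new leaves, chosen to repair the parities disturbed by removing $v$ and its edges. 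Conditions 4--9 of \cref{def:pauli-flow} at a vertex $w$ depend only on whether $w \in p'(w)$ and whether $w \in \Odd(p'(w))$, and since adjacency changed only across the edges at $v$ these two memberships move predictably; the ``future'' conditions 1--2 need re-examination only at the vertices of $N(v)$, whose membership in correction sets and in odd neighbourhoods is exactly what the bookkeeping $S_w$ is meant to control.

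The step I expect to be the main obstacle is condition 3 -- that every $\Ym$-measured $w$ satisfy $w \in p(u) \Leftrightarrow w \in \Odd(p(u))$ for all $u$ with $w \preceq u$ -- interacting with the fact that deleting $v$ toggles, for every $w' \in N(v)$ and every correction set that used to contain $v$, whether $w'$ lies in that set's odd neighbourhood. I would therefore organise the verification by first disposing of the new leaves' conditions 7--9 (immediate from $p'(u_w) = \set{w}$ and their placement) and only then propagating the parity repair through $N(v)$, checking along the way that no $\Ym$-measured neighbour's condition 3 is violated and that $S_w$ can always be chosen consistently; this parity case analysis is precisely the technical heart of~\cite{simmonsRelatingMeasurementPatterns2021}, and in practice I would invoke their lemma rather than redo it. The converse implication is symmetric: reintroduce $v$ just below all of $N(v)$, let $p(v)$ be $\set{v}$ or $\emptyset$ according to the Pauli label of $v$, undo the symmetric-difference corrections, and run the same condition-by-condition check in reverse. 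With this lemma in hand, the $X$- and $Y$-fusion rewrites treated in the remainder of \cref{app:flow} follow by exhibiting each as a composite of copies, identity insertions, and local complementations, all flow-preserving.
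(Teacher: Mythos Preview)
The paper does not prove this lemma at all: it is stated with its two citations and used as a black box in the subsequent proofs of \cref{prop:Xflow-preserving} and the $Y$-fusion proposition. Your proposal correctly recognises this and advocates the same route --- quote \cite[Lemma D.6]{simmonsRelatingMeasurementPatterns2021} and \cite[Lemmas 2.7 and 2.8]{mcelvanneyFlowpreservingZXcalculusRewrite2023} --- so at the level of what the paper actually does, you match it exactly.

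Your additional self-contained sketch goes well beyond what the paper provides. The overall shape --- transport $(p,\prec)$ across the rewrite by keeping it fixed away from $v$, repairing parities at $N(v)$ via symmetric differences, and handling the new leaves with $p'(u_w)=\{w\}$ --- is the right strategy and is indeed how the cited references proceed. One point of caution: modelling the local Pauli phase on each neighbour as a ``fresh Pauli-measured leaf $u_w$'' is a legitimate encoding in the labelled-open-graph formalism, but it is not what the copy rule literally produces (it produces a phase, not a new vertex), so if you were writing this out in full you would need to justify that reframing or instead absorb the phase directly into the neighbour's measurement label, which is the more standard bookkeeping in \cite{simmonsRelatingMeasurementPatterns2021}. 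Since the paper simply cites the result, this is moot here.
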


\begin{lemma}[Local Complementation]
  \label{lem:lc}
  Local complementation about a vertex $u$ preserves the existence of Pauli flow~\cite[Lemma D.12]{simmonsRelatingMeasurementPatterns2021}.
  In the ZX calculus, this rule is also called local complementation, and is given as follows~\cite[Lemma 2.10]{mcelvanneyFlowpreservingZXcalculusRewrite2023}:
  \[
    \tikzfig{fusion/LCFlowPreserving}
  \]
\end{lemma}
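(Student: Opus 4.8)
Since the lemma statement already cites both of its ingredients inline, I expect the proof to be at most a one-line pointer, and the plan is to assemble known results rather than re-derive anything. First I would recall the graph-state meaning of local complementation about $u$: it is conjugation by a local Clifford that applies a $\sqrt{X}$-type rotation on $u$ and $\sqrt{Z}$-type rotations on every neighbour of $u$. Being a product of single-qubit Cliffords, this leaves the realized linear map and the input/output partition untouched and only relabels the measurement data on $u$ and on $N(u)$ — exactly the plane swaps and angle shifts recorded by the Clifford-parameter equations displayed in \cref{sec:flow}. So applied to a labelled open graph with Pauli flow it yields another labelled open graph computing the same map.

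Second, for preservation of the \emph{existence} of Pauli flow I would invoke \cite[Lemma D.12]{simmonsRelatingMeasurementPatterns2021}: from a Pauli flow $(p, \prec)$ on $G$ one builds a Pauli flow on the locally complemented graph by keeping the order $\prec$ and replacing each correction set $p(v)$ by its symmetric difference with a subset of $\set{u} \cup N(u)$ selected according to the measurement plane or label of $v$; checking the conditions of \cref{def:pauli-flow} for these modified sets is the substantive content of that lemma, and is where the case analysis lives.

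Third, on the diagrammatic side I would unfold the blue dashed edges in \tikzfig{fusion/LCFlowPreserving} into Hadamard boxes between green spiders, at which point the pictured rewrite is literally the local-complementation rule of the scalable ZX calculus, i.e.\ \cite[Lemma 2.10]{mcelvanneyFlowpreservingZXcalculusRewrite2023}; together with the previous paragraph this gives the lemma as stated. The only genuine obstacle, were one to demand a self-contained argument, is tracking how $\lambda$ and $\alpha$ transform on $u$ and $N(u)$ alongside the symmetric-difference recipe for the new correction sets — but since both cited results are available, all that remains is the routine translation between the graph-theoretic and diagrammatic presentations.
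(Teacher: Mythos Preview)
Your proposal is correct and matches the paper's approach: the paper gives no proof at all for this lemma, treating it purely as a citation of the two referenced results, exactly as you anticipated. Your elaboration on the Clifford-conjugation interpretation and the symmetric-difference recipe for updating correction sets is more detail than the paper supplies, but it is faithful to the content of the cited lemmas and to how the paper uses them downstream.
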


\begin{lemma}[Pivot]
  \label{lem:pivot}
  Pivoting about an edge $(u, v)$ preserves the existence of Pauli flow~\cite[Lemma D.21]{simmonsRelatingMeasurementPatterns2021}.
  In the ZX calculus, this rule is also called pivoting, and is given as the following rewrite rule~\cite[Lemma 2.11]{mcelvanneyFlowpreservingZXcalculusRewrite2023}:
  \[
    \tikzfig{fusion/PivotFlowPreserving}
  \]
\end{lemma}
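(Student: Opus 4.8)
The plan is to reduce the statement to the local complementation rule of \cref{lem:lc}. First I would recall the standard graph identity expressing a pivot about an edge $(u,v)$ as the threefold composition of local complementations about $u$, then $v$, then $u$ again; equivalently, pivoting about $(u,v)$ complements the edges between the sets $N(u)\setminus N(v)$, $N(v)\setminus N(u)$ and $N(u)\cap N(v)$ and exchanges the roles of $u$ and $v$. Since local complementation about any vertex preserves the existence of Pauli flow by \cref{lem:lc}, and this property is stable under composition of flow-preserving open-graph rewrites, pivoting inherits it. As a shortcut, one may instead invoke~\cite[Lemma D.21]{simmonsRelatingMeasurementPatterns2021} directly, which builds the pivoted Pauli flow function explicitly and verifies the conditions of \cref{def:pauli-flow}, with the ZX formulation given in~\cite[Lemma 2.11]{mcelvanneyFlowpreservingZXcalculusRewrite2023}.

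The delicate point in the composition argument is the label bookkeeping: local complementation about $u$ rotates the measurement label of $u$ and of each of its neighbours by a local Clifford, so I would need to check that the two intermediate open graphs produced after the first and the second local complementation still carry measurement labels in the admissible set and remain valid inputs for \cref{lem:lc}. Fixing the order of the three local complementations and propagating the induced label rotations through the composition should then recover the net label change displayed in the statement, leaving everything outside $N(u)\cup N(v)$ untouched.

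The last task is to confirm that the rewrite pictured in the statement is the diagrammatic incarnation of this graph pivot. That is a routine computation in the ZX calculus of \cref{sec:ZX}: I would expand the Hadamard edges, use spider fusion together with the Hopf and copy rules (the latter being \cref{lem:copy}) to toggle the edges among $N(u)\setminus N(v)$, $N(v)\setminus N(u)$ and $N(u)\cap N(v)$, and eliminate the two pivoted Pauli spiders. The main obstacle is precisely the bookkeeping of the second paragraph — a careless ordering of the three local complementations produces intermediate graphs that appear to fall outside the allowed label set — so the argument only becomes rigorous once the order and the accompanying local-Clifford rotations are pinned down; after that, the claim follows at once from \cref{lem:lc}.
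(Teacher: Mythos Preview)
The paper does not supply a proof of this lemma: it merely states the result and defers to the cited references~\cite[Lemma D.21]{simmonsRelatingMeasurementPatterns2021} and~\cite[Lemma 2.11]{mcelvanneyFlowpreservingZXcalculusRewrite2023}. Your second paragraph already anticipates this shortcut, so in that sense your proposal subsumes what the paper does.

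Your longer route via the decomposition $\text{pivot}(u,v) = \text{LC}(u)\circ\text{LC}(v)\circ\text{LC}(u)$ is also correct and is, in fact, precisely how the cited reference establishes the result. One small reassurance on the point you flagged as delicate: because Pauli flow admits the full label set $\{\XYm,\XZm,\YZm,\Xm,\Ym,\Zm\}$ and each local complementation acts on labels by a local Clifford permutation within that set, the intermediate open graphs never leave the admissible labels; the bookkeeping is therefore routine rather than a genuine obstacle. With that noted, your argument goes through and provides strictly more detail than the paper itself.
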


\begin{lemma}[State Change]
  \label{lem:state-change}
  \[
    \normalfont
    \tikzfig{ZX/state-change}
  \]
\end{lemma}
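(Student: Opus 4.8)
The plan is to read the \textbf{State Change} equation as a local identity of the ZX calculus --- an equality between two small diagrams that differ only by a local Clifford (or Pauli) being moved across a spider and absorbed into the measurement effect attached to it --- and to derive it purely from the axioms in \cref{fig:ZX-axioms} together with the rewrites already set up in this appendix. First I would expand every Hadamard edge into an explicit Hadamard box between green spiders and unfold the measurement effects $\bra{\pm_{\lambda,\alpha}}$ into their spider-plus-Hadamard form, so that both sides become ordinary ZX diagrams over spiders and Hadamard boxes. This turns the statement into one that the spider-fusion, colour-change and $\pi$-commutation axioms can act on directly, and removes the bookkeeping of the blue-edge notation introduced at the start of this appendix.

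The core of the derivation is then a short chain of rewrites: use colour change to align the colour of the node being modified with that of its neighbour (or of its effect), use spider fusion to merge the Clifford/Pauli phase into the adjacent spider, and use $\pi$-commutation to push the remaining $\pi$ (or $\tfrac{\pi}{2}$) part through; when a copyable state is involved, the copy rule (\cref{lem:copy}) lets that state be duplicated onto, or cancelled against, the incident wires. If the statement is phrased uniformly over the three measurement planes $\XYm,\XZm,\YZm$ or over the Clifford parameter, I would run this argument once per plane, using local complementation (\cref{lem:lc}) and pivoting (\cref{lem:pivot}) to interconvert the $\XZm$ and $\YZm$ normal forms so that a single short calculation settles each case. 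Re-folding Hadamard boxes back into Hadamard edges at the end recovers the stated form.

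I expect the main obstacle to be bookkeeping rather than conceptual. If the lemma is claimed as an exact equality (with the $\star$ scalars included), I have to check that every factor of $\tfrac{1}{\sqrt{2}}$ produced by a colour change or a copy really cancels on the nose; if it is only an equality up to a non-zero scalar this concern disappears and the proof can be written with $\approx$. The other delicate point is choosing the intermediate normal form so that one argument covers every case the lemma asserts simultaneously --- getting this right is what keeps the derivation to a single short line of ZX rewrites instead of an unwieldy case split, and it is where I would spend most of the effort.
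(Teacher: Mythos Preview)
The paper gives no proof for \cref{lem:state-change}; it is stated as a bare ZX identity and then invoked in the derivation of the $Y$-fusion flow-preservation proposition. So there is nothing to compare your proposal against on the paper's side.

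Your plan --- unfold the Hadamard edges and the measurement effects, then push the local Clifford across the spider using \textsc{Spider} fusion, \textsc{Colour} change and the $\pi$-commutation rule, with a case split over the measurement plane --- is exactly how such identities are verified in the ZX calculus and would work here. One small caveat: you propose invoking \cref{lem:copy}, \cref{lem:lc} and \cref{lem:pivot} as intermediate steps, but in this appendix those lemmas are primarily statements about \emph{preservation of Pauli flow}, with the corresponding ZX equalities cited from~\cite{mcelvanneyFlowpreservingZXcalculusRewrite2023}. For a purely diagrammatic identity like State Change you do not need the flow content at all; the axioms in \cref{fig:ZX-axioms} (fusion, colour, $\pi$, copy, Euler) are already sufficient, and keeping the derivation to those avoids any appearance of circularity when the lemma is later used inside the flow-preservation proofs. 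Your concern about scalars is well placed if the equality is exact, but since the applications in this appendix track Pauli errors explicitly and the surrounding rewrites are stated up to those errors, matching the $\star$ factors precisely is straightforward and not where the difficulty lies.
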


\xFusionFlowPreserving*
\begin{proof}
  \[
    \tikzfig{fusion/XFusionFlowPreservingProof}
  \]
  Since each of the rewrites preserves the existence of Pauli flow, the additional errors appearing above can always be corrected, but we have given them here for completeness.
\end{proof}

\yFusionFlowPreserving*
\begin{proof}
  \[
    \tikzfig{fusion/YFusionFlowPreservingProof}
  \]
\end{proof}

\section{Proofs of repeat-until-success}\label{app:RUS}

We prove the results of \cref{subsec:RUS}. To this end, we will need the following lemmas.

\begin{lemma}\label{prop:fusion-stochastic}
  Let $F_\theta$ be the optical circuit defined above and $F^\alpha_\theta$ the same optical circuit followed by a 
  measurement in the $\XZm$ plane of angle $\alpha$. Then we have:
  \[
    \tikzfig{fusion-stochastic}
  \]
  \[
    \tikzfig{fusion-stochastic2}
  \]
\end{lemma}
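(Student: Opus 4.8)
The plan is to deduce both identities from the Kraus decomposition of Type~I fusion in \cref{prop:type-I}, absorbing the three green-failure phases by ZX rewriting and, for the second identity, folding in the single-qubit $\XZm$ measurement. Throughout, the only nontrivial content is checking that the two resulting branches are \emph{causal} and that the scalars combine to give genuine probabilities.

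First I would recall that by \cref{prop:type-I} together with \cref{fusion-green-decomp} the CQ interpretation of $F_\theta$ is a sum over the photon-number outcomes $(\cvar a,\cvar b)$ of ZX diagrams $D^{\cvar a,\cvar b}$ dressed by the phase shifts $\theta_1,\theta_2,\theta_3$, and that coarse-graining by $\cvar s=\cvar a\oplus\cvar b$ and $\cvar k=\cvar s\cvar b+\neg\cvar s(1-\tfrac{\cvar a+\cvar b}{2})$ collapses this sum to exactly two branches: a \emph{success} branch ($\cvar s=1$), a phase gadget between the two input qubits and the single output qubit, and a \emph{failure} branch ($\cvar s=0$), the pair of disjoint green spiders carrying $\theta_1,\theta_2$ from \cref{prop:green-failure}. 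Applying the coarse-graining rewrite \cref{eq:coarse-graining-rewrite} cancels the matching $\star$'s and records the branch probabilities on the remaining scalars.

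Next I would show that each branch is causal, so that the two-term sum is literally a classical probability distribution over causal maps. For the success branch this is a short ZX computation using spider fusion and the Hopf law — the phase-gadget diagram is an isometry up to the scalar encoding its probability — exactly the style of argument used in the teleportation derivation of \cref{subsec:channel-examples} and in the proof of \cref{thm:flow-pattern}; post-composing with the discard then returns the identity on the inputs. The failure branch is manifestly causal, since each green spider composed with a discard is, after accounting for $\interp{\star}_{CQ}=\tfrac12$, trace-preserving. By \cref{prop:green2qhalf}, when the two inputs are unmeasured qubits of a graph state both branches carry weight exactly $\tfrac12$, which gives the first identity.

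For the second identity I would compose with the single-qubit $\XZm$ measurement of angle $\alpha$, itself the mixture of the projectors $\ket{\pm_{\XZm,\alpha}}\!\bra{\pm_{\XZm,\alpha}}$. On the success branch the measurement merges with the phase gadget, its random Pauli byproduct $\cvar k\pi$ commuting out to the inputs as in \cref{def-paulifusion} and \cref{thm-characterization}, so this branch becomes an $\XZm$-fusion with a heralded input error; on the failure branch the measurement simply discards the measured qubit. Re-coarse-graining the new outcome against $\cvar s$ and $\cvar k$ then yields the stated distribution over causal maps. The step I expect to be the main obstacle is the scalar and classical-variable bookkeeping: one must verify that the $\star$'s contributed by the dual-rail decomposition, by \cref{prop:type-I}, by the coarse-graining steps, and by the $\XZm$ projector combine to give exactly $\tfrac12$ on the nose so that the total map is genuinely trace-preserving, and that the failure branch — whose output modes lie outside the qubit subspace — is still correctly handled by the finite-particle CQ interpretation of \cref{subsec:cpmaps}.
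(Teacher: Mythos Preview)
Your approach is correct and follows the same underlying idea as the paper, but the paper's own proof is a one-liner: it simply cites \cref{prop:green-failure} (the ZX form of any green-failure fusion) together with $\interp{\star}_{CQ}=\tfrac12$, and both identities drop out immediately from the two-branch decomposition already displayed in \cref{fusion-green-decomp}. Your route through \cref{prop:type-I}, explicit causality checks, and \cref{prop:green2qhalf} is sound but more elaborate than needed---in particular, invoking \cref{prop:green2qhalf} is unnecessary since the $\tfrac12$ weights come directly from the $\star$ scalars rather than from any assumption on the input state, and the causality of each branch and the scalar bookkeeping you flag as the ``main obstacle'' are already absorbed into the statement of \cref{prop:green-failure}.
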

\begin{proof}
  This follows from \cref{prop:green-failure} and $\interp{\star}_{CP} = \frac{1}{2}$.
\end{proof}

\begin{lemma}
  \[
    \tikzfig{lemma-measure-emitter}
  \]
\end{lemma}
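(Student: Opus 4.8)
The plan is to reduce the stated stream identity, which relates a measured emitted photon to a corresponding operation on the atom, to a single-time-step equality in $\mathbf{ZXLO}$ and then verify that equality by the rewrites of \cref{sec:bgd-dual-rail} together with ZX spider fusion. First I would unfold the left-hand side using the recursive definition of the emitter in \cref{def:emitter}: at the time step where the measurement acts, the emitted photon is a dual-rail state attached via a green spider to the atom, with the programmable white box on the atom. Since the measurement command touches only this one photon while the atom memory is carried forward unchanged, it suffices — exactly as in the proof of \cref{lemma-correct-measure} — to check the equality ``now'', i.e.\@ for the component map at that step, and then observe that both sides agree ``later'' by the recursion; if a byproduct has to be pushed across time steps, \cref{rewrite-sliding} supplies it.

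For the single-step identity I would first use the dual-rail encoding rewrites of \cref{sec:bgd-dual-rail} to turn the photon detectors on the emitted mode into a single-qubit ZX measurement effect in the chosen plane $\lambda \in \{\XYm,\XZm,\YZm\}$; this pushes the triangle through and leaves a bare effect $\bra{\pm_{\lambda,\alpha}}$ on the qubit leg of the green spider linking the photon to the atom. Then spider fusion absorbs this effect into the spider: in the $\Zm$-type component it copies the outcome onto the atom leg, while in the $\XYm$/$\XZm$/$\YZm$ components it contributes a phase (and possibly a Hadamard) which I would commute past the white box using the $\pi$-commutation and colour rules, recording the induced Pauli byproduct as the classical outcome variable. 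The star scalars removed or created along the way are tracked exactly as in \cref{prop:fusion-stochastic}, using $\interp{\star}_{CP}=\tfrac12$, so that the probability of each outcome comes out correctly.

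The main obstacle I anticipate is bookkeeping rather than anything conceptual. When this lemma is invoked inside the failure branch of a green fusion, the measured photon need not lie in the dual-rail qubit subspace, so the naive triangle push is unavailable and one must instead argue through the green-failure normal form of \cref{prop:green-failure}; and one must keep every scalar exact so that summing the branches reproduces the $\tfrac12$-type success probabilities relied upon later in \cref{thm:RUS}. Once the single-step equality is established with the correct scalar and byproduct annotation, lifting it to the stream level is immediate, so the bulk of the work is the careful ZX/dual-rail calculation at one time step.
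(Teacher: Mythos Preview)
Your reduction to a single time step, by analogy with \cref{lemma-correct-measure}, does not go through here. That lemma worked because both the correction and measurement modules are \emph{memoryless} streams, so intensional equality follows from equality at every $t$ independently. The emitter of \cref{def:emitter} is not memoryless: the atom is carried as memory, and each measurement outcome $a_t$ leaves a Pauli byproduct on that memory which interacts with subsequent steps. Consequently the statement is about the $n$-step unrolling, not a pointwise identity, and must be proved by induction on $n$.

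The paper's proof does exactly this: it applies \cref{def:unrolling} to peel off one step and invokes the inductive hypothesis. The step you are missing entirely is the scalar and outcome bookkeeping. After $n$ steps there are $2^n$ measurement outcomes for the list $a_t$, but their only effect on the atom is a single accumulated bit $c_n$; the paper collapses these via the coarse-graining rule \cref{eq:coarse-graining-rewrite}, which removes the extra $\star$'s and outcome variables in one stroke. Your proposal never mentions coarse-graining, and your plan to ``track every scalar exactly as in \cref{prop:fusion-stochastic}'' will not produce the right answer without it: you would be left with $n$ independent outcome variables and $n$ stars rather than one of each. Finally, your worry about the photon leaving the dual-rail subspace in green-failure branches is a red herring for this lemma: the emitter outputs are triangle-encoded, hence always in the qubit subspace, and no fusion is involved here.
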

\begin{proof}
  We prove this by induction. The statement for $n=0$ is easy to show. Then using \cref{def:unrolling}, we have
  \[
    \scalebox{0.8}{\tikzfig{lemma-measure-proof-2}}
  \]
  Finally, the last equality holds after coarse-graining the measurement outcomes $a_t$. Indeed there are $2^n$ possible measurement outcomes for the list $a_t$, 
  but they together induce one uniformly random bitflip $c_n \pi$ after $n$ time-steps. The statement then follows from \cref{eq:coarse-graining-rewrite}.
\end{proof}
\RUS*
\begin{proof}
  We prove this inductively.
  The $n = 1$ case follows from \cref{prop:fusion-stochastic}.
  Then, given the statement for $n$, we show it for $n + 1$:
  \allowdisplaybreaks
  \begin{align*}
    &\scalebox{.8}{\tikzfig{repeat-prooff-1}}\\
    &\scalebox{.8}{\tikzfig{repeat-prooff-2}}\\
    &\scalebox{.8}{\tikzfig{repeat-prooff-3}}\\
    &\scalebox{.8}{\tikzfig{repeat-prooff-4}}\\
    &\scalebox{.8}{\tikzfig{repeat-prooff-5}}\\
    &\scalebox{.8}{\tikzfig{repeat-prooff-5+}}\\
    &\scalebox{.8}{\tikzfig{repeat-prooff-6+}}
  \end{align*}
\end{proof}

\YRUS*
\begin{proof}
  Follows from \cref{thm:RUS} and:
  \[
    \scalebox{.8}{\tikzfig{repeat-Y-proof}}
  \]
  where $x_T = k_T \oplus j_T$.
\end{proof}

\XBoosted*
\begin{proof}
  \begin{align*}
    &\scalebox{.8}{\tikzfig{boosted-X-proof-0}}\\
    &\scalebox{.8}{\tikzfig{boosted-X-proof}}
  \end{align*}
\end{proof}

\end{document}